\renewcommand\@biblabel[1]{\textbullet}
\definecolor{amethyst}{rgb}{0.6, 0.4, 0.8}
\newtheorem{rem}{Remark}
\newtheorem{deff}{Definition}
\newtheorem{ex}{Example}
\newtheorem{theorem}{Theorem}
\newtheorem{cor}{Corollary}
\definecolor{amethyst}{rgb}{0.6, 0.4, 0.8}
\definecolor{codegreen}{rgb}{0,0.6,0}
\definecolor{codegray}{rgb}{0.5,0.5,0.5}
\definecolor{codepurple}{rgb}{0.58,0,0.82}
\definecolor{backcolour}{rgb}{0.9,0.9,0.9}
\pgfplotsset{compat=1.17}
\definecolor{Rcolor}{RGB}{150,160,190}
\newcommand{\Rx}{\fontsize{10pt}{12pt}\selectfont
	\raisebox{.3em}{\hspace{1.2em}%
		\llap{\resizebox{1.09em}{.5em}{\color{black}$\bigcirc$}}%
		\llap{\resizebox{1.199em}{.55em}{\color{darkgray}$\bigcirc$}}%
		\llap{\resizebox{1.19em}{.52em}{\color{gray!50}$\bigcirc$}}%
		\llap{\resizebox{1.1em}{.5em}{\color{gray}$\bigcirc$}}%
		\llap{\resizebox{1.25em}{.55em}{\color{gray}$\bigcirc$}}%
	}%
	\hspace{-.85em}%
	\textbf{%
		%\resizebox{.55em}{1.5ex}{\textcolor{black!80}{\textsf{R}}}%
		\textcolor{black}{\textsf{R}}%
		\hspace{-.025em}\raisebox{.01em}{\llap{\textcolor{Rcolor}{\textsf{R}}}}%
}}%
\newbox\rbox
\savebox\rbox{\scalebox{0.1}{\Rx}}
\def\R{\scalebox{\f@size}{\usebox\rbox}\xspace}
\newcommand{\vecun}[1]{\,{\boldsymbol{1}_{|{#1}\mathrm{dsc}({#1})|}}}
\begin{document}
\sloppy
	\title{Tree-structured Markov random fields with Poisson marginal distributions}
	\author{Benjamin Côté, Hélène Cossette, Etienne Marceau \and 
		\textit{École d'actuariat, Université Laval, Québec, Canada}
	}
        \date{August 24, 2024}
	\maketitle

\begin{abstract}
		A new family of tree-structured Markov random fields for a vector of discrete counting random variables is introduced. According to the characteristics of the family, the marginal distributions of the Markov random fields are all Poisson with the same mean, and are untied from the strength or structure of their built-in dependence. This key feature is uncommon for Markov random fields and most convenient for applications purposes. The specific properties of this new family confer a straightforward sampling procedure and analytic expressions for the joint probability mass function and the joint probability generating function of the vector of counting random variables, thus granting computational methods that scale well to vectors of high dimension. We study the distribution of the sum of random variables constituting a Markov random field from the proposed family, analyze a random variable’s individual contribution to that sum through expected allocations, and establish stochastic orderings to assess a wide understanding of their behavior. 

	\end{abstract} 

\textbf{Keywords} :
 Multivariate Poisson distributions, undirected graphical models, dependence trees,  supermodular order, convex order, binomial thinning operator.

	\section{Introduction}
	
	\label{sect:Introduction}

 Having graphs underlie multivariate distributions performs a translation of their vast range of topologies to a rich variety of dependence schemes; this is the premise of probabilistic graphical models.  
	The graph, through its edges and its vertices, serves as a representation of the dependence relations knitting the random variables to one another. Among others,	\cite{koller2009probabilistic} and \cite{maathuis2018handbook} dive deeply into probabilistic graphical models, with much emphasis on Bayesian networks and Markov random fields (MRFs), also called Markov networks or undirected graphical models. 
	In \cite{besag1974}, a seminal paper on MRFs, the author defines various families through their conditional distributions: the distribution of a random variable is specified given the value taken by its neighbours according to the underlying graph.
	A family that has drawn particular attention to represent a vector of count random variables is Besag's auto-Poisson MRFs, also called Poisson graphical models, where each vertex's conditional distribution is Poisson, the values taken by the neighbouring vertices' random variables influencing its mean parameter. The model found applications, for instance, for vegetation census \cite{augustin2006using} and cancer mortality analysis \cite{ferrandiz1995spatial}. A praised feature of MRFs is that their underlying graph gives an at-glance understanding of the conditional independence relations between its components. They are widely used, notably in statistical modeling (see \cite{cressie2015statistics} and references therein), 
	but their popularity does not match the one of Bayesian networks, with their vast applications in neural networks and other areas.

	It is quite intuitive to define MRFs through their conditional distributions given their conditional independence property; however, doing so prevents marginals to be set, or even tractable. Changing dependence parameters can have large impacts on marginal distributions; this may be seen as a first drawback of MRFs for dependence modeling. For instance, Besag's auto-Poisson MRFs' marginal distributions are generally unknown. One must employ message passing algorithms to derive them, as discussed in \cite{wainwright2008graphical} and Chapter~15 of \cite{theodoridis2015machine}. Moreover, as investigated in \cite{kaiser2000construction}, specifying vertex-wise conditional distributions does not guarantee that a joint distribution can actually rise from it. This is a second drawback of MRFs for dependence modeling, as stated in \cite{cressie2015statistics}. Besag's auto-Poisson is a famous example of that, as it is cursed to solely incorporate negative dependence:
	while positive dependence seems legitimate according to the conditional distributions, the resulting joint distribution would not be valid. To circumvent this, \cite{kaiser1997modeling} and \cite{yang2013poisson} had to resort to variants of the Poisson distribution.

	This paper introduces a family of MRFs with fixed marginal distributions, meaning that varying the dependence within the MRF will not affect the individual distributions of its components. MRFs from the proposed family are therefore not plagued by the first drawback stated above, hence performing an inversion of the paradigm for MRFs as they were traditionally defined. Such an inversion is briefly discussed in Chapter 3.4 of \cite{wainwright2008graphical}; it is also the idea underlying the design of the MRF described in \cite{pickard1980unilateral}.
	Arising from the fixedness of the marginal distributions is an intelligible parameterization: each parameter affects either marginals or dependence.
	We choose Poisson marginals, thus making the proposed family of MRFs stand as a pendant to Besag's auto-Poisson.  Following \cite{inouye2017review}, the proposed MRFs file under marginal Poisson generalizations while Besag's classify as conditional Poisson generalizations. Incidently, the family of vectors of Poisson random variables presented in \cite{kizildemir2017supermodular} also has a graph-encrypted dependence structure, and sorts with marginal Poisson generalizations. However, it is  not a MRF, for it does not satisfy their characterizing Markov property.
	In this paper, we limit ourselves to trees for the underlying structures to MRFs, but pave the way for future works to extend it to graphs in general.

	The stochastic representation describing the proposed family of MRFs with joint distribution having fixed Poisson marginals is given in Theorem~\ref{th:StoDynamics}. It employs the binomial thinning operator, notably used in the Poisson AR(1) process introduced in \cite{mckenzie1985some}, and mentioned in \cite{davis2021count}. One must not confuse the proposed family of MRFs with Conditional Autoregressive (CAR) models by terminological means, as they rather refer to Besag's auto-gaussian models.  
	The stochastic representation has the advantage of providing explicit expressions of the joint probability mass function and of the joint probability generating function of the proposed MRFs. This feature alone shows that the uncertainty towards the existence of a joint distribution has been dispelled, thus lifting the second drawback stated above. Deriving the joint probability generating function also offers comprehension of the distribution of the sum of the MRFs' components and, notably allows to derive the ordinary generating function of expected allocation (OGFEA), introduced in \cite{blier2022generating}. The OGFEA provides an understanding of the part each component plays in the distribution of their sum. Benefits inherited from the stochastic representation further include the positive dependence tying the random variables, more well-suited to a variety of applications than mandatory negative dependence.  Furthermore, the stochastic representation offers a simple and convenient sampling procedure, allowing to easily apply Monte-Carlo evaluation methods; meanwhile, exact computations remain possible and efficient thanks to the explicit joint probability mass function and joint probability generating function. This, along with the rest, propels the applicability of this new family of MRFs for dependence modeling of count random variables in high-dimension settings.

	The paper is constructed as follows. We present the proposed family of multivariate Poisson distributions defined from tree-structured MRFs in Section~\ref{sect:family} through its general stochastic representation. Some of its elemental properties are then examined, such as the joint probability mass function and the joint probability generating function in Section~\ref{subsect:PMF-PGF}, and the covariance between components in Section~\ref{subsect:Covariance}. Stochastic orderings collating the strength of its dependence schemes are studied in Section~\ref{subsect:StochasticOrderingN}. In Section~\ref{sect:Simulation}, we address how the stochastic representation allows efficient sampling methods. Then, in Section~\ref{sect:Sum}, we look at the distribution of the sum of the components constituting the MRF; discussions on stochastic ordering and expected allocation in this context are provided furthermore. Finally, a numerical example, offering a synthesis of the findings, is presented in Section~\ref{sect:NumericalExamples}.

	\section{The proposed family of MRFs}
	\label{sect:family}
	In this section, we propose a family of MRFs with Poisson marginal distributions and dependence structure encrypted on a tree. 
	One of the key contributions in this paper rests on Theorem \ref{th:StoDynamics}, where we present the stochastic dynamics characterizing this new family, using the binomial thinning operator. 
	Before diving into this element around which revolves the subsequent sections, we provide notations and definitions regarding graph theory.
	
	\subsection{Graph notations and definitions}
	
	A graph $\mathcal{G}$ is a set of $d$ vertices $\mathcal{V} = \{1,\ldots,d\}$ coupled to a set of edges $\mathcal{E}\subseteq\mathcal{V}\times\mathcal{V}$, edges linking vertices together.  All graphs considered in this paper are simple, meaning for $u\in\mathcal{V}$, $(u,u)\not\in\mathcal{E}$, and undirected, meaning that, for $u,v \in \mathcal{V}$, if $(u,v)$ is an element of $\mathcal{E}$, then $(v,u)$ is the same element of $\mathcal{E}$. A path from vertex $u$ to vertex $v$, denoted by $\mathrm{path}(u,v)$, is a sequence of successive edges $e \in \mathcal{E}$, the first one starting at vertex $u$ and the last one ending at vertex $v$. In a path, the same edge cannot appear more than once. A vertex is on a path if it is a component of an edge in that path. A tree, denoted by $\mathcal{T}$, is a connected graph in which no path from a vertex to itself exists. A tree thus contains $d-1$ edges. Only trees are considered in this paper.

	The rooted version of a tree, denoted by $\mathcal{T}_r$, is the tree itself, but with one specific vertex $r \in \mathcal{V}$ labelled as the root. For an undirected tree, one can choose any vertex to be the root. The tree being rooted allows to define (below) the descendants, the children, and the parent of a vertex, as they depend on the chosen root. See Section 3.3 of \cite{saoub2021graph} for details on rooted trees.
	The descendants of vertex $v$, denoted by $\mathrm{dsc}(v),\; v \in \mathcal{V}$, is the set of vertices whose path to the root goes through $v$, that is, for a rooted tree $\mathcal{T}_r$, we have
	$\mathrm{dsc}(v) = \left\{u \in \mathcal{V}:  \exists w\in \mathcal{V}, (v,w) \in \mathrm{path}(u,r) \right\}$, $v\in\mathcal{V}$.
	The children of vertex $v$, denoted by $\mathrm{ch}(v),\; v\in \mathcal{V}$, is the set of vertices descendants of $v$ that are also connected to it by an edge, that is, $\mathrm{ch}(v) = \left\{ j \in \mathrm{dsc}(v): (v,j) \in \mathcal{E}\right\}$, $v\in\mathcal{V}$.
	A vertex is called a leaf if it has no children.
	The parent of vertex $v$, denoted by $\mathrm{pa}(v),\; v\in\mathcal{V}\backslash\{r\}$, is the sole vertex connected by an edge to $v$ that is not its children. The root $r$ has no parent.

	Henceforth, we often refer to three typical shapes of trees: the star, 
	the series tree, 
	and the $\chi$-nary tree. 
	The star is a tree in which one vertex is connected to all the others. The series tree, on the other hand, is a tree where each vertex participates in at most two edges. A $\chi$-nary tree is a tree for which there is a rooting such that all vertices have $\chi$ children, except the leafs; its radius is the length of the path between that root and a leaf. Fig. \ref{fig:TypicalTrees} gives an illustration of these three typical tree shapes.

	\begin{figure}[H]
		\centering
		\begin{subfigure}{0.20\textwidth}
			\centering
			\begin{tikzpicture}[xscale = 0.6, yscale = 0.6, thick]
				\foreach \a in {2,3,4,5,6,7,8}
				{
					\draw [xshift = 1cm] (0,0) -- ({(\a-2)*40}:1);
					\filldraw  [fill = white, draw = Maroon, xshift = 1cm] ({(\a-2)*40}:1) circle (0.3) node {\tiny \a};
				}
				\draw [xshift = 1cm] (0,0) -- ({(9-2)*40}:1);
				\filldraw  [fill = white, draw = White, xshift = 1cm] ({(9-2)*40}:1) circle (0.3) node {\tiny ...};
				\draw [xshift = 1cm] (0,0) -- ({(10-2)*40}:1);
				\filldraw  [fill = white, draw = Maroon, xshift = 1cm] ({(10-2)*40}:1) circle (0.3) node {\tiny $d$};
				\filldraw  [ fill = white, draw = Maroon, xshift = 1cm] (0,0) circle (0.3) node {\tiny 1} ;
			\end{tikzpicture}
			\caption{}
		\end{subfigure}
		\begin{subfigure}{0.25\textwidth}
			\centering
			\begin{tikzpicture}[xscale = 0.5, yscale = 0.5, thick]
				\tikzstyle{vertex}=[circle,draw = Maroon, fill=White,minimum size=10pt,inner sep=0pt]
				
				\foreach \name/\x/\y in {1/0/2, 2/1/2, 3/2/2, 4/1/1, 5/2/1, 6/3/1, d/3/0}
				\node[vertex] (G-\name) at (\x,\y) {\tiny $\name$};
				
				\node[circle,draw = White, fill=White,minimum size=10pt,inner sep=0pt] (G-etc) at (2,0) {\tiny ...};
				\foreach \from/\to in {1/2, 2/3, 3/4, 4/5, 5/6, 6/etc, etc/d}
				\draw (G-\from) -- (G-\to);
			\end{tikzpicture}
			\caption{}
		\end{subfigure}
		\begin{subfigure}{0.31\textwidth}
			\centering
			\begin{tikzpicture}[xscale = 1.20, yscale = 0.5, thick]
				\tikzstyle{vertex}=[shape = ellipse,draw = Maroon, fill=White,minimum size=10pt,inner sep=0pt]
				
				\node[vertex] (G-1) at (1,2) {\tiny $1$};
				\node[vertex] (G-2) at (0,1) {\tiny $1.1$};
				\node[vertex] (G-3) at (2,1) {\tiny $1.\chi$};
				\node[vertex] (G-4) at (-0.5,0) {\tiny $1.1.1$};
				\node[vertex] (G-5) at (0.5,0) {\tiny $1.1.\chi$};
				\node[vertex] (G-6) at (1.5,0) {\tiny $1.\chi.1$};
				\node[vertex] (G-7) at (2.5,0) {\tiny $1.\chi.\chi$};
				
				\foreach \from/\to in {1/2, 1/3, 2/4, 2/5, 3/6, 3/7}
				\draw (G-\from) -- (G-\to);
				
				\node[circle,draw = White, fill=White,minimum size=10pt,inner sep=0pt] (G-etc1) at (1,1) {\tiny...};
				\node[circle,draw = White, fill=White,minimum size=10pt,inner sep=0pt] (G-etc2) at (0,0) {\tiny...};
				\node[circle,draw = White, fill=White,minimum size=10pt,inner sep=0pt] (G-etc3) at (1,0) {\tiny...};
				\node[circle,draw = White, fill=White,minimum size=10pt,inner sep=0pt] (G-etc4) at (2,0) {\tiny...};
			\end{tikzpicture}
			\caption{}
		\end{subfigure}
		\caption{Depiction of three typical tree shapes: (a) $d$-vertex star; (b) $d$-vertex series tree; (c) $\chi$-nary tree of radius 2.}
		\label{fig:TypicalTrees}
	\end{figure}

	\subsection{Stochastic dynamics of the proposed family}

	Let us denote by $\circ$ the binomial thinning operator introduced in \cite{steutel1983integer}. For a random variable $X$ taking values in $\mathbb{N}$, the binomial thinning operator is defined in terms of $X$ as follows:
	\begin{equation}
		\alpha\circ X:= \sum_{i=1}^{X} I_i^{(\alpha)},\quad \alpha \in [0,1],
		\label{eq:BinomThinning}
	\end{equation}
	where $\{I^{(\alpha)}_i, \; i\in\mathbb{N}^*\}$ is a sequence of independent Bernoulli random variables taking 1 with probability $\alpha$, with conventions $\sum_{i=1}^0x_i=0$ and $\mathbb{N}^*=\mathbb{N}\backslash\{0\}$. Before its introduction in the context of INARMA models by the author of \cite{mckenzie1985some} and \cite{mckenzie1988some}, the binomial thinning operator had been used to study self-decomposability of discrete random variables, as in \cite{steutel1979discrete} and in \cite{steutel2003infinite}. 
	See \cite{weiss2008thinning} or \cite{scotto2015thinning} for a study of the binomial thinning operator and other thinning operators.

	The following definition of a MRF is formulated according to the one in Chapter 4.2 of \cite{cressie2015statistics}. 
	\begin{deff}[MRF]
		\label{def:PropertyMRF}
		A vector of random variables $\boldsymbol{X} = (X_v,\,v\in\mathcal{V})$ is a MRF defined on a tree $\mathcal{T} = (\mathcal{V},\mathcal{E})$ if it satisfies the local Markov property, that is, for any two of its components, say $X_{u}$ and $X_{w}$, such that $(u,w) \not\in \mathcal{E}$, 
		\begin{equation}
			X_{u}\perp \!\!\!\perp X_{w}\left| \left\{X_{j},\,(u,j)\in \mathcal{E}\right\}\right.,
			\label{eq:localMarkov}
		\end{equation}
		where $\perp\!\!\!\perp$ denotes conditional independence. 
	\end{deff}
	
	On a tree, the local Markov property is equivalent to the global Markov property, by Lemma 1 of \cite{matuvs1992equivalence}; thus, a tree-structured MRF satisfies
	\begin{equation}
		X_{u}\perp \!\!\!\perp X_{w}\left| \left\{X_{j},\, j\in S(u,w)\right\}\right.,\quad u,w\in\mathcal{V},
		\label{eq:globalMarkov}
	\end{equation}
	with $S(u,w)$ being a separator for $u$ and $w$, that is, a set of vertices such that, for each path from $u$ to $w$, there is at least one participating vertex from that set. Only one path exists from $u$ to $w$ on a tree, so any single vertex on $\mathrm{path}(u,w)$ can act as $S(u,w)$. See \cite{lauritzen1996graphical}, Chapter 3, for a discussion on Markov properties for probabilistic graphical models.

	In the following theorem, we introduce a family of MRFs whose stochastic construction relies on the binomial thinning operator defined in (\ref{eq:BinomThinning}).  
	
	\begin{theorem}[Stochastic representation]
		\label{th:StoDynamics}
		Given a tree $\mathcal{T}=(\mathcal{V},\mathcal{E})$ and a chosen root $r\in\mathcal{V}$, let $\mathcal{T}_r$ be its rooted version. Consider a vector of dependence parameters $\boldsymbol{\alpha} = (\alpha_e, e \in \mathcal{E})$, $\boldsymbol{\alpha}\in[0,1]^d$, and define $\boldsymbol{L} = (L_v, \, v \in \mathcal{V})$ as a vector of independent random variables such that $L_v \sim$~Poisson$(\lambda(1-\alpha_{(\mathrm{pa}(v),v)}))$, $\lambda >0$, for $v\in\mathcal{V}$, with the convention $\alpha_{(\mathrm{pa}(r),r)} = 0$ since the root has no parent. Let $\boldsymbol{N} = (N_v, \, v \in \mathcal{V})$ be a vector of count random variables where its components are defined by 
		\begin{equation}
			N_v = 
            \begin{cases}
                L_r, & \text{if } v=r; \\
                 \alpha_{(\mathrm{pa}(v),v)} \circ N_{\mathrm{pa}(v)} +L_v , & \text{if } v \in dsc(r)
            \end{cases}, \quad v\in\mathcal{V}.
			\label{eq:StoDynamics}
		\end{equation}
		Then, the vector $\boldsymbol{N}$ is a tree-structured MRF whose joint distribution has Poisson marginals of parameter $\lambda$. 
	\end{theorem}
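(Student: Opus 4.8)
The plan is to check the two assertions separately: that each $N_v$ is $\mathrm{Poisson}(\lambda)$, and that $\boldsymbol{N}$ satisfies the local Markov property of Definition~\ref{def:PropertyMRF}. Throughout I regard the construction \eqref{eq:StoDynamics} as driven by a collection of mutually independent ``innovations'' $\xi_v$, $v\in\mathcal{V}$, where $\xi_r=L_r$ and, for $v\neq r$, $\xi_v$ bundles $L_v$ with the Bernoulli sequence realizing the thinning on the edge $(\mathrm{pa}(v),v)$ in \eqref{eq:BinomThinning} (these sequences being taken independent across edges and independent of $\boldsymbol{L}$). Reading \eqref{eq:StoDynamics} while traversing $\mathcal{T}_r$ outward from the root gives $N_r=\xi_r$ and $N_v=f_v\!\left(N_{\mathrm{pa}(v)},\xi_v\right)$ for $v\neq r$; an immediate induction then shows $N_v$ to be a measurable function of $\{\xi_u:\ v\in\mathrm{dsc}(u)\}$, i.e. of the innovations sitting at $v$ and at its ancestors.

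For the marginals I would induct on the distance of $v$ from $r$, using the classical fact that binomial thinning maps Poisson laws to Poisson laws: if $X\sim\mathrm{Poisson}(\mu)$ then $\alpha\circ X\sim\mathrm{Poisson}(\alpha\mu)$, which follows from $\mathbb{E}\!\left[s^{\alpha\circ X}\right]=\mathbb{E}\!\left[(1-\alpha+\alpha s)^X\right]=\exp\!\left(\alpha\mu(s-1)\right)$. The base case is $N_r=L_r\sim\mathrm{Poisson}(\lambda)$. For the step, assuming $N_{\mathrm{pa}(v)}\sim\mathrm{Poisson}(\lambda)$, the term $\alpha_{(\mathrm{pa}(v),v)}\circ N_{\mathrm{pa}(v)}$ is $\mathrm{Poisson}\!\left(\alpha_{(\mathrm{pa}(v),v)}\lambda\right)$ and is independent of $L_v\sim\mathrm{Poisson}\!\left((1-\alpha_{(\mathrm{pa}(v),v)})\lambda\right)$ — the independence holding because $L_v$ is independent both of $N_{\mathrm{pa}(v)}$, a function of innovations other than $\xi_v$, and of the Bernoulli sequence on the edge $(\mathrm{pa}(v),v)$ — so their sum $N_v$ is $\mathrm{Poisson}(\lambda)$ by the Poisson convolution property. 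The convention $\alpha_{(\mathrm{pa}(r),r)}=0$ makes the root a degenerate case of the same computation.

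For the Markov property, the core step is to recognize $\boldsymbol{N}$ as a Bayesian network relative to $\mathcal{T}_r$ with edges oriented away from $r$, i.e. to establish
\[
\mathbb{P}(\boldsymbol{N}=\boldsymbol{n})=\mathbb{P}(N_r=n_r)\prod_{v\in\mathcal{V}\setminus\{r\}}\mathbb{P}\bigl(N_v=n_v\,\big|\,N_{\mathrm{pa}(v)}=n_{\mathrm{pa}(v)}\bigr).
\]
Fixing a topological order $\prec$ on $\mathcal{V}$ with $r$ first (say a breadth-first order from $r$), the chain rule writes $\mathbb{P}(\boldsymbol{N}=\boldsymbol{n})=\prod_{v}\mathbb{P}(N_v=n_v\mid N_u=n_u,\ u\prec v)$; since $N_v=f_v(N_{\mathrm{pa}(v)},\xi_v)$ with $\xi_v$ independent of $(N_u)_{u\prec v}$ — no $N_u$ with $u\prec v$ involves $\xi_v$, because $\xi_v$ enters only the $N_w$ with $w\in\mathrm{dsc}(v)$, all of which come after $v$ — each conditional collapses to $\mathbb{P}(N_v=n_v\mid N_{\mathrm{pa}(v)}=n_{\mathrm{pa}(v)})$, giving the display. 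Absorbing $\mathbb{P}(N_r=n_r)$ into the potential of one edge incident to $r$ exhibits the joint pmf as a product of nonnegative factors indexed by the edges of $\mathcal{T}$, that is over the maximal cliques of $\mathcal{T}$. By the elementary factorization-implies-Markov half of the Hammersley--Clifford theorem, which needs no positivity assumption (Chapter~3 of \cite{lauritzen1996graphical}), $\boldsymbol{N}$ satisfies the global, hence the local, Markov property \eqref{eq:localMarkov} with respect to $\mathcal{T}$, so $\boldsymbol{N}$ is a tree-structured MRF; combined with the previous paragraph, its marginals are $\mathrm{Poisson}(\lambda)$.

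The delicate point is the collapse of the chain-rule conditionals: one must argue carefully that, conditionally on $(N_u)_{u\prec v}$, the law of $N_v$ depends on the conditioning only through $N_{\mathrm{pa}(v)}$, which combines the innovation bookkeeping above (each $N_u$ is a function of the $\xi_w$ with $w=u$ or $w$ an ancestor of $u$) with the ``freezing'' principle for a measurable function of a variable independent of the conditioning $\sigma$-field. An alternative that avoids Hammersley--Clifford is to extract from the same bookkeeping that, for every non-root $v$, conditionally on $N_{\mathrm{pa}(v)}$ the subvector $(N_u)_{u\in\mathrm{dsc}(v)}$ is independent of $(N_u)_{u\notin\mathrm{dsc}(v)}$; then, given non-adjacent $u,w$, choosing a vertex $s$ strictly on $\mathrm{path}(u,w)$ and a path-neighbour $c$ of $s$ that is a child of $s$ in $\mathcal{T}_r$ (at least one of the two path-neighbours of $s$ is such a child) places one of $u,w$ in $\mathrm{dsc}(c)$ and the other outside it, so $N_u\perp\!\!\!\perp N_w\mid N_s$, which is \eqref{eq:globalMarkov}; the equivalence of the local and global Markov properties on a tree (Lemma~1 of \cite{matuvs1992equivalence}) then finishes the argument.
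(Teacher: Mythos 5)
Your proof is correct, and while the marginal half coincides with the paper's (induction outward from the root, thinning a Poisson$(\lambda)$ to Poisson$(\alpha\lambda)$ and convolving with the independent $L_v$ --- the paper just phrases it as a product of pgfs rather than a sum of independent Poissons), the Markov half takes a genuinely different and more rigorous route. The paper argues the local Markov property informally, essentially asserting from \eqref{eq:StoDynamics} that ``the maximum information about $N_w$'' is carried by its graph neighbours; it then \emph{uses} the resulting global Markov property to obtain the directed factorization \eqref{eq:MarkovProba} when proving Theorem~\ref{th:JointPMF}. You reverse that logical order: the innovation bookkeeping ($N_v=f_v(N_{\mathrm{pa}(v)},\xi_v)$ with $\xi_v$ independent of everything preceding $v$ in a topological order) yields the chain-rule collapse and hence the factorization of $p_{\boldsymbol{N}}$ over the edges of $\mathcal{T}$ directly from the construction, after which the positivity-free half of Hammersley--Clifford (or your alternative subtree-independence argument) delivers the global, hence local, Markov property. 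This buys a fully explicit justification of the one step the paper leaves at the level of intuition, and it derives \eqref{eq:MarkovProba} as a by-product rather than as a consequence of the property being proved; the cost is the extra scaffolding of the $\xi_v$'s and an appeal to Hammersley--Clifford or to Lemma~1 of \cite{matuvs1992equivalence}, whereas the paper's argument, granting its informality, is shorter and self-contained. Both of your closing remarks --- that the conditional collapse is the delicate point, and that one of the two path-neighbours of an interior separator vertex must be its child --- are exactly the right places to be careful, and your handling of them is sound.
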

	
	\begin{proof}
		We first prove that $\boldsymbol{N}$ is a MRF. Given (\ref{eq:StoDynamics}), for a given vertex $w\in\mathcal{V}$, $N_w$ is solely influenced by its parent, $N_{\mathrm{pa}(w)}$, and by $L_w$, which is specific to $N_w$ and independent of $\{L_v,\, v\in\mathcal{V}\backslash\{w\}\}$.
		%, determinantal to other vertices.}
	The maximum information about $N_w$ is therefore obtained by knowing the value taken by its parent and by those to whom it is a parent -- its children, that is, all the variables whose vertex is connected to $w$ by an edge. This implies conditional independence of $N_w$ and $N_{j}$, $j \in \mathcal{V}\backslash(\{w,\mathrm{pa}(w)\}\cup\mathrm{ch}(w))$, knowing the value taken by its parent and all its children, that is,
	\begin{equation*}
		N_v \perp \!\!\!\perp N_{j} | \left\{N_k,\, {k\in (\{\mathrm{pa}(v)\}\cup\;\mathrm{ch}(v))} \right\},\quad v\in\mathcal{V},\; j\in\mathcal{V}\backslash(\{v,\mathrm{pa}(v)\}\cup\mathrm{ch}(v)).
	\end{equation*}
	This amounts to (\ref{eq:localMarkov}), the local Markov property. Hence, $\boldsymbol{N}$ is a MRF with respect to Definition~\ref{def:PropertyMRF}, and also satisfies the global Markov property given by (\ref{eq:globalMarkov}).  
We now prove by induction that $N_v\sim$~Poisson$(\lambda)$ for all $v\in\mathcal{V}$. Clearly, $N_r\sim$~Poisson$(\lambda)$. Suppose the statement holds true for $N_{\mathrm{pa}(w)}$, $w\in\mathcal{V}\backslash\{r\}$.  We show that $N_w\sim$~Poisson$(\lambda)$. From the representation in (\ref{eq:StoDynamics}), we know $L_w$ is not used in the construction of $N_{\mathrm{pa}(w)}$, since $w$ is not on $\mathrm{path}(r,\mathrm{pa}(w))$. Because $L_w$ is independent of $\{L_v,\, v\in\mathcal{V}\backslash\{w\}\}$ and also of thinning operations performed on them, we have $(\alpha_{(\mathrm{pa}(w),w)}\circ N_{\mathrm{pa}(w)})$ independent of $L_w$. From (\ref{eq:StoDynamics}), it follows
	\begin{equation*}
		\mathcal{P}_{N_w}(t) = \mathcal{P}_{\alpha_{(\mathrm{pa}(w),w)}\circ N_{\mathrm{pa}(w)}}(t) \times\mathcal{P}_{L_w}(t), \quad t \in [-1,1],
	\end{equation*}
	where $\mathcal{P}_X$ denotes the probability generating function of a random variable $X$. By Theorem \hyperref[th:PropertyBinThinOp]{C.\ref{th:PropertyBinThinOp}(d)}, we obtain
	\begin{equation*}
		\mathcal{P}_{N_w}(t) = \mathcal{P}_{N_{\mathrm{pa}(w)}} (1-\alpha_{(\mathrm{pa}(w),w)} + \alpha_{(\mathrm{pa}(w),w)} t) \times \mathcal{P}_{L_w}(t), \quad t \in [-1,1],
	\end{equation*}
	which becomes
	\begin{equation}
		\mathcal{P}_{N_w}(t) = \mathrm{e}^{\lambda( 1-\alpha_{(\mathrm{pa}(w),w)} + \alpha_{(\mathrm{pa}(w),w)} t - 1)} \times  \mathrm{e}^{\lambda(1-\alpha_{(\mathrm{pa}(w),w)})(t-1)}, \quad t \in [-1,1],\label{fgpNj}
	\end{equation}
	from the probability generating functions of $L_w$ and $N_{\mathrm{pa}(w)}$ given the induction hypothesis. It follows from (\ref{fgpNj}) that 
	\begin{equation*}
		\mathcal{P}_{N_w}(t) = \mathrm{e}^{\lambda(t - 1)} \quad  t \in [-1,1],
	\end{equation*}
	which specifies that $N_w$ follows a Poisson distribution of parameter $\lambda>0$.
	Hence, inductively on the parent-child relationships of vertices, we deduce $N_v\sim$~Poisson$(\lambda)$ for all $v\in\mathcal{V}$. 
\end{proof}

The representation in (\ref{eq:StoDynamics}) exhibits that each $N_v$, $v \in \mathcal{V}\backslash\{r\}$, comprises two independent parts: a propagation part and an innovation part. The propagation part, denoted by $\alpha_{(\mathrm{pa}(v),v)}\circ N_{\mathrm{pa}(v)}$, represents the number of events having taken place at the parent vertex of $v$ that have propagated through the edge connecting the two. These events contribute to the number of events $N_v$. The parameter $\alpha_{(\mathrm{pa}(v),v)}$ then represents, for a given event, the probability of such a propagation to occur. We later show that $\alpha_{(\mathrm{pa}(v),v)}$ incidentally corresponds to the Pearson correlation coefficient between $N_{\mathrm{pa}(v)}$ and $N_v$. The innovation part, denoted by $L_v$, represents the number of events originally associated to the vertex $v$. The occurrence of these $L_v$ events is not influenced by the occurrence of events at other vertices in $\mathcal{V}\backslash\{v\}$. The propagation part and the innovation part are independent, given the independence between $L_u$'s, $u\in\mathcal{V}$.
The representation in (\ref{eq:StoDynamics}) is schematized in the following example. 

\begin{figure}[H] 
		\centering
		\begin{tikzpicture}[xscale = 0.5, yscale = 0.5, thick]
			\tikzstyle{vertex}=[circle, draw = Maroon,fill=White,minimum size=10pt,inner sep=0pt]
			
			\foreach \name/\x/\y in {1/0/3, 2/-0.5/2, 3/0.5/2, 4/0/1, 5/1/1, 6/-0.5/0, 7/0.5/0}
			\node[vertex] (G-\name) at (\x,\y) {\tiny \name};
			
			\foreach \from/\to in {1/2, 1/3, 3/4, 3/5, 4/6, 4/7}
			\draw (G-\from) -- (G-\to);
		\end{tikzpicture}
		\caption{The 7-vertex rooted tree $\mathcal{T}_1$ of Example \ref{ex:StoDynamics}.}
		\label{fig:ExTreeNotations}
	\end{figure}

\begin{ex}
	\label{ex:StoDynamics}
	Consider $\boldsymbol{N}=(N_v,\, v\in\{1,\ldots,7\})$, a MRF with joint distribution having fixed Poisson marginals of parameter $\lambda$, on the rooted tree $\mathcal{T}_1$ depicted in Fig.~\ref{fig:ExTreeNotations}, and with dependence parameters $\boldsymbol{\alpha}=(\alpha_e,\,e\in\mathcal{E})$, as in Theorem~\ref{th:StoDynamics}. The stochastic representation of $\boldsymbol{N}$ can be illustrated as follows:
	\begin{equation*}
		N_1 = L_1 
		\Rightarrow
		\begin{cases}
			N_2 = \alpha_{(1,2)} \circ N_1 + L_2, \\
			N_3 = \alpha_{(1,3)} \circ N_1 + L_3, 
			\Rightarrow
			\begin{cases}
				N_4 = \alpha_{(3,4)} \circ N_3 + L_4, 	
				\Rightarrow
				\begin{cases}
					N_6 = \alpha_{(4,6)} \circ N_4 + L_6, \\
					N_7 = \alpha_{(4,7)} \circ N_4 + L_7,
				\end{cases} \\
				N_5 = \alpha_{(3,5)} \circ N_3 + L_5, 
			\end{cases}
		\end{cases}
	\end{equation*}
	with independent random variables $L_{1},\ldots, L_7$, such that $L_{1}\sim$~Poisson$(\lambda)$ and $L_i\sim$~Poisson$(\lambda(1-\alpha_{(\mathrm{pa}(i),i)}))$ for $i\in\{2,\ldots,7\}$. 
	%\hfill $\square$
\end{ex}

One could be tempted to express the representation given in (\ref{eq:StoDynamics}) entirely in terms of $\boldsymbol{L}$. For instance, $N_4$ in Example~\ref{ex:StoDynamics} becomes $N_4=\alpha_{(3,4)}\circ(\alpha_{(1,3)}\circ L_1 + L_3) + L_4$ which, by Theorem~\hyperref[th:PropertyBinThinOp]{\ref{th:PropertyBinThinOp}(f)}, yields $N_4\stackrel{d}{=}(\alpha_{(3,4)}\alpha_{(1,3)})\circ L_1 + \alpha_{(3,4)}\circ L_3 + L_4$. Note that the property given by Theorem~\hyperref[th:PropertyBinThinOp]{\ref{th:PropertyBinThinOp}(f)} holds only in terms of distribution. Therefore, representing $N_4$ this way, one would forget that the thinning operation performed on $L_1$ must have the same realisation for $N_3$ as for $N_4$, according to (\ref{eq:StoDynamics}); thus, one would be overlooking a portion of the dependence relation tying $N_3$ and $N_4$, or $N_v$ and $N_{\mathrm{pa}(v)}$ in general, $v\in\mathcal{V}$, and should therefore not take this shortcut to derive results on the joint distribution.

As explained in Chapter 15 of \cite{theodoridis2015machine}, determining the marginal distributions of a tree-structured MRF usually requires the use of message passing algorithms, as one typically cannot directly extract this information from its stochastic construction or the parameterization of its probability mass function. Message passing algorithms yield exact results on trees (see for instance \cite{wainwright2003tree}). It is nonetheless a large step in efficiency to have fixed marginal distributions, as we achieve with the construction in Theorem~\ref{th:StoDynamics}, eluding the need to resort to any algorithm to derive them.

One may notice similarities between the proposed family of MRF and the Poisson AR(1) stochastic process. Poisson AR(1) processes were introduced in \cite{mckenzie1985some} and greatly studied in the context of time series for count data, starting with \cite{mckenzie1988some}. The stochastic representation of such processes is given by 
\begin{equation}
	N_{i} = \alpha\circ N_{i-1} + L_i, \quad \alpha \in [0,1],\; i \in \mathbb{N}^*\backslash\{1\}, 
	\label{eq:PoissonAR1}
\end{equation}
where $N_1 \sim$~Poisson$(\lambda)$, and with the sequence of independent random variables $\{L_i,\, i\in\mathbb{N}^{*}\backslash\{1\}\}$, independent of $N_1$ and such that $L_i\sim$~Poisson$(\lambda(1-\alpha))$, $\lambda >0$. 
Juxtaposing (\ref{eq:StoDynamics}) and (\ref{eq:PoissonAR1}), one observes that the stochastic construction in Theorem \ref{th:StoDynamics} on a series tree is equivalent to a Poisson AR(1) process over a finite time set $\{1,...,d\}$. The Poisson AR(1) process hence corresponds to the limit case, when the number of vertices becomes large, of a MRF constructed according to Theorem \ref{th:StoDynamics} on a series tree.

The family of multivariate distributions derived from the construction procedure of Theorem \ref{th:StoDynamics} encapsulates, within the edges of the tree, the propagation dynamics of the Poisson AR(1) process, leaning on the (discrete) self-decomposability of the Poisson distribution as expressed in \cite{steutel2003infinite}, Chapter 5.4. The fixed Poisson marginals arise from this property. This rendition of the propagation dynamics as a MRF widens extensibly the  dependence structure possibilities within a vector of Poisson random variables given the numerous feasible tree constructions. This allows to grow out of the limitation to temporal-like dependence by being able to express more general dependence schemes, spatial dependence settings for example (\cite{cressie2015statistics}, Chapter~4). The possibility of having different dependence parameters $\alpha_e$ for each edge $e \in \mathcal{E}$ adds another layer of flexibility.

We must stress the distinction between the family of MRFs constructed as in Theorem~\ref{th:StoDynamics} and auto-Poisson MRFs, introduced in \cite{besag1974} and also called Poisson Graphical Models. Auto-Poisson MRFs do not have Poisson marginal distributions; it is their conditional distributions that are Poisson. The authors of \cite{inouye2017review} explain the clear distinction between multivariate distributions with Poisson marginals and with conditional Poisson distributions. The proposed family of MRFs is also distinct of Poisson dependency networks, introduced in \cite{hadiji2015poisson}, which are a non-parametric analogue to Besag's auto-Poisson.

\begin{theorem}[Choice of the root]
	\label{th:Reversibility}
	Let $\boldsymbol{N} = (N_v, \, v \in \mathcal{V})$ be a tree-structured MRF with joint distribution having fixed Poisson marginals as in Theorem \ref{th:StoDynamics}. The choice of the root of $\mathcal{T}$ has no stochastic incidence. Two rooted versions of the same tree yield the same joint distribution of $\boldsymbol{N}$.   
\end{theorem}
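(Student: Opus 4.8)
The plan is to show that the joint probability generating function of $\boldsymbol{N}$ is symmetric in the sense of being independent of the chosen root, by computing it explicitly from the stochastic representation \eqref{eq:StoDynamics}. Since a joint distribution of a nonnegative integer-valued random vector is determined by its joint pgf, establishing root-independence of the pgf suffices. First I would fix a rooted version $\mathcal{T}_r$ and, using \eqref{eq:StoDynamics} together with the conditional-independence structure, express $\mathcal{P}_{\boldsymbol{N}}(\boldsymbol{t}) = \mathbb{E}\!\left[\prod_{v\in\mathcal{V}} t_v^{N_v}\right]$ by conditioning successively from the leaves up toward the root. Using Theorem~\hyperref[th:PropertyBinThinOp]{C.\ref{th:PropertyBinThinOp}(d)} to handle each thinning operation $\alpha_{(\mathrm{pa}(v),v)}\circ N_{\mathrm{pa}(v)}$, one replaces a factor $t_v$ at a leaf $v$ by the substitution $t_{\mathrm{pa}(v)} \mapsto t_{\mathrm{pa}(v)}\,(1-\alpha_{(\mathrm{pa}(v),v)} + \alpha_{(\mathrm{pa}(v),v)} t_v)$, while each independent innovation $L_v$ contributes a factor $\mathcal{P}_{L_v}(t_v) = \exp\!\big(\lambda(1-\alpha_{(\mathrm{pa}(v),v)})(t_v-1)\big)$ (and $\exp(\lambda(t_r-1))$ for the root). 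Carrying this recursion all the way down, the pgf should collapse to a product over the edges and vertices of the tree of factors that depend only on $\lambda$, the $\alpha_e$, and the $t_v$'s — crucially, in a form that references edges as unordered pairs rather than parent-child arrows.

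The key step is therefore to identify the closed form of $\mathcal{P}_{\boldsymbol{N}}(\boldsymbol{t})$ and observe it is manifestly root-free. I expect it to take the shape
\begin{equation*}
	\mathcal{P}_{\boldsymbol{N}}(\boldsymbol{t}) = \exp\!\left( \lambda \sum_{v\in\mathcal{V}} (t_v - 1) \right) \times \prod_{e=(u,w)\in\mathcal{E}} \exp\!\Big( \lambda\,\alpha_e\,(t_u - 1)(t_w - 1) \Big),
\end{equation*}
or an algebraically equivalent expression; the first factor records the Poisson$(\lambda)$ marginals (consistent with Theorem~\ref{th:StoDynamics}) and each edge factor is symmetric under swapping $u$ and $w$. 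Since both the vertex set $\mathcal{V}$, the edge set $\mathcal{E}$ (as a set of unordered pairs), and this formula make no reference to which vertex was designated the root, two rooted versions $\mathcal{T}_r$ and $\mathcal{T}_{r'}$ of the same tree produce the same $\mathcal{P}_{\boldsymbol{N}}$, hence the same joint law. Alternatively — and perhaps more cleanly — one can argue by induction on $d$: removing a leaf $\ell$ of $\mathcal{T}$, the restriction of $\boldsymbol{N}$ to $\mathcal{V}\setminus\{\ell\}$ is, by \eqref{eq:StoDynamics} and the fact that $\ell$ is never used in building any other $N_v$, a MRF of the same type on $\mathcal{T}\setminus\{\ell\}$; one then checks that the conditional law of $N_\ell$ given the rest depends only on $N_{\mathrm{pa}(\ell)}$ and is the same whether $\ell$'s neighbour is its parent or its child, which reduces to a statement about reversibility of one Poisson AR(1) step.

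The main obstacle is the reversibility of a single propagation step: one must verify that if $N_{\mathrm{pa}(v)}\sim\mathrm{Poisson}(\lambda)$ and $N_v = \alpha\circ N_{\mathrm{pa}(v)} + L_v$ with $L_v\sim\mathrm{Poisson}(\lambda(1-\alpha))$ independent, then the pair $(N_{\mathrm{pa}(v)}, N_v)$ has an exchangeable-in-law joint distribution — equivalently, that there exists $\widetilde{L}\sim\mathrm{Poisson}(\lambda(1-\alpha))$ independent of $N_v$ with $N_{\mathrm{pa}(v)} \stackrel{d}{=} \alpha\circ N_v + \widetilde{L}$. This is the discrete analogue of time-reversibility of the Poisson AR(1) chain; it follows from the symmetry of the bivariate pgf $\mathcal{P}_{(N_{\mathrm{pa}(v)},N_v)}(s,t) = \exp(\lambda(s-1) + \lambda(t-1) + \lambda\alpha(s-1)(t-1))$ in $s$ and $t$, which is a direct computation using Theorem~\hyperref[th:PropertyBinThinOp]{C.\ref{th:PropertyBinThinOp}(d)}. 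Once this local symmetry is in hand, re-rooting the tree amounts to reversing the orientation of the edges along the path between the two candidate roots, one edge at a time, and the global joint distribution is preserved at each flip by the local statement together with the conditional-independence (Markov) structure established in Theorem~\ref{th:StoDynamics}.
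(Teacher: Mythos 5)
Your proposal contains two distinct routes, and they do not fare equally well. The first route --- computing a closed form for $\mathcal{P}_{\boldsymbol{N}}$ and observing it is manifestly root-free --- rests on a conjectured formula that is wrong. The expression
$\exp\bigl(\lambda\sum_{v}(t_v-1)\bigr)\prod_{(u,w)\in\mathcal{E}}\exp\bigl(\lambda\alpha_{(u,w)}(t_u-1)(t_w-1)\bigr)$
is the pgf of a Poisson vector with only pairwise (common-shock) interactions along edges; under it, $\partial^2\log\mathcal{P}_{\boldsymbol{N}}/\partial t_u\partial t_w=0$ whenever $(u,w)\notin\mathcal{E}$, so non-adjacent components would be uncorrelated. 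This contradicts Theorem~\ref{th:Covariance}, which gives $\mathrm{Cov}(N_u,N_v)=\lambda\prod_{e\in\mathrm{path}(u,v)}\alpha_e>0$ for non-adjacent vertices, and it is visibly inconsistent with the developed three-vertex series-tree pgf in \eqref{eq:SameRoot}, which contains the triple term $\alpha_{(1,2)}\alpha_{(2,3)}t_1t_2t_3$ arising from events propagating across two consecutive edges. The actual joint pgf (Theorem~\ref{th:JointPGF}) involves the recursively defined polynomials $\eta_v^{\mathcal{T}_r}$, whose monomials range over entire connected subsets of descendants; its developed form is indeed root-free, but it is not ``an algebraically equivalent expression'' of your pairwise product, so this route as written does not go through.

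Your alternative route, however, is correct and is essentially the paper's own argument: re-rooting at $r'$ only reverses the parent--child orientation along $\mathrm{path}(r,r')$, all other vertices keep their dynamics by the Markov structure, and each single edge flip preserves the joint law because the bivariate pgf of one propagation step, $\exp\bigl(\lambda(s-1)+\lambda(t-1)+\lambda\alpha(s-1)(t-1)\bigr)$, is symmetric in $s$ and $t$ --- i.e., the one-step dynamics are reversible. The paper invokes the time-reversibility of the Poisson AR(1) process on the whole path (citing McKenzie) where you flip one edge at a time from the local symmetric pgf; your version is slightly more self-contained. The only point worth making explicit in either version is why the local swap propagates to the full joint law: conditionally on the two endpoints of the flipped edge, the subtrees hanging off them have unchanged conditional laws by the global Markov property, so preserving the bivariate law of the endpoints preserves the whole distribution. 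You gesture at this but should state it; with that, your second argument is a complete proof. I would drop the conjectured closed form entirely.
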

\begin{proof}
	By selecting root $r^{\prime}$ rather than root $r$, for $r,r^{\prime}\in\mathcal{V}$, the parent-child relationship switches on $\mathrm{path}(r,r^{\prime})$. Children vertices not on that path all retain their original parent vertex, hence their original stochastic dynamics, from the global Markov property established in Theorem~\ref{th:StoDynamics}.
	Thus, the reversibility of the stochastic dynamic on a series tree is all that is required to state that the choice of the root has no incidence on the joint distribution. The time-reversibility of the Poisson AR(1) process, as proved in \cite{mckenzie1988some}, ensures that reversibility on a spine. 
\end{proof}

Theorem~\ref{th:Reversibility} shows that the directionality seemingly implied by the binomial thinning operator in the construction given in (\ref{eq:StoDynamics}) is illusory. The parent-child dynamics only offer convenience for notation, proofs and programming purposes, but have no impact on a strictly stochastic basis. 

Let us denote by $\mathbb{MPMRF}$ the family of multivariate Poisson distributions defined from tree-structured MRFs, as described by the stochastic representation in Theorem~\ref{th:StoDynamics}. We denote by $\text{MPMRF}(\lambda,\boldsymbol{\alpha}, \mathcal{T})$ a member of $\mathbb{MPMRF}$ having parameter $\lambda$ for its fixed Poisson marginals, $\boldsymbol{\alpha}=(\alpha_e,\,e\in\mathcal{E})$ for its dependence parameters, and $\mathcal{T}$ for its underlying tree.
We do not indicate a rooting for the tree: such a specification would be inconsequential given Theorem~\ref{th:Reversibility}.

The notion of subtree of a tree is relevant to the following corollary to Theorems~\ref{th:StoDynamics} and \ref{th:Reversibility}. Precisely, a subtree $\tau$ of a tree $\mathcal{T} = (\mathcal{V},\mathcal{E})$ is defined by the pair $(\mathcal{V}^{\tau},\mathcal{E}^{\tau})$, where $\mathcal{V}^{\tau}$ is a subset of vertices $\mathcal{V}^{\tau}\subseteq\mathcal{V}$ and $\mathcal{E}^{\tau}$ is a subset of edges $\mathcal{E}^{\tau}\subseteq\mathcal{E}$, such that $\tau$ is a tree itself. The translation to the MRF's dynamics of pruning $\mathcal{T}$ of $\tau$, meaning considering the residual tree comprising vertices $\mathcal{V}\backslash\mathcal{V}^{\tau}$, is examined.  

\begin{cor}
	\label{th:Pruning}
	Consider a tree $\mathcal{T}=(\mathcal{V},\mathcal{E})$ and a subtree $\tau=(\mathcal{V}^{\tau}, \mathcal{E}^{\tau})$ of $\mathcal{T}$. Let $\boldsymbol{N} = (N_v, \, v \in \mathcal{V})\sim\text{MPMRF}(\lambda,\boldsymbol{\alpha}, \mathcal{T})$ and $\boldsymbol{N}^{\tau}=(N^{\tau}_v,\, v\in\mathcal{V}^{\tau})\sim\text{MPMRF}(\lambda,\boldsymbol{\alpha}^{\tau}, \tau)$. Suppose $\alpha_e = \alpha^{\tau}_e$ for every $e\in\mathcal{E}^{\tau}$. Then, the subvector $(N_v,\, v\in\mathcal{V}^{\tau})$ is identically distributed as $\boldsymbol{N}^{\tau}$. In other words, neglecting the random variables associated to the set of vertices $\mathcal{V}\backslash\mathcal{V}^{\tau}$ does not affect the joint distribution of the random variables associated to vertices in $\mathcal{V}^{\tau}$.   
\end{cor}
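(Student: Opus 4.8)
The plan is to exploit the reversibility granted by Theorem~\ref{th:Reversibility}: since the choice of root is immaterial, I would root both $\mathcal{T}$ and $\tau$ at a common vertex lying in $\mathcal{V}^{\tau}$. More precisely, pick any $r\in\mathcal{V}^{\tau}$ and use it as the root of $\mathcal{T}$, so that $\boldsymbol{N}$ admits the stochastic representation \eqref{eq:StoDynamics} with respect to $\mathcal{T}_r$; by Theorem~\ref{th:Reversibility} this does not change the joint law of $\boldsymbol{N}$. Likewise root $\tau$ at $r$. The point of choosing $r\in\mathcal{V}^{\tau}$ is that the parent-child relation of $\mathcal{T}_r$, when restricted to vertices of $\mathcal{V}^{\tau}$, agrees with the parent-child relation of $\tau_r$: because $\tau$ is a connected subtree containing $r$, for any $v\in\mathcal{V}^{\tau}\setminus\{r\}$ the unique $\mathcal{T}$-path from $v$ to $r$ stays inside $\mathcal{V}^{\tau}$, hence $\mathrm{pa}_{\mathcal{T}_r}(v)=\mathrm{pa}_{\tau_r}(v)\in\mathcal{V}^{\tau}$, and the edge $(\mathrm{pa}(v),v)$ belongs to $\mathcal{E}^{\tau}$.

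Next I would compare the two constructions vertex by vertex. For the MRF $\boldsymbol{N}$ on $\mathcal{T}_r$, introduce the independent innovations $\boldsymbol{L}=(L_v,v\in\mathcal{V})$ with $L_v\sim\mathrm{Poisson}(\lambda(1-\alpha_{(\mathrm{pa}(v),v)}))$ and the associated thinning variables. Since $\alpha_e=\alpha^{\tau}_e$ for every $e\in\mathcal{E}^{\tau}$, and since for $v\in\mathcal{V}^{\tau}$ the parent edge $(\mathrm{pa}(v),v)$ lies in $\mathcal{E}^{\tau}$, the innovation $L_v$ used for $v$ in the $\mathcal{T}$-construction has exactly the same distribution as the innovation $L^{\tau}_v$ used for $v$ in the $\tau$-construction, and the thinning parameter on that edge is the same. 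Consequently the subfamily $\{L_v:v\in\mathcal{V}^{\tau}\}$ together with the thinnings indexed by $\mathcal{E}^{\tau}$ forms a collection of independent random variables with precisely the joint law of the ingredients used to build $\boldsymbol{N}^{\tau}$. Then a straightforward induction along the parent-child order of $\tau_r$ — base case $N_r=L_r=N^{\tau}_r$, inductive step $N_v=\alpha_{(\mathrm{pa}(v),v)}\circ N_{\mathrm{pa}(v)}+L_v$ versus $N^{\tau}_v=\alpha^{\tau}_{(\mathrm{pa}(v),v)}\circ N^{\tau}_{\mathrm{pa}(v)}+L^{\tau}_v$, with matching parameters, matching parent (by the previous paragraph), and matching innovations — shows by recursion that the whole subvector $(N_v,v\in\mathcal{V}^{\tau})$ is built from these ingredients by the same deterministic recipe as $\boldsymbol{N}^{\tau}$, hence $(N_v,v\in\mathcal{V}^{\tau})\stackrel{d}{=}\boldsymbol{N}^{\tau}$.

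The main obstacle, and the step deserving care, is the claim that rooting $\mathcal{T}$ at a vertex of $\mathcal{V}^{\tau}$ makes the restricted parent map coincide with that of $\tau$; this is where connectedness of the subtree $\tau$ (and the fact that it contains the chosen root) is genuinely used, and it is what lets the innovations and thinnings "line up" across the two models. One should also be slightly careful that the variables $L_v$ with $v\in\mathcal{V}\setminus\mathcal{V}^{\tau}$, and the thinnings on edges outside $\mathcal{E}^{\tau}$, play no role in the recursion defining $(N_v,v\in\mathcal{V}^{\tau})$ — which again follows from the root lying in $\mathcal{V}^{\tau}$, since no vertex of $\mathcal{V}^{\tau}$ then has a parent outside $\mathcal{V}^{\tau}$. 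Everything else is the bookkeeping of an induction over a rooted tree, and invoking Theorem~\ref{th:Reversibility} once at the start and once implicitly for $\tau$ disposes of any worry about the arbitrariness of the rooting.
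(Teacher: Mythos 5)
Your proposal is correct and follows essentially the same route as the paper: root $\mathcal{T}$ at a vertex of $\mathcal{V}^{\tau}$ via Theorem~\ref{th:Reversibility}, observe that connectedness of $\tau$ forces every $v\in\mathcal{V}^{\tau}$ to have its parent inside $\mathcal{V}^{\tau}$, and conclude that the stochastic representation \eqref{eq:StoDynamics} restricted to $\mathcal{V}^{\tau}$ coincides with that of $\boldsymbol{N}^{\tau}$. Your explicit induction over the parent--child order just spells out what the paper states in one line.
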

\begin{proof} 
	Given Theorem~\ref{th:Reversibility}, one may choose any root without it affecting the joint distribution of $\boldsymbol{N}$. We choose a root $r^{\prime}\in\mathcal{V}^{\tau}$, and remark that every vertex $i\in\mathcal{V}^{\tau}$ has its parent in $\tau$, according to this rooting. Indeed, for $\tau$ to be a tree, all vertex on $\mathrm{path}(r^{\prime},i)$ must be in $\mathcal{V}^{\tau}$ as well for every $i\in\mathcal{V}^{\tau}$. Therefore, according to (\ref{eq:StoDynamics}), the stochastic representation of every $N_i$, $i\in\mathcal{V}^{\tau}$, is exactly the same whether the underlying rooted tree is $\mathcal{T}_{r^{\prime}}$ or $\tau_{r^{\prime}}$.
\end{proof} 

The result actually follows from the fixedness of marginals coupled to the global Markov property. Theorem~\ref{th:Reversibility} simply ensures the construction to remain well defined after the pruning, as one should evoke it to choose a root in the subtree of interest. 
Corollary~\ref{th:Pruning} conversely informs that conjuncting additional random variables, with respect to the representation given in (\ref{eq:StoDynamics}), to a MRF from the proposed family will not affect the dynamics already within that MRF, and thus will also yield a MRF from the proposed family. Being able to neglect or conjunct random variables, as provided by Corollary~\ref{th:Pruning}, is not a property shared by MRFs in general, due to their marginal distributions not being fixed. For a generic MRF, say $\boldsymbol{X}$, the simple existence of a random variable $X_u$ whose vertex connects to $v$, with $u,v\in\mathcal{V}$, influences the distribution of $X_v$. The result is also stronger than answering the question of marginalization of MRFs discussed in Chapter~4 of \cite{cressie2015statistics} : $\boldsymbol{N}^{\tau}$ is not only a MRF, it follows a distribution from $\mathbb{MPMRF}$ with the same parameters as for $\boldsymbol{N}$ for the non-pruned edges. 
We provide the following example to illustrate Corollary~\ref{th:Pruning}. 
\begin{ex} 
	\label{ex:TreesPruning}	
	Consider $\boldsymbol{N}$ and $\boldsymbol{N}^{\prime}$ respectively defined on trees $\mathcal{T}$ and $\mathcal{T}^{\prime}$ of Fig.~\ref{fig:TreesPruning}. Suppose both have the same parameter $\lambda$ for their marginal distributions, and the same dependence parameters $\alpha_e$ on their common edges. 
	Given Corollary~\ref{th:Pruning}, we have $(N_1,N_2,N_3,N_4,N_5) \stackrel{d}{=} (N_1^{\prime},N_2^{\prime}, N_3^{\prime}, N_4^{\prime}, N_5^{\prime})$.
	%\hfill $\square$
\end{ex}

\begin{figure}[H]
	\centering
	\begin{tikzpicture}[every node/.style={text=Black, circle, draw = Maroon, inner sep=0mm, outer sep = 0mm, minimum size=3.5mm, fill = White}, node distance = 1.5mm, scale=0.1, thick]
		\node (1a) {\tiny $1$};
		\node [below=of 1a] (2a) {\tiny $2$};
		\node [below right=of 1a] (3a) {\tiny $3$};
		\node [below left =of 2a] (4a) {\tiny $4$};
		\node [below =of 2a] (5a) {\tiny $5$};
		\node [draw = Red!60, below left=of 1a] (6a) {\tiny $6$};
		\node [draw = Red!60, left=of 6a] (7a) {\tiny $7$};
		\node [draw = Red!60, below left=of 7a] (8a) {\tiny $8$};
		\node [draw = Red!60, below=of 7a] (9a) {\tiny $9$};
		\node [draw = Red!60, below right=of 3a] (10a) {\tiny $10$};
		\node [draw = Red!60, right=of 3a] (11a) {\tiny $11$};
		\node [draw = Red!60, below right=of 11a] (12a) {\tiny $12$};

		\draw (1a) -- (2a);
		\draw (1a) -- (3a);
		\draw (2a) -- (4a);
		\draw (2a) -- (5a);            
		\draw (1a) -- (6a);       
		\draw (6a) -- (7a);
		\draw (7a) -- (8a);
		\draw (7a) -- (9a);
		\draw (3a) -- (10a);            
		\draw (3a) -- (11a);       
		\draw (11a) -- (12a);

		\node [right = 27mm of 1a](1b) {\tiny $1$};
		\node [below=of 1b] (2b) {\tiny $2$};
		\node [below right=of 1b] (3b) {\tiny $3$};
		\node [below left =of 2b] (4b) {\tiny $4$};
		\node [below =of 2b] (5b) {\tiny $5$};

		\draw (1b) -- (2b);
		\draw (1b) -- (3b);
		\draw (2b) -- (4b);
		\draw (2b) -- (5b);            
		
		\node[draw=none, below = 11mm of 1a](t1){$\mathcal{T}^{\mathcolor{White}{\prime}}$};
		\node[draw=none, below= 11mm of 1b](t2){$\mathcal{T}^{\prime}$};
	\end{tikzpicture}
	\caption{Trees $\mathcal{T}$ and $\mathcal{T}^{\prime}$ of Example  \ref{ex:TreesPruning}.}
	\label{fig:TreesPruning}
\end{figure}

\section{Joint probability mass function and joint probability generating function}
\label{subsect:PMF-PGF}	

We examine the joint probability mass function and the joint probability generating function characterizing the family $\mathbb{MPMRF}$, both allowing a better understanding of the stochastic dynamics at play brought by the representation given in (\ref{eq:StoDynamics}). Deriving analytical expressions for both quantities is of no negligible value since it notably dispels the question of whether a valid distribution arises from the stochastic representation. Indeed, as explored in \cite{kaiser2000construction}, specifying conditional distributions does not suffice to ensure a valid joint probability mass (or density) function for a MRF. 
The stochastic representation (\ref{eq:StoDynamics}) in Theorem~\ref{th:StoDynamics}, however, provides a natural sequence of conditioning, that is, successively moving away from the root, which enables to derive well-defined analytic expressions for the joint probability mass function and joint probability generating function of $\boldsymbol{N}$. This is due to the stochastic construction defining itself on cliques of the tree, which we discuss after Theorem~\ref{th:JointPMF}. As a result, it always leads to a valid multivariate distribution.

For a count random vector $\boldsymbol{X} = (X_v,\,v\in\mathcal{V})$, let $p_{\boldsymbol{X}}$ denote its joint probability mass function, that is,
$p_{\boldsymbol{X}}(\boldsymbol{x})= \mathrm{Pr}\left(\bigcap_{v\in\mathcal{V}} \left\{X_v = x_v\right\}\right)$, $\boldsymbol{x} \in \mathbb{N}^d$.

\begin{theorem}[Joint probability mass function]
	\label{th:JointPMF} For a tree $\mathcal{T}=(\mathcal{V},\mathcal{E})$, consider $\boldsymbol{N} = (N_v, \, v \in \mathcal{V})\sim\text{MPMRF}(\lambda,\boldsymbol{\alpha}, \mathcal{T})$. Choose any root $r\in\mathcal{V}$; the joint probability mass function of ${\boldsymbol{N}}$ is then given by
	\begin{equation}
					p_{\boldsymbol{N}}(\boldsymbol{x}) =  \frac{\mathrm{e}^{-\lambda}\lambda^{x_r}}{x_r!} \prod_{v\in\mathcal{V}\backslash\{r\}} \sum_{k=0}^{\min(x_{\mathrm{pa}(v)},x_v)} \frac{{ \mathrm{e}}^{-\lambda(1-\alpha_{(\mathrm{pa}(v),v)})} (\lambda(1-\alpha_{(\mathrm{pa}(v),v)}))^{x_v-k}}{(x_v-k)!}
     %\\ &\times 
     \binom{x_{\mathrm{pa}(v)}}{k} \alpha_{(\mathrm{pa}(v),v)}^k (1-\alpha_{(\mathrm{pa}(v),v)})^{x_{\mathrm{pa}(v)}-k},
     \label{eq:jointpmf}
	\end{equation}
 for $\boldsymbol{x}\in\mathbb{N}^d$. 
\end{theorem}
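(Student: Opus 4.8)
The plan is to exploit the conditioning order supplied by the rooted stochastic representation \eqref{eq:StoDynamics}, moving outward from the root, and to combine it with the Markov property \eqref{eq:globalMarkov} established in Theorem~\ref{th:StoDynamics}. First I would fix a root $r$ and order the vertices so that every vertex comes after its parent (a breadth-first or depth-first ordering of $\mathcal{T}_r$). Writing the joint pmf as a telescoping product of conditionals along this ordering, the global Markov property collapses each conditional to a dependence on the parent alone, namely
\begin{equation*}
	p_{\boldsymbol{N}}(\boldsymbol{x}) = \mathrm{Pr}(N_r = x_r) \prod_{v\in\mathcal{V}\backslash\{r\}} \mathrm{Pr}\!\left(N_v = x_v \mid N_{\mathrm{pa}(v)} = x_{\mathrm{pa}(v)}\right).
\end{equation*}
The leading factor is $\mathrm{e}^{-\lambda}\lambda^{x_r}/x_r!$ since $N_r\sim$~Poisson$(\lambda)$ by Theorem~\ref{th:StoDynamics}. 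It then remains to identify each transition probability $\mathrm{Pr}(N_v = x_v \mid N_{\mathrm{pa}(v)} = x_{\mathrm{pa}(v)})$ with the inner sum in \eqref{eq:jointpmf}.

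The second step is a direct computation of that transition probability from \eqref{eq:StoDynamics}. Conditionally on $N_{\mathrm{pa}(v)} = x_{\mathrm{pa}(v)}$, the propagation part $\alpha_{(\mathrm{pa}(v),v)}\circ N_{\mathrm{pa}(v)}$ is, by definition \eqref{eq:BinomThinning} of the thinning operator, a sum of $x_{\mathrm{pa}(v)}$ i.i.d.\ Bernoulli$(\alpha_{(\mathrm{pa}(v),v)})$ variables, hence Binomial$(x_{\mathrm{pa}(v)},\alpha_{(\mathrm{pa}(v),v)})$; the innovation part $L_v\sim$~Poisson$(\lambda(1-\alpha_{(\mathrm{pa}(v),v)}))$ is independent of it. Conditioning on the value $k$ taken by the binomial part and summing over $k$ from $0$ to $\min(x_{\mathrm{pa}(v)},x_v)$ (the upper limit because the binomial cannot exceed $x_{\mathrm{pa}(v)}$ and the remaining Poisson mass $x_v-k$ must be nonnegative), a convolution gives exactly the product of the Poisson pmf at $x_v-k$ and the binomial pmf at $k$ appearing in \eqref{eq:jointpmf}. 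Multiplying these transition factors over $v\in\mathcal{V}\backslash\{r\}$ and including the root factor yields the claimed formula.

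One point deserving care is the justification that the telescoping product genuinely reduces to parent-only conditionals: I need that, in the chosen ordering, the set of already-listed vertices other than $\mathrm{pa}(v)$ is separated from $v$ by $\{\mathrm{pa}(v)\}$ in $\mathcal{T}$, so that \eqref{eq:globalMarkov} applies with $S = \{\mathrm{pa}(v)\}$. This holds because on a tree the unique path from $v$ to any earlier vertex $u\neq\mathrm{pa}(v)$ passes through $\mathrm{pa}(v)$ (the ordering places all of $v$'s descendants later than $v$). A second, more pedestrian, point is checking the summation bounds and the vanishing of out-of-range terms so that no spurious contributions appear; this is routine given that binomial and Poisson pmfs vanish outside their supports, but it should be stated. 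I expect the identification of the conditional law of $N_v$ given $N_{\mathrm{pa}(v)}$ as a binomial–Poisson convolution to be the technical heart, while the Markov collapse of the conditioning chain is the main structural idea; neither step is deep, and the independence of $\alpha_{(\mathrm{pa}(v),v)}\circ N_{\mathrm{pa}(v)}$ and $L_v$ used here is exactly the independence already invoked in the proof of Theorem~\ref{th:StoDynamics}.
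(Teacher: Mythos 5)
Your proposal is correct and follows essentially the same route as the paper: factorize the joint pmf into parent-conditional terms via the Markov property and successive conditioning away from the root, then identify each conditional as the convolution of a Binomial$(x_{\mathrm{pa}(v)},\alpha_{(\mathrm{pa}(v),v)})$ thinning term with an independent Poisson$(\lambda(1-\alpha_{(\mathrm{pa}(v),v)}))$ innovation. The extra care you devote to the separation argument and the summation bounds is sound but does not change the argument.
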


\begin{proof}
 Given the global Markov property, assuming the chosen root $r$, and by successive conditioning, the joint probability mass function 
 of $\boldsymbol{N}$ admits the following representation: 
	\begin{equation}
		p_{\boldsymbol{N}}(\boldsymbol{x}) = p_{N_r}(x_r) \prod_{v\in\mathcal{V}\backslash\{r\}} p_{N_v|N_{\mathrm{pa}(v)} = x_{\mathrm{pa}(v)}}(x_v),
		\label{eq:MarkovProba}\\
	\end{equation}
	which, given (\ref{eq:StoDynamics}), may be rewritten using the convolution product as follows: 
	\begin{equation*}
		p_{\boldsymbol{N}}(\boldsymbol{x}) = p_{N_r}(x_r) \prod_{v\in\mathcal{V}\backslash\{r\}} \sum_{k=0}^{\min(x_v,x_{\mathrm{pa}(v)})} p_{L_v}(x_v-k) p_{\alpha_{(\mathrm{pa}(v),v)}\circ x_{\mathrm{pa}(v)}} (k),\quad\quad \boldsymbol{x} \in \mathbb{N}^d.
	\end{equation*}
	Given $L_v\sim$~Poisson$(\lambda(1-\alpha_{(\mathrm{pa}(v),v)}))$, $v\in\mathcal{V}$, and that $(\alpha_{(\mathrm{pa}(v),v)}\circ x_{\mathrm{pa}(v)})$ follows a binomial distribution of size parameter $x_{\mathrm{pa}(v)}$ and probability parameter $\alpha_{(\mathrm{pa}(v),v)}$ due to (\ref{eq:BinomThinning}), we obtain the desired result. 
\end{proof}

One can recognize in (\ref{eq:MarkovProba}) the expression of the joint probability mass function associated to a Gibbs distribution factorizing on $\mathcal{T}$, whose forthcoming definition is provided with respect to \cite{koller2009probabilistic}. 
\begin{deff}[Gibbs distribution]
	Let $\mathcal{V}_1,\ldots, \mathcal{V}_m$ be subsets of $\mathcal{V}$, $m\in\mathbb{N}^*$, and $\phi_1,\ldots,\phi_m$ be some functions $\phi_i : \mathbb{R}^{|\mathcal{V}_i|}\mapsto\mathbb{R}$, $i\in\{1,\ldots,m\}$. A vector of random variables $\boldsymbol{X}=(X_v,\, v\in\mathcal{V})$ follows a Gibbs distribution if its joint density (or joint probability mass function) may be expressed as 
	\begin{equation*}
		f_{\boldsymbol{X}}(\boldsymbol{x}) = \frac{1}{Z} \prod_{i=1}^{m}\phi_i((x_v,\, v\in\mathcal{V}_i)),
	\end{equation*}   
	where $Z$ is a normalizing constant. A Gibbs distribution factorizes on a graph if $\,\mathcal{V}_1,\ldots,\mathcal{V}_m$ are all cliques of that graph. Note that, on a tree, cliques are any single vertex or pair of vertices connected by an edge. 
\end{deff}
Gibbs distributions are tied to MRFs by the Hammersley-Clifford theorem, stating that any MRF on a graph $\mathcal{G}$ follows a Gibbs distribution factorizing on $\mathcal{G}$. Therefore, using the global Markov property characterizing MRFs to derive (\ref{eq:MarkovProba}) ensured obtaining a Gibbs distribution. Also, notice that the resulting joint probability mass function in (\ref{eq:jointpmf}) does not have the intractable normalizing factor $Z$ typical to Gibbs distributions. In our context, factors $\phi_1,\ldots,\phi_m$ in ($\ref{eq:MarkovProba}$) are probability mass functions, namely $\phi_{r}(x_r) = p_{N_r}(x_r)$ and $\phi_{v}(x_{\mathrm{pa}(v)}, x_v) = p_{N_v|N_{\mathrm{pa}(v)} = x_{\mathrm{pa}(v)}}(x_v)$, $v\in\mathcal{V}\backslash\{r\}$. This is feasible because the stochastic representation ($\ref{eq:StoDynamics}$) in Theorem~\ref{th:StoDynamics} defines itself on cliques of the tree (a vertex and its parent form a clique).  Such a feature is incidentally tied to the fixed marginal distributions; it was indeed envisioned by the author of \cite{pickard1980unilateral} to attest the existence of lattice-structured stationary MRFs.
The absence of intractable normalizing factor in (\ref{eq:jointpmf}) eases its computation: as for the derivation of marginals, one directly has the result at hand and does not have to resort to message passing algorithms. For more insight on Gibbs distributions, one may refer to Chapter 4 of \cite{koller2009probabilistic}.

The stochastic construction of (\ref{eq:StoDynamics}) in Theorem~\ref{th:StoDynamics} allows to derive the joint probability generating function of $\boldsymbol{N}$, that is, $\mathcal{P}_{\boldsymbol{N}}(\boldsymbol{t}) = \mathrm{E}[t_1^{N_1} \dots t_d^{N_d}]$. The next theorem offers a recursive expression of the joint probability generating function of $\boldsymbol{N}$. 
This proves useful for numerical applications using Algorithm~\ref{algo:fftM} in Section~\ref{sect:Sum}. 
For more details on joint probability generating functions and their properties, see Appendix~A in \cite{axelrod2015branching}, and Chapter~3 of \cite{flajolet2009analytic}.
For notation purposes, let $v\mathrm{dsc}(v)$ abbreviate the set $\{v\}\cup \mathrm{dsc}(v)$ and, for a vector $\boldsymbol{t}\in[-1,1]^d$, let $\boldsymbol{t}_{v\mathrm{dsc}(v)}= \left(t_j,\, j\in v\mathrm{dsc}(v)\right)$, $v\in\mathcal{V}$; evidently, $\boldsymbol{t}_{r\mathrm{dsc}(r)} = \boldsymbol{t}$ for a root $r\in\mathcal{V}$.

\begin{theorem}[Joint probability generating function]
	For a tree $\mathcal{T}=(\mathcal{V},\mathcal{E})$, consider $\boldsymbol{N} = (N_v, \, v \in \mathcal{V})\sim\text{MPMRF}(\lambda,\boldsymbol{\alpha}, \mathcal{T})$. Choose any root $r\in\mathcal{V}$; the joint probability generating function of ${\boldsymbol{N}}$ is then given by
	\begin{equation}
		\mathcal{P}_{\boldsymbol{N}}({\boldsymbol{t}}) = \prod_{v \in \mathcal{V}} \mathrm{e}^{\lambda(1-\alpha_{(\mathrm{pa}(v),v)})\left(\eta^{\mathcal{T}_r}_v(\boldsymbol{t}_{v\mathrm{dsc}(v)})-1\right)},
		\quad \boldsymbol{t} \in [-1,1]^d, \label{eq:JointPGF}
	\end{equation}
	where $\eta^{\mathcal{T}_r}_v(\boldsymbol{t}_{v\mathrm{dsc}(v)})$, $v\in\mathcal{V}$, is a polynomial recursively defined, starting from the leaves of $\mathcal{T}_r$, given by
	\begin{equation}
		\eta^{\mathcal{T}_r}_v(\boldsymbol{t}_{v\mathrm{dsc}(v)})= t_v \prod_{j \in \mathrm{ch}(v)}\left(1 - \alpha_{(v,j)} + \alpha_{(v,j)} 	\eta^{\mathcal{T}_r}_j(\boldsymbol{t}_{j\mathrm{dsc}(j)})\right). \label{eq:jointpgf-h}
	\end{equation}
	If $\mathrm{ch}(v)$ is empty, then $\eta^{\mathcal{T}_r}_v(\boldsymbol{t}_{v\mathrm{dsc}(v)}) = t_v$.  
	\label{th:JointPGF}
\end{theorem}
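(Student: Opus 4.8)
The plan is to establish, by induction over the vertices of $\mathcal{T}_r$ running from the leaves toward the root, a closed form for the conditional ``subtree pgf''
\[
 f_v(n) := \mathrm{E}\!\left[\,\prod_{j\in v\mathrm{dsc}(v)} t_j^{N_j}\ \middle|\ N_v = n\,\right],\qquad v\in\mathcal{V},\ n\in\mathbb{N},
\]
and then to recover $\mathcal{P}_{\boldsymbol{N}}$ by averaging $f_r$ against the law of $N_r\sim\mathrm{Poisson}(\lambda)$. Concretely, I would prove that $f_v(n) = C_v\,\eta_v^{\,n}$, where $\eta_v := \eta^{\mathcal{T}_r}_v(\boldsymbol{t}_{v\mathrm{dsc}(v)})$ obeys the recursion (\ref{eq:jointpgf-h}) and $C_v := \prod_{u\in\mathrm{dsc}(v)}\mathrm{e}^{\lambda(1-\alpha_{(\mathrm{pa}(u),u)})(\eta_u-1)}$.

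First I would dispatch the base case: if $v$ is a leaf, then $v\mathrm{dsc}(v)=\{v\}$ and $f_v(n)=t_v^{\,n}$, which is $C_v\,\eta_v^{\,n}$ with $\eta_v=t_v$ and $C_v=1$ (empty product), in agreement with the ``$\mathrm{ch}(v)$ empty'' clause of the theorem.

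For the inductive step at a non-leaf $v$, I would condition on $N_v=n$ and invoke the Markov property established in Theorem~\ref{th:StoDynamics}: given $N_v$, the subfamilies $(N_i,\ i\in j\mathrm{dsc}(j))$ attached to distinct children $j\in\mathrm{ch}(v)$ are mutually independent, and each is generated by the dynamics (\ref{eq:StoDynamics}) restricted to the subtree rooted at $j$, with $N_j = \alpha_{(v,j)}\circ n + L_j$ (an independent binomial thinning of $n$ plus the independent innovation $L_j\sim\mathrm{Poisson}(\lambda(1-\alpha_{(v,j)}))$). Hence $f_v(n) = t_v^{\,n}\prod_{j\in\mathrm{ch}(v)}\mathrm{E}[\,f_j(N_j)\mid N_v=n\,]$. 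Using the inductive hypothesis $f_j(m)=C_j\,\eta_j^{\,m}$ and splitting $N_j$ into its thinned and innovation parts, Theorem~\hyperref[th:PropertyBinThinOp]{C.\ref{th:PropertyBinThinOp}(d)} gives $\mathrm{E}[\eta_j^{\,\alpha_{(v,j)}\circ n}]=(1-\alpha_{(v,j)}+\alpha_{(v,j)}\eta_j)^n$ and the Poisson pgf gives $\mathrm{E}[\eta_j^{\,L_j}]=\mathrm{e}^{\lambda(1-\alpha_{(v,j)})(\eta_j-1)}$; regrouping yields $f_v(n)=C_v\,\eta_v^{\,n}$ with $\eta_v = t_v\prod_{j\in\mathrm{ch}(v)}(1-\alpha_{(v,j)}+\alpha_{(v,j)}\eta_j)$ and $C_v = \prod_{j\in\mathrm{ch}(v)}C_j\,\mathrm{e}^{\lambda(1-\alpha_{(v,j)})(\eta_j-1)} = \prod_{u\in\mathrm{dsc}(v)}\mathrm{e}^{\lambda(1-\alpha_{(\mathrm{pa}(u),u)})(\eta_u-1)}$, which closes the induction. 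Taking $v=r$, so that $r\mathrm{dsc}(r)=\mathcal{V}$ and $\boldsymbol{t}_{r\mathrm{dsc}(r)}=\boldsymbol{t}$, and averaging over $N_r\sim\mathrm{Poisson}(\lambda)$ gives $\mathcal{P}_{\boldsymbol{N}}(\boldsymbol{t}) = \mathrm{E}[f_r(N_r)] = C_r\,\mathrm{e}^{\lambda(\eta_r-1)}$, and the factor $\mathrm{e}^{\lambda(\eta_r-1)}$ is absorbed into the product over $\mathcal{V}$ by the convention $\alpha_{(\mathrm{pa}(r),r)}=0$, yielding (\ref{eq:JointPGF}).

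I expect the one genuinely delicate point to be the justification of the conditional factorization in the inductive step: that conditioning on $N_v$ renders the descendant subtrees independent \emph{and} leaves each of them governed by exactly the original dynamics, so that $f_j$ is the same function no matter the ambient tree. This is precisely where the global Markov property of Theorem~\ref{th:StoDynamics} and the ``each step depends only on the parent'' structure of (\ref{eq:StoDynamics}) enter; alternatively, one can invoke Corollary~\ref{th:Pruning} applied to the subtree rooted at $j$. Everything else is routine pgf bookkeeping for Poisson and binomial-thinned variables.
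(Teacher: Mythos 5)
Your proof is correct and follows essentially the same route as the paper's: an induction from the leaves toward the root that establishes a closed form for a conditional subtree pgf, using the Markov property to factor over children and the same binomial-thinning and Poisson pgf computations. The only difference is that you condition each subtree functional on $N_v$ itself rather than on $N_{\mathrm{pa}(v)}$ as the paper does (your $f_v$ and the paper's $\mathcal{H}_v$ are related by $\mathcal{H}_v=\mathrm{E}[f_v(N_v)\mid N_{\mathrm{pa}(v)}]$), which merely shifts the thinning step to the parent's inductive step and slightly streamlines the treatment of the root.
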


\begin{proof}
	Let $\mathcal{H}_v(\boldsymbol{t}_{v\mathrm{dsc}(v)}) = \mathrm{E}\left[\left.t_v^{N_v} \prod_{j\in \mathrm{dsc}(v)} t_j^{N_j}\right| N_{\mathrm{pa}(v)}\right]$, $v\in\mathcal{V}$. Note that $\mathcal{P}_{\boldsymbol{N}}(\boldsymbol{t}) = \mathcal{H}_r(\boldsymbol{t}_{r\mathrm{dsc(r)}})$ with the convention $\alpha_{(\mathrm{pa}(r),r)} = 0$. Indeed, conditioning on an imaginary $N_{\mathrm{pa}(r)}$ provides no information given (\ref{eq:StoDynamics}) and the global Markov property. First, we reexpress $\mathcal{H}_v$ recursively on the values of $\{\mathcal{H}_j,\, j\in\mathrm{ch}(v)\}$. Reorganizing the product on descendant vertices, we have
	\begin{equation}
		\mathcal{H}_v\left(\boldsymbol{t}_{v\mathrm{dsc}(v)}\right) = \mathrm{E}\left[\left. t_v^{N_v} \prod_{j\in\mathrm{ch}(v)} \left(t_j^{{N_j}}\prod_{i\in\mathrm{dsc}(j)}{t_i^{N_i}}\right)\right| N_{\mathrm{pa}(v)} \right] ,\quad \boldsymbol{t}\in[-1,1]^d.
		\label{eq:JointPGF-new-1}
	\end{equation}
	Due to $\boldsymbol{N}$ satisfying the global Markov property, $N_{\mathrm{pa}(v)}$ and $N_{j}$, $j\in\mathrm{ch}(v)$, are all conditionally independent of one another knowing $N_v$. Therefore, by conditioning on $N_v$, (\ref{eq:JointPGF-new-1}) becomes
	\begin{align}
		\mathcal{H}_v\left(\boldsymbol{t}_{v\mathrm{dsc}(v)}\right) &= \mathrm{E}\left[ \mathrm{E}\left[\left. t_v^{N_v} \prod_{j\in\mathrm{ch}(v)} \left(t_j^{{N_j}}\prod_{i\in\mathrm{dsc}(j)}{t_i^{N_i}}\right)\right| N_{\mathrm{pa}(v)}, N_v\right]\right] = \mathrm{E}\left[ \left. t_v^{N_v} \prod_{j\in\mathrm{ch}(v)}\mathrm{E}\left[\left.t_j^{N_j}\prod_{i\in\mathrm{dsc}(j)}{t_i^{N_i}}\right| N_v \right] \right| N_{\mathrm{pa}(v)}\right] \notag\\
		&= \mathrm{E}\left[ \left. t_v^{N_v} \prod_{j\in\mathrm{ch}(v)} \mathcal{H}_j(\boldsymbol{t}_{j\mathrm{dsc}(j)}) \right| N_{\mathrm{pa}(v)}\right] ,\quad \boldsymbol{t}\in[-1,1]^d.
		\label{eq:JointPGF-new-2}
	\end{align}
	Next, we proceed by induction to prove that $\mathcal{H}_v\left(\boldsymbol{t}_{v\mathrm{dsc}(v)}\right)$, for any $v\in\mathcal{V}$, admits the following representation:
	{\setlength{\abovedisplayskip}{3pt}
		\setlength{\belowdisplayskip}{3pt}
		\begin{equation}
			{\scalebox{0.92}{
					\begin{minipage}{\linewidth}
						\begin{equation*}
							\mathcal{H}_v\left(\boldsymbol{t}_{v\mathrm{dsc}(v)}\right) = \left(1-\alpha_{(\mathrm{pa}(v),v)} + \alpha_{(\mathrm{pa}(v),v)}	\eta^{\mathcal{T}_r}_v(\boldsymbol{t}_{v\mathrm{dsc}(v)})\right)^{N_{\mathrm{pa}(v)}} \prod_{j\in (\{v\}\cup\mathrm{dsc}(v)\})}\mathrm{e}^{\lambda(1-\alpha_{(\mathrm{pa}(j),j)})\left(	\eta^{\mathcal{T}_r}_j(\boldsymbol{t}_{j\mathrm{dsc}(j)})-1\right)},
							\label{eq:JointPGF-new-3}
			\end{equation*}\end{minipage}}}%\quad\quad\quad
	\end{equation}}\\
	for $\boldsymbol{t}\in[-1,1]^d$. We take the leaves of the rooted tree $\mathcal{T}_r$ as starting points. Let $\ell$ be any leaf. Then,
	\begin{equation}
		\mathcal{H}_{\ell}\left(\boldsymbol{t}_{\ell\mathrm{dsc}(\ell)}\right) = \mathrm{E}\left[\left. t_{\ell}^{N_{\ell}}\right| N_{\mathrm{pa}(\ell)}\right] = \mathrm{E}\left[\left.t_{\ell}^{\alpha_{(\mathrm{pa}(\ell),\ell)}\circ N_{\mathrm{pa}(\ell)}}\right| N_{\mathrm{pa}(\ell)}\right]\times \mathrm{E}\left[t_{\ell}^{L_{\ell}}\right],\quad \boldsymbol{t} \in [-1,1]^d.
		\label{eq:JointPGF-new-4}
	\end{equation}
	The last equality in (\ref{eq:JointPGF-new-4}) is obtained by the independence of $L_{\ell}$ and $N_{\mathrm{pa}(\ell)}$. Given $L_{\ell}$ follows a Poisson distribution of parameter $\lambda(1-\alpha_{(\mathrm{pa}(\ell),\ell)})$ and $(\alpha_{(\mathrm{pa}(\ell),\ell)}\circ N_{\mathrm{pa}(\ell)}|N_{\mathrm{pa}(\ell)})$ follows a binomial distribution of parameters $N_{\mathrm{pa}(\ell)}$ and $\alpha_{(\mathrm{pa}(\ell),\ell)}$, we obtain
	\begin{align}
		\mathcal{H}_{\ell}\left(\boldsymbol{t}_{\ell\mathrm{dsc}(\ell)}\right) = \left(1-\alpha_{(\mathrm{pa}(\ell),\ell)} + \alpha_{(\mathrm{pa}(\ell),\ell)}t_{l}\right)^{N_{\mathrm{pa}(\ell)}} \times \mathrm{e}^{\lambda(1-\alpha_{(\mathrm{pa}(\ell),\ell)})(t_{\ell}-1)},\quad \boldsymbol{t} \in [-1,1]^d.
		\label{eq:JointPGF-new-5}
	\end{align}
	We note that (\ref{eq:JointPGF-new-5}) agrees with (\ref{eq:JointPGF-new-3}) since $\eta_{\ell}^{\mathcal{T}_r}(\boldsymbol{t}_{\ell\mathrm{dsc}(\ell)}) = t_{\ell}$ given $\ell$ is a leaf of $\mathcal{T}_r$. Now, for any vertex $k\in\mathcal{V}$, assume (\ref{eq:JointPGF-new-3}) holds for any $j\in\mathrm{ch}(k)$. Then, (\ref{eq:JointPGF-new-2}) becomes 
						\begin{equation}
							\mathcal{H}_k\left(\boldsymbol{t}_{k\mathrm{dsc}(k)}\right) 
							= \mathrm{E}\left[\left. t_{k}^{N_{k}} \prod_{j \in \mathrm{ch}(k)} \left(1-\alpha_{(k,j)} + \alpha_{(k,j)}	\eta^{\mathcal{T}_r}_j(\boldsymbol{t}_{j\mathrm{dsc}(j)})\right)^{N_{k}} \times \mathrm{e}^{\lambda(1-\alpha_{(k,j)}) \left(\eta^{\mathcal{T}_r}_j\left(\boldsymbol{t}_{j\mathrm{dsc}(j)}\right)-1\right)} \right|N_{\mathrm{pa}(k)} \right]. \label{eq:JointPGF-new-6}
			\end{equation}
	Given the independence of $L_k$ and $N_{\mathrm{pa}(k)}$, (\ref{eq:JointPGF-new-6}) is rewritten as 
	\begin{align}
							\mathcal{H}_k\left(\boldsymbol{t}_{k\mathrm{dsc}(k)}\right) &= \mathrm{E}\left[\left.\left(t_{k}\prod_{j \in \mathrm{ch}(k)} \left(1-\alpha_{(k,j)} + \alpha_{(k,j)}	\eta^{\mathcal{T}_r}_j(\boldsymbol{t}_{j\mathrm{dsc}(j)})\right)\right)^{\alpha_{(\mathrm{pa}(k),k)}\circ N_{\mathrm{pa}(k)}} \right|N_{\mathrm{pa}(k)}\right]\notag\\
			&\quad\times  \mathrm{E}\left[\left(t_{k}\prod_{j \in \mathrm{ch}(k)} \left(1-\alpha_{(k,j)} + \alpha_{(k,j)}	\eta^{\mathcal{T}_r}_j(\boldsymbol{t}_{j\mathrm{dsc}(j)})\right)\right)^{L_{k}} \right]\prod_{j\in \mathrm{dsc}(k)}\mathrm{e}^{\lambda(1-\alpha_{(\mathrm{pa}(j),j)})\left(\eta^{\mathcal{T}_r}_j(\boldsymbol{t}_{j\mathrm{dsc}(j)})-1\right)},
							\label{eq:JointPGF-new-7}
			\end{align}
	for $\boldsymbol{t} \in [-1,1]^d$. Since expectations in (\ref{eq:JointPGF-new-7}) are probability generating functions of $\left(\alpha_{(\mathrm{pa}(k),k)}\circ N_{\mathrm{pa}(k)}|N_{\mathrm{pa}(k)}\right)$ and $L_k$ respectively, we obtain
		\begin{align}
		\mathcal{H}_k\left(\boldsymbol{t}_{k\mathrm{dsc}(k)}\right) = &\left(1-\alpha_{(\mathrm{pa}(k),k)} + \alpha_{(\mathrm{pa}(k),k)}\left(t_{k}\prod_{j \in \mathrm{ch}(k)} \left(1-\alpha_{(k,j)} + \alpha_{(k,j)}	\eta^{\mathcal{T}_r}_j(\boldsymbol{t}_{j\mathrm{dsc}(j)})\right)\right)\right)^{N_{\mathrm{pa}(k)}}\notag\\
			&\quad\times  \mathrm{e}^{\lambda(1-\alpha_{(\mathrm{pa}(k),k)})\left(t_{k}\prod_{j \in \mathrm{ch}(k)} (1-\alpha_{(k,j)} + \alpha_{(k,j)}	\eta^{\mathcal{T}_r}_j(\boldsymbol{t}_{j\mathrm{dsc}(j)}))-1\right)}\prod_{j\in \mathrm{dsc}(k)}\mathrm{e}^{\lambda(1-\alpha_{(\mathrm{pa}(j),j)})\left(\eta^{\mathcal{T}_r}_j(\boldsymbol{t}_{j\mathrm{dsc}(j)})-1\right)},
								\label{eq:JointPGF-new-8}
				\end{align}
		for $\boldsymbol{t} \in [-1,1]^d$.
	 Relation in (\ref{eq:JointPGF-new-3}) is retrieved from (\ref{eq:JointPGF-new-8}) by recognizing $\eta_{k}^{\mathcal{T}_r}(\boldsymbol{t}_{k\mathrm{dsc}(k)})$ as given by (\ref{eq:jointpgf-h}). This inductively proves (\ref{eq:JointPGF-new-3}).

	Since $\mathcal{P}_{\boldsymbol{N}}(\boldsymbol{t})=\mathcal{H}_{r}(\boldsymbol{t}_{r\mathrm{dsc}(r)})$ with the convention $\alpha_{(\mathrm{pa}(r),r)}=0$, we then derive the expression of the joint probability generating function from (\ref{eq:JointPGF-new-3}) :
	\begin{align*}
		\mathcal{P}_{\boldsymbol{N}}(\boldsymbol{t}) &= \left(1-\alpha_{(\mathrm{pa}(r),r)} + \alpha_{(\mathrm{pa}(r),r)}	\eta^{\mathcal{T}_r}_r(\boldsymbol{t}_{r\mathrm{dsc}(r)})\right)^{N_{\mathrm{pa}(r)}} \prod_{j\in (\{r\}\cup\mathrm{dsc}(r))}\mathrm{e}^{\lambda(1-\alpha_{(\mathrm{pa}(j),j)})\left(	\eta^{\mathcal{T}_r}_j(\boldsymbol{t}_{j\mathrm{dsc}(j)})-1\right)}\notag\\
		&= (1)^{N_{\mathrm{pa}(r)}} \prod_{j\in \mathcal{V}}\mathrm{e}^{\lambda(1-\alpha_{(\mathrm{pa}(j),j)})\left(	\eta^{\mathcal{T}_r}_j(\boldsymbol{t}_{j\mathrm{dsc}(j)})-1\right)} = \prod_{j\in \mathcal{V}}\mathrm{e}^{\lambda(1-\alpha_{(\mathrm{pa}(j),j)})\left(	\eta^{\mathcal{T}_r}_j(\boldsymbol{t}_{j\mathrm{dsc}(j)})-1\right)},\quad \boldsymbol{t}\in[-1,1],
	\end{align*}
	whichever the value of that imaginary $N_{\mathrm{pa}(r)}$. This grants the desired result. 
\end{proof}

Note that $\eta_{v}^{\mathcal{T}_r}$ is itself a joint probability generating function for any $v\in\mathcal{V}$. More precisely, it is the joint probability generating function of a multitype time-varying Galton-Watson process with each vertex being its own type and having the offspring distribution be binomial with success probability $\alpha$ and size parameter the number of children of the vertex. In this regard, one may consult Sections 1.13.1, 1.13.5 and 2.3 of \cite{harris1963theory}.

Let $\boldsymbol{G}_{v}^{\mathcal{T}_r}= (G_{v,j}^{\mathcal{T}_r}, {\scriptstyle j\in(\{v\}\cup\mathrm{dsc}(v))})$ be a vector of random variables whose joint probability generating function is given by $\eta_{v}^{\mathcal{T}_r}(\boldsymbol{t}_{v\mathrm{dsc}(v)})$.
A stochastic interpretation of $\boldsymbol{G}_{v}^{\mathcal{T}_r}$ is as follows: an event triggered at vertex $v$ has the possibility to propagate through its descendants following the edges of the rooted tree $\mathcal{T}_r$, thus creating the looks of a splatter around $v$. From (\ref{eq:JointPGF}), we identify that $\boldsymbol{N}$ is stochastically rewritten as the componentwise sum of $d$ independent random vectors following a multivariate compound Poisson distribution, one associated to each vertex in $\mathcal{V}$. Moreover, one notices that the counting distribution parameter of each compound Poisson distribution is $\lambda(1-\alpha_{(\mathrm{pa}(v),v)})$, with  $\alpha_{(\mathrm{pa}(r),r)}=0$, and so the counting distributions may be describing the behavior of $\boldsymbol{L}$ defined as in Theorem \ref{th:StoDynamics}. Therefore, another stochastic construction for $\boldsymbol{N}$ is 
\begin{equation}
	\boldsymbol{N} = \sum_{v\in\mathcal{V}}\sum_{k=1}^{L_v} \boldsymbol{G}_{v,(k)}^{\mathcal{T}_r},
	\label{eq:AlternateConstr}
\end{equation}
with sums taken componentwise, and $\{\boldsymbol{G}_{v,(k)}^{\mathcal{T}_r}, k\in\mathbb{N}^*\}$ being a sequence of independent vectors of random variables having the same joint distribution as $\boldsymbol{G}_{v}^{\mathcal{T}_r}$, $v\in\mathcal{V}$. This represents summing number of events according to their origin in terms of propagation.

Expressing in superscript the rooted tree on which $\eta_{v}^{\mathcal{T}_r}$ is defined may seem redundant since $\mathcal{T}_r$ is clearly specified in the formulation of Theorem~\ref{th:JointPGF}.
 It will provide clarity later on, notably for Section~\ref{sect:ExpAlloc}.
Incidently, the same development to derive the expression of $\mathcal{P}_{\boldsymbol{N}}(\boldsymbol{t})$ can be done for any other root $r\in \mathcal{V}$. The resulting joint probability generating function is the same, only expressed differently. Indeed, the choice of the root has no stochastic incidence, as stated in Theorem~\ref{th:Reversibility}; choosing a root only serves to define parents and children for the construction of the joint probability generating function through Theorem~\ref{th:JointPGF}. For illustration purposes, we derive the joint probability generating function of a three-vertex series tree with root $r\in\{1, 2, 3\}$:
\begin{equation}
 \mathcal{P}_{\boldsymbol{N}}(\boldsymbol{t})=
							\begin{cases}
								\mathrm{e}^{\lambda(t_1(1-\alpha_{(1,2)}+\alpha_{(1,2)}t_2(1-\alpha_{(2,3)}+\alpha_{(2,3)}t_3))-1)}
        %\\ \quad 
        \times\mathrm{e}^{\lambda(1-\alpha_{(1,2)})(t_2(1-\alpha_{(2,3)}+\alpha_{(2,3)}t_3)-1)}
        \times\mathrm{e}^{\lambda(1-\alpha_{(2,3)})(t_3-1)},&
        {r = 1;} \\ 
								\mathrm{e}^{\lambda(t_2(1-\alpha_{(1,2)}+\alpha_{(1,2)}t_1)(1-\alpha_{(2,3)}+\alpha_{(2,3)}t_3)-1)}
        %\\ \quad
        \times\mathrm{e}^{\lambda(1-\alpha_{(1,2)})(t_1-1)}
        \times\mathrm{e}^{\lambda(1-\alpha_{(2,3)})(t_3-1)},&{ r = 2;} \\ 
								\mathrm{e}^{\lambda(t_3(1-\alpha_{(2,3)}+\alpha_{(2,3)}t_2(1-\alpha_{(1,2)}+\alpha_{(1,2)}t_1))-1)}
        %\\ \quad
								\times\mathrm{e}^{\lambda(1-\alpha_{(2,3)})(t_2(1-\alpha_{(1,2)}+\alpha_{(1,2)}t_1)-1)}
        \times\mathrm{e}^{\lambda(1-\alpha_{(1,2)})(t_1-1)},& { r = 3},
								\label{eq:SameRoot}
        \end{cases}
	\end{equation}
	for $\boldsymbol{t}\in[-1,1]^3$. In appearance, the three joint probability generating functions in \eqref{eq:SameRoot} seem distinct; however, when developed, they all lead to the same following expression:
    \begin{align*}
					\mathcal{P}_{\boldsymbol{N}}(\boldsymbol{t}) &= \mathrm{e}^{\lambda(\alpha_{(1,2)}\alpha_{(2,3)}t_1t_2t_3 + \alpha_{(1,2)}(1-\alpha_{(2,3)})t_1t_2 %\notag\\&\quad
						+ (1-\alpha_{(1,2)})\alpha_{(2,3)} t_2t_3 + (1-\alpha_{(1,2)})t_1 + (1-\alpha_{(1,2)})(1-\alpha_{(2,3)})t_2 + (1-\alpha_{(2,3)})t_3 + \alpha_{(1,2)} + \alpha_{(2,3)} - 3)}. 
	\end{align*}

The expression of the joint probability generating function derived in Theorem~\ref{th:JointPGF} shows that independence and comonotonicity are two limit cases of our proposed family of multivariate distributions. If $\alpha_e = 0$ for all $e \in \mathcal{E}$, then $\boldsymbol{N}$ is a vector of independent, identically distributed Poisson random variables. Indeed, from (\ref{eq:jointpgf-h}), we obtain $\eta^{\mathcal{T}_r}_v(\boldsymbol{t}_{v\mathrm{dsc}(v)}) = t_v$, for every $v\in\mathcal{V}$, leading (\ref{eq:JointPGF}) to become $
\mathcal{P}_{\boldsymbol{N}}(\boldsymbol{t}) = \prod_{v\in\mathcal{V}}\mathrm{e}^{\lambda(t_v-1)}$, $\boldsymbol{t} \in [-1,1]^d$.
If $\alpha_e = 1$ for all $e \in \mathcal{E}$, then $\boldsymbol{N}$ is a vector of comonotonic Poisson random variables, as its joint probability generating function, provided Theorem~\ref{th:JointPGF}, is given by
$\mathcal{P}_{\boldsymbol{N}}(\boldsymbol{t}) = \mathrm{e}^{\lambda\left(\prod_{v\in\mathcal{V}} t_v - 1\right)}$,  $\boldsymbol{t} \in [-1,1]^d.$ 
See Section 1.9 in \cite{denuit2006} for details on comonotonicity. Capturing both independence and comonotonicity shows that the proposed family encompasses a wide spectrum of positive dependence. Negative dependence schemes cannot be captured with our construction, as exhibited in the next section.

\section{Covariance}
\label{subsect:Covariance}

Since dependence flows along the edges of the underlying tree of $\boldsymbol{N}$, we expect the covariance between two components of $\boldsymbol{N}$ to be described by the path from one component to the other. This is shown in the following theorem. 

\begin{theorem} [Covariance]
	\label{th:Covariance}
	For a tree $\mathcal{T}=(\mathcal{V},\mathcal{E})$, consider $\boldsymbol{N} = (N_v, \, v \in \mathcal{V})\sim\text{MPMRF}(\lambda,\boldsymbol{\alpha}, \mathcal{T})$. The covariance between two components of $\boldsymbol{N}$ is given by
	\begin{equation}
		\mathrm{Cov}\left(N_{u}, N_{v}\right) = \lambda \prod_{e \in \mathrm{path}(u,v)} \alpha_{e}, \quad u, v \in \mathcal{V}. 
		\label{eq:CovNj1Nj2}
	\end{equation}
\end{theorem}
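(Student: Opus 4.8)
The plan is to exploit the stochastic representation~\eqref{eq:StoDynamics} together with the reversibility statement of Theorem~\ref{th:Reversibility}, which lets me choose the root freely. Fix $u,v\in\mathcal{V}$ with $u\neq v$ (the case $u=v$ being trivial since $\mathrm{Var}(N_u)=\lambda$ and $\mathrm{path}(u,u)$ is empty, giving the empty product $1$). I would root the tree at $r=u$. Then $v\in\mathrm{dsc}(u)$, and there is a unique path $u = w_0, w_1, \dots, w_k = v$ with $w_{i}=\mathrm{pa}(w_{i+1})$; its edges are exactly $\mathrm{path}(u,v)$. So it suffices to show $\mathrm{Cov}(N_{w_0},N_{w_k}) = \lambda\prod_{i=1}^k \alpha_{(w_{i-1},w_i)}$.

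The key step is a one-edge covariance recursion. From~\eqref{eq:StoDynamics}, $N_{w_{i}} = \alpha_{(w_{i-1},w_i)}\circ N_{w_{i-1}} + L_{w_i}$, where $L_{w_i}$ is independent of everything built from $\{L_w : w\neq w_i\}$ and of the thinning performed on $N_{w_{i-1}}$; in particular $L_{w_i}$ is independent of $N_{w_0}=N_u$, which is a function of $L_u=L_{w_0}$ alone. Hence, by bilinearity of covariance,
\begin{equation*}
  \mathrm{Cov}(N_{w_0}, N_{w_i}) = \mathrm{Cov}\bigl(N_{w_0},\, \alpha_{(w_{i-1},w_i)}\circ N_{w_{i-1}}\bigr) + \mathrm{Cov}(N_{w_0}, L_{w_i}) = \mathrm{Cov}\bigl(N_{w_0},\, \alpha_{(w_{i-1},w_i)}\circ N_{w_{i-1}}\bigr).
\end{equation*}
Now I use the tower property to peel off the thinning: conditionally on $N_{w_{i-1}}$, the operation $\alpha\circ N_{w_{i-1}}$ is a sum of $N_{w_{i-1}}$ i.i.d.\ Bernoulli$(\alpha)$ variables independent of $N_{w_0}$, so $\mathrm{E}[\alpha\circ N_{w_{i-1}} \mid N_{w_0}, N_{w_{i-1}}] = \alpha\, N_{w_{i-1}}$, which gives $\mathrm{E}[N_{w_0}\,(\alpha\circ N_{w_{i-1}})] = \alpha\,\mathrm{E}[N_{w_0} N_{w_{i-1}}]$ and likewise $\mathrm{E}[\alpha\circ N_{w_{i-1}}] = \alpha\,\mathrm{E}[N_{w_{i-1}}]$; subtracting yields $\mathrm{Cov}(N_{w_0},\,\alpha\circ N_{w_{i-1}}) = \alpha\,\mathrm{Cov}(N_{w_0},N_{w_{i-1}})$. (Equivalently, one may invoke the conditional-covariance / law-of-total-covariance identity, or differentiate the pgf identity from Theorem~C.\ref{th:PropertyBinThinOp}(d).) Combining, $\mathrm{Cov}(N_{w_0},N_{w_i}) = \alpha_{(w_{i-1},w_i)}\,\mathrm{Cov}(N_{w_0},N_{w_{i-1}})$, and iterating from $i=k$ down to $i=1$, with the base case $\mathrm{Cov}(N_{w_0},N_{w_0}) = \mathrm{Var}(N_u) = \lambda$ since $N_u\sim\mathrm{Poisson}(\lambda)$ by Theorem~\ref{th:StoDynamics}, gives $\mathrm{Cov}(N_u,N_v) = \lambda\prod_{i=1}^k\alpha_{(w_{i-1},w_i)} = \lambda\prod_{e\in\mathrm{path}(u,v)}\alpha_e$.

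The only mild subtlety — and the part warranting care rather than true difficulty — is justifying the two independence claims used above: that $L_{w_i}$ is independent of $N_{w_0}$, and that the Bernoulli indicators realizing $\alpha_{(w_{i-1},w_i)}\circ N_{w_{i-1}}$ are independent of $N_{w_0}$. Both follow from the construction: $N_{w_0}=N_u=L_u$ depends only on $L_u$, and in~\eqref{eq:StoDynamics} the innovation $L_{w_i}$ and the thinning indicators on edge $(w_{i-1},w_i)$ are, by hypothesis, independent of $L_u$ and of all thinning operations not on that edge. An alternative, fully pgf-based route would differentiate the joint pgf~\eqref{eq:JointPGF}: with $r=u$, $\mathrm{Cov}(N_u,N_v) = \partial^2_{t_u t_v}\mathcal{P}_{\boldsymbol{N}}(\boldsymbol{t})\big|_{\boldsymbol{t}=\boldsymbol{1}} - \lambda^2$, and the product structure together with the recursion~\eqref{eq:jointpgf-h} for $\eta^{\mathcal{T}_u}_{\bullet}$ collapses the cross-derivative to $\lambda\prod_{e\in\mathrm{path}(u,v)}\alpha_e$; I would mention this as a remark but carry out the shorter thinning-recursion argument as the main proof.
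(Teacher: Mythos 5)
Your proof is correct and follows essentially the same route as the paper's: root the tree at $u$ (invoking Theorem~\ref{th:Reversibility}), then peel off one edge at a time along $\mathrm{path}(u,v)$ via the recursion $\mathrm{Cov}(N_{u},N_{w_i})=\alpha_{(w_{i-1},w_i)}\,\mathrm{Cov}(N_{u},N_{w_{i-1}})$ with base case $\mathrm{Var}(N_u)=\lambda$. The only cosmetic difference is in how the one-edge step is justified: you condition on $(N_{w_0},N_{w_{i-1}})$ and use the independence of the fresh thinning indicators, whereas the paper uses the law of total covariance plus the global Markov property to kill the conditional-covariance term --- both are valid.
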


\begin{proof}
	First, we find the expression of the covariance between two vertices connected by an edge. We thus suppose $u=\mathrm{pa}(v)$. From (\ref{eq:StoDynamics}), we have 
	\begin{equation} 
		\mathrm{Cov}(N_{u},N_{v}) = \mathrm{Cov}(N_{u}, \alpha_{(u,v)}\circ N_{u}) + \mathrm{Cov}(N_{u}, L_{v}),
		\quad (u,v) \in \mathcal{E}.
		\label{eq:CovNj1Nj2Edge-1}
	\end{equation}
	The second term in \eqref{eq:CovNj1Nj2Edge-1} is null given the independence of $N_{u}$ and $L_v$. By Theorem~\hyperref[th:PropertyBinThinOp]{\ref{th:PropertyBinThinOp}(c)}, the remaining term
	$\mathrm{Cov}(N_{u}, \alpha_{(u,v)}\circ N_{u})$ becomes $\lambda\alpha_{(u,v)}
	$.
	Hence, 
	\begin{equation}
		\mathrm{Cov}(N_u,N_v) = \lambda\alpha_{(u,v)}, \quad (u,v) \in \mathcal{E}.
		\label{eq:CovNj1Nj2Edge-Res}
	\end{equation}
	This corresponds to (\ref{eq:CovNj1Nj2}) when $\mathrm{path}(u,v)$ contains only one edge.

	Next, we look at the case where the path between the two vertices of interest has more than one edge. To facilitate the explanation, we change the labels of the vertices by adding meaningful subscripts. Let us then denote by $\{\ell_i, \; i \in \{1,\ldots,k\}\}$ the set of successive vertices on $\mathrm{path}(u,v)$, where $k$ is the number of such vertices, and $\ell_1=u$ and $\ell_k=v$. This allows to iteratively find the expression of the covariance between vertices $N_u$ and $N_v$, now written
	\begin{equation*}
		\mathrm{Cov}(N_{u},N_{v}) = \mathrm{Cov}(N_{\ell_1},N_{\ell_k}), \quad u,v \in \mathcal{V}.
	\end{equation*}
	We change the root of the underlying tree to $\ell_1$. Theorem~\ref{th:Reversibility} ensures this will not affect the result. We then have $\ell_i = \mathrm{pa}(\ell_{i+1})$ for every $i\in\{1,\ldots,k-1\}$. 
	We condition on the value taken by $N_{\ell_{k\text{-}1}}$, the random variable associated to the first intermediary vertex on the path from $\ell_k$ to $\ell_1$, and we find 
	\begin{equation}
		\mathrm{Cov}(N_{\ell_1}, N_{\ell_k}) = \mathrm{Cov}(\mathrm{E}[N_{\ell_1}|N_{\ell_{k\text{-}1}}],\mathrm{E}[N_{\ell_k}|N_{\ell_{k\text{-}1}}]) + \mathrm{E}[\mathrm{Cov}(N_{\ell_1}, N_{\ell_k}|N_{\ell_{k\text{-}1}})].
		\label{eq:CovNj1Nj2Noedge-2}
	\end{equation}
	
	The second term in (\ref{eq:CovNj1Nj2Noedge-2}) is equal to zero from the conditional independence given by the global Markov property. Then, from (\ref{eq:StoDynamics}), 
	\begin{align}
		\mathrm{Cov}(N_{\ell_1}, N_{\ell_k}) &= \mathrm{Cov}\left(\mathrm{E}[N_{\ell_1}|N_{\ell_{k\text{-}1}}],\mathrm{E}[\alpha_{(\ell_{k\text{-}1},\ell_k)}\circ N_{\ell_{k\text{-}1}}+L_{\ell_k}|N_{\ell_{k\text{-}1}}]\right) \notag \\
		&= \mathrm{Cov}\left(\mathrm{E}[N_{\ell_1}|N_{\ell_{k\text{-}1}}],\mathrm{E}[\alpha_{(\ell_{k\text{-}1},\ell_k)}\circ N_{\ell_{k\text{-}1}}|N_{\ell_{k\text{-}1}}]\right) + \mathrm{Cov}\left(\mathrm{E}[N_{\ell_1}|N_{\ell_{k\text{-}1}}],\mathrm{E}[L_{\ell_k}|N_{\ell_{k\text{-}1}}]\right).
		\label{eq:CovNj1Nj2Noedge-3}
	\end{align}
	
	The second term in (\ref{eq:CovNj1Nj2Noedge-3}) is null by the independence of $L_{\ell_{k}}$ with $N_{\ell_{k\text{-}1}}$. Consequently, 
	\begin{align*}
		\mathrm{Cov}(N_{\ell_1}, N_{\ell_k}) 
		&= \mathrm{Cov}\left(\mathrm{E}[N_{\ell_1}|N_{\ell_{k\text{-}1}}],\mathrm{E}\left[\left.\sum_{i=1}^{N_{\ell_{k\text{-}1}}} I_i^{(\alpha_{(\ell_{k\text{-}1},\ell_k)})} \right|N_{\ell_{k\text{-}1}}\right]\right)= \alpha_{(\ell_{k\text{-}1},\ell_{k})} \mathrm{Cov}\left(\mathrm{E}[N_{\ell_1}|N_{\ell_{k\text{-}1}}],N_{\ell_{k\text{-}1}}\right) = \alpha_{(\ell_{k\text{-}1},\ell_{k})}\mathrm{Cov}\left(N_{\ell_1},N_{\ell_{k\text{-}1}}\right).
	\end{align*}
	
	Repeating the same rationale iteratively with vertex $\ell_{k-2}$, and so on, up to vertex $\ell_{2}$, we find
	\begin{align*}
		\mathrm{Cov}(N_{\ell_1}, N_{\ell_k}) &=\alpha_{(\ell_{k\text{-}2},\ell_{k\text{-}1})}\alpha_{(\ell_{k\text{-}1},\ell_k)} \mathrm{Cov}(N_{\ell_1}, N_{\ell_{k\text{-}2}})=  \alpha_{(\ell_{k\text{-}3},\ell_{k\text{-}2})}\alpha_{(\ell_{k\text{-}2},\ell_{k\text{-}1})}\alpha_{(\ell_{k\text{-}1},\ell_k)}\mathrm{Cov}(N_{\ell_1}, N_{\ell_{k\text{-}3}})=\ldots\notag\\
		&= \alpha_{(\ell_2,\ell_3)}
		 \dots 
		\alpha_{(\ell_{k\text{-}2},\ell_{k\text{-}1})} \alpha_{(\ell_{k\text{-}1},\ell_{k})}\mathrm{Cov}(N_{\ell_{1}}, N_{\ell_2}) 
		 ,
	\end{align*}
	and given (\ref{eq:CovNj1Nj2Edge-Res}), 
	\begin{equation*}
		\mathrm{Cov}(N_{\ell_1}, N_{\ell_k}) = \lambda\alpha_{(\ell_1,\ell_2)}\alpha_{(\ell_2,\ell_3)}  \dots  \alpha_{(\ell_{k\text{-}2},\ell_{k\text{-}1})}   \alpha_{(\ell_{k\text{-}1},\ell_{k})} .
	\end{equation*}
	This grants the desired result when we revert back to the original subscripts $u,v \in \mathcal{V}$. 
\end{proof}

From Theorem \ref{th:Covariance} and the fact that each component of $\boldsymbol{N}$ is Poisson distributed with variance $\lambda$, as stated by Theorem \ref{th:StoDynamics}, the Pearson correlation coefficient between two components of $\boldsymbol{N}$ is given by
\begin{equation}
	\rho_P(N_{u}, N_{v}) = \prod_{e \in \mathrm{path}(u,v)} \alpha_{e}, \quad u, v \in \mathcal{V}. 
	\label{eq:rhoNj1Nj2}
\end{equation}

Since $\alpha_e \in [0,1]$ for all $e \in \mathcal{E}$, equation (\ref{eq:rhoNj1Nj2}) shows that only positive dependence ties the random variables within $\boldsymbol{N}$. It also shows that correlation exponentially decays as the path between two vertices becomes longer, meaning $\rho_P(N_u,N_v) \geq \rho_P(N_u,N_w)$ if $\mathrm{path}(u,v)\subseteq\mathrm{path}(u,w)$, $u,v,w \in \mathcal{V}$.

Computing the variance-covariance matrix of $\boldsymbol{N}$ by relying on Theorem \ref{th:Covariance} relates closely to computing the distance matrix of a weighted graph, a problem prominently studied in the graph theory literature as the "All-Pair Shortest Path" (APSP) problem.
The distance matrix of a graph, denoted $\boldsymbol{D} = (D_{ij},\; i\times j \in \mathcal{V}\times\mathcal{V})$, catalogs the distance between any pair of its vertices by summing the weights associated to the edges constituting the shortest path from one to the other. By choosing $\delta_{e}= -\ln\alpha_{e}$ as weight associated to edge $e$, $e\in\mathcal{E}$, the relation between the two problems becomes apparent: from (\ref{eq:CovNj1Nj2}), we have
\begin{equation}
	\mathrm{Cov}(N_u,N_v) = \lambda \mathrm{e}^{ \sum_{e\in\mathrm{path}(u,v)}\ln\alpha_{e}} = \lambda \mathrm{e}^{- \sum_{e\in\mathrm{path}(u,v)}\delta_{e}}
	= \lambda \mathrm{e}^{-D_{uv}}, \quad u,v\in\mathcal{V}.
	\label{eq:APSP}
\end{equation}
The APSP problem has been widely explored and a plethora of algorithms have been developed (see for instance \cite{chan2007more}). These algorithms could be used in our case; however, efficiently resolving the APSP problem was never a real issue on a tree, given their absence of cycles. Evidently, the shortest path between two vertices is the only path. We thus opt for the simple algorithm presented in \cite{harary1965structural}, Theorem~14.4, based on the powers of the cost matrix $\boldsymbol{c}$, with 
\begin{equation*}
	\boldsymbol{c}= (c_{ij},\; i\times j \in \mathcal{V}\times\mathcal{V}), \quad c_{ij}= \left\{\begin{array}{ll}
		\delta_{(i,j)}, &(i,j)\in\mathcal{E};\\
		0,& i=j;\\
		\infty, &\text{elsewhere}.
	\end{array}
	\right.
\end{equation*}
The algorithm utilizes the fact that $\boldsymbol{D}=\boldsymbol{c}^d$, with the powers of $\boldsymbol{c}$ according to $\star$, the min-plus matrix product defined as $(\boldsymbol{a}\star \boldsymbol{b})_{ij} = \min_{k=1}^d\{a_{ik}+b_{kj}\}$. One simply has to compute multiple min-plus matrix products to obtain the sought distance matrix. It is equivalent, and more well-suited in our case, to refer to the weighted adjacency matrix $\boldsymbol{A}$, given by 
\begin{equation}
	\boldsymbol{A}= (A_{ij},\; i\times j \in \mathcal{V}\times\mathcal{V}),\quad A_{ij} = \mathrm{e}^{-c_{ij}} = \left\{\begin{array}{ll}
		\alpha_{(i,j)}, &(i,j)\in\mathcal{E};\\
		1,&i=j;\\
		0, &\text{elsewhere},
	\end{array}
	\right.
 \label{eq:AdjacencyMatrix}
\end{equation}
with its powers according to $\sun$, which we define as the max-product matrix product: $(\boldsymbol{a}\;$\rotatebox[origin=c]{180}{$\sun$}$\;\boldsymbol{b})_{ij} = \max_{k=1}^d\{a_{ik}b_{kj}\}$. One indeed sees the equivalence:
\begin{align*}
	\mathrm{e}^{-\min_{k=1}^d\{\delta_{ik}+\delta_{kj}\}} &= \max_{k=1}^d\{\alpha_{ik}\alpha_{kj}\}.
\end{align*}
Given (\ref{eq:APSP}), we have $\mathrm{Cov}(N_u, N_v) = \lambda (\boldsymbol{A}^d)_{uv}$, with the powers according to $\sun$. This is implemented in Algorithm \ref{algo:VarCovarMatrix}.

\begin{algorithm}[H]
	\label{algo:VarCovarMatrix}
	\KwIn{Weighted adjacency matrix $\boldsymbol{A} = (A_{ij})_{i\times j\in \mathcal{V}\times\mathcal{V}}$; $\lambda$.}
	\KwOut{Variance-covariance matrix of $\boldsymbol{N}$.}
	 Set $\boldsymbol{A} = \boldsymbol{A}^{(1)}$. \\
	 \For{$k \in \{2,\ldots, d\}$}{
		 Compute $\boldsymbol{A}^{(k)} = \boldsymbol{A}^{(k-1)}\;$\rotatebox[origin=c]{180}{$\sun$}$\;\boldsymbol{A}$.\\
	}
	 Return $\lambda \boldsymbol{A}^{(d)}$.\\
	\caption{Computing the variance-covariance matrix of $\boldsymbol{N}$.}
\end{algorithm}

\section{Dependence ordering}	
\label{subsect:StochasticOrderingN}

To further our understanding of the dependence relations tying the components of MRFs from the family $\mathbb{MPMRF}$, we compare,
  using stochastic orders, their distributions according to the strength of their knit positive dependence. We rely on the supermodular order to make these comparisons.

\begin{deff}[Supermodular Order]
	Two vectors of random variables, $\boldsymbol{X}$ and $\boldsymbol{X}^{\prime}$, are ordered according to the supermodular order, denoted $\boldsymbol{X} \preceq_{sm} \boldsymbol{X}^{\prime}$, if 
	$\mathrm{E}[\varphi(\boldsymbol{X})] \leq \mathrm{E}[\varphi(\boldsymbol{X}^{\prime})]$
	for any supermodular function $\varphi$, given the expectations exist. A supermodular function $\varphi:\mathbb{R}^{d}\mapsto\mathbb{R}$ is such that 
	$\varphi(\boldsymbol{x}) + \varphi(\boldsymbol{x}^{\prime}) \leq  	\varphi(\boldsymbol{x}\wedge\boldsymbol{x}^{\prime}) + \varphi(\boldsymbol{x}\vee\boldsymbol{x}^{\prime})$, 
	holds for all $\boldsymbol{x},\boldsymbol{x}^{\prime}\in\mathbb{R}^d$, with $\wedge$ denoting the componentwise minimum and $\vee$ the componentwise maximum. 
\end{deff}

We recommend \cite{muller2002comparison} and \cite{shaked2007} for a good introduction to the supermodular order and to stochastic ordering in general. 
The following theorem explores the impact of stronger dependence parameters along the edges of underlying trees with the same shape.  

\begin{theorem}
	For a tree $\mathcal{T}=(\mathcal{V},\mathcal{E})$, consider $\boldsymbol{N} = (N_v, \, v \in \mathcal{V})\sim\text{MPMRF}(\lambda,\boldsymbol{\alpha}, \mathcal{T})$ and $\boldsymbol{N}^{\prime}=(N^{\prime}_v,\, v\in\mathcal{V})\sim\text{MPMRF}(\lambda,\boldsymbol{\alpha}^{\prime}, \mathcal{T})$. If, for every $e \in \mathcal{E}$, 
	$\alpha_e \leq \alpha_e^{\prime}$,        
	then $\boldsymbol{N} \preceq_{sm} \boldsymbol{N}^{\prime}$.
	\label{th:SupermodularOrderN}
\end{theorem}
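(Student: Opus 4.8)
The plan is to exploit the stochastic representation (\ref{eq:StoDynamics}) together with two standard facts about the supermodular order: (i) it is closed under convolution with an independent common summand, i.e. if $\boldsymbol{Y}\preceq_{sm}\boldsymbol{Y}'$ and $\boldsymbol{Z}$ is independent of both, then $\boldsymbol{Y}+\boldsymbol{Z}\preceq_{sm}\boldsymbol{Y}'+\boldsymbol{Z}$; and (ii) it is closed under mixtures and, more usefully here, under the operation of building a random vector by a Markov-type recursion where each conditional kernel is replaced by a ``more positively dependent'' one. The cleanest route is an edge-by-edge (hybrid) argument: it suffices to prove the claim when $\boldsymbol{\alpha}$ and $\boldsymbol{\alpha}'$ differ in a single coordinate $\alpha_{e_0}\le\alpha_{e_0}'$, since a chain of $|\mathcal{E}|$ such comparisons, using transitivity of $\preceq_{sm}$, yields the general statement. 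Fix a root $r$; by Theorem~\ref{th:Reversibility} we may choose $r$ to be the parent endpoint of $e_0$, so $e_0=(\mathrm{pa}(w),w)=(r,w)$ for the relevant child $w$.

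Next I would set up a common probability space (a coupling) on which both vectors are realized via (\ref{eq:StoDynamics}) using the \emph{same} innovations $\boldsymbol{L}$ except at vertex $w$ (whose innovation mean depends on $\alpha_{e_0}$) and the same thinning Bernoulli arrays except along $e_0$. The key structural observation is that changing only $\alpha_{e_0}$ changes only the conditional law of $N_w$ given $N_{\mathrm{pa}(w)}$; all other conditional kernels $p_{N_v\mid N_{\mathrm{pa}(v)}}$ are untouched, and — crucially — by Theorem~\ref{th:StoDynamics} every $N_v$ (in particular $N_w$) remains Poisson$(\lambda)$ under either parameter, so all the marginals, and all the kernels \emph{downstream} of $w$, are literally identical. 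Thus the whole problem reduces to a single-edge statement: writing $\boldsymbol{N}=f(N_{\mathrm{pa}(w)},N_w,\text{rest})$ where $f$ assembles the vector by the unchanged recursions on the two subtrees hanging off the edge $e_0$, one must show that the bivariate vector $(N_{\mathrm{pa}(w)},N_w)$ is supermodular-increasing in $\alpha_{e_0}$, and then propagate this up through $f$.

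For the single-edge bivariate fact I would argue as follows. With $N_{\mathrm{pa}(w)}=:M\sim$~Poisson$(\lambda)$ fixed and $N_w=\alpha\circ M+L^{(\alpha)}$, $L^{(\alpha)}\sim$~Poisson$(\lambda(1-\alpha))$ independent of $M$, the pair $(M,N_w)$ has Poisson$(\lambda)$ margins for every $\alpha\in[0,1]$, equals the independent coupling at $\alpha=0$ and the comonotone coupling at $\alpha=1$, and its dependence is monotone: for $\alpha\le\alpha'$ one can couple $N_w^{(\alpha)}\le_{st}$-consistently by the thinning-of-thinning identity (Theorem~\hyperref[th:PropertyBinThinOp]{C.\ref{th:PropertyBinThinOp}(f)}), writing $\alpha'\circ M \stackrel{d}{=}\alpha\circ M + \beta\circ(M-\alpha\circ M)$ for a suitable $\beta$, so that $N_w^{(\alpha')}$ receives the same $\alpha\circ M$ contribution as $N_w^{(\alpha)}$ plus an extra independent nonnegative term built from $M$. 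This realizes the $\alpha'$-pair as the $\alpha$-pair with an independent ``positive nudge'' tied to $M$, which is a supermodular improvement of the pair — equivalently, one checks directly that $\Pr(M\ge m, N_w\ge n)$ is nondecreasing in $\alpha$ for all $m,n$ (this is the concordance/supermodular order for bivariate vectors with fixed margins, cf. \cite{muller2002comparison}). I would phrase it via this concordance criterion as the main computational lemma.

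The main obstacle is the \emph{propagation} step: upgrading the bivariate comparison $(N_{\mathrm{pa}(w)},N_w)\preceq_{sm}(N_{\mathrm{pa}(w)}',N_w')$ to the full $d$-dimensional comparison $\boldsymbol{N}\preceq_{sm}\boldsymbol{N}'$. Here the point is that $\boldsymbol{N}$ is obtained from $(N_{\mathrm{pa}(w)},N_w)$ by applying, conditionally and independently on the two sides of $e_0$, the \emph{same} Markov kernels (the unchanged subtree recursions) — and these kernels are each ``monotone'' in the stochastic sense because $\alpha\circ(\cdot)$ and adding an independent $L_v$ are both monotone maps. The rigorous tool is the closure of the supermodular order under such coordinatewise conditionally-independent monotone Markovian expansions (this is exactly the mechanism in \cite{kizildemir2017supermodular} and in the tree-expansion lemmas of \cite{shaked2007}, Chapter~9); I would invoke it, after verifying its hypotheses (fixed margins along the way, conditional independence from the global Markov property of Theorem~\ref{th:StoDynamics}, monotonicity of each thinning/innovation kernel). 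Assembling: single-edge reduction $\Rightarrow$ bivariate concordance lemma $\Rightarrow$ Markov-kernel propagation $\Rightarrow$ transitivity over all edges gives $\boldsymbol{N}\preceq_{sm}\boldsymbol{N}'$.
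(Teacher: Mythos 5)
Your proposal is correct and follows essentially the same route as the paper: reduce to a single differing edge by transitivity, establish the bivariate supermodular comparison for the pair across that edge, and propagate to the full vector via the closure of $\preceq_{sm}$ under conditionally independent stochastically monotone kernels (Theorem 9.A.15 of \cite{shaked2007}), using the reversibility of Theorem~\ref{th:Reversibility} and the monotonicity in Theorem~\hyperref[th:PropertyBinThinOp]{C.\ref{th:PropertyBinThinOp}(g)}. The only (minor) divergence is the bivariate lemma: you argue it by a thinning coupling and the concordance criterion for fixed margins, whereas the paper identifies $(N_u,N_v)$ as a common-shock bivariate Poisson and cites the known supermodular monotonicity of that family; just note that your ``positive nudge'' coupling must also account for the compensating decrease of the innovation mean from $\lambda(1-\alpha)$ to $\lambda(1-\alpha')$, which is why the clean justification is the concordance check rather than a pointwise domination.
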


\begin{proof}
	We first look at the case where only one dependence parameter differs from the distribution of $\boldsymbol{N}$ to that of $\boldsymbol{N}^{\prime}$, say respectively $\alpha_{(u,v)}$ and $\alpha^{\prime}_{(u,v)}$. Without loss of generality, assume the rooting of the underlying tree is such that $u=\mathrm{pa}(v) $ rather than vice-versa. We begin by considering only the two random variables associated to vertices $u$ and $v$. From Theorem \ref{th:JointPGF}, the joint probability generating function of $(N_{u},N_{v})$ is given by 
	\begin{equation*}
		\mathcal{P}_{N_{u},N_{v}}(t_u,t_v)= \mathrm{e}^{\lambda\left((1 - \alpha_{(u,v)})(t_{u}-1) + (1- \alpha_{(u,v)})(t_{v}-1) + \alpha_{(u,v)}(t_{u}t_{v} - 1)\right)}, \quad t_{u},t_{v} \in [-1,1],
	\end{equation*}
	which coincides with the joint probability generating function of a bivariate Poisson distribution with common shock (see \cite{teicher1954multivariate}); similarly for ($N_{u}^{\prime},N_{v}^{\prime}$). Therefore, as illustrated notably in Example 1 of \cite{meyer2015beyond}, since $\alpha_{(u,v)}\leq \alpha_{(u,v)}^{\prime}$,
	\begin{equation}
		(N_{u}, N_{v}) \preceq_{sm} (N_{u}^{\prime}, N_{v}^{\prime}).
		\label{eq:SupermodularN-Cond0-SMN1N2}
	\end{equation}
	
	Given the global Markov property of both $\boldsymbol{N}$ and $\boldsymbol{N}^{\prime}$, and since only $\alpha_{(u,v)}$ differs between $\boldsymbol{\alpha}$ and $\boldsymbol{\alpha}^{\prime}$, we have 
	\begin{equation}
		(N_{\mathrm{pa}(u)} | N_{u} = \theta) \stackrel{d}{=}( N_{\mathrm{pa}(u)}^{\prime} | N_{u}^{\prime} = \theta),\quad \theta\in\mathbb{N}.
		\label{eq:SupermodularN-distr1}
	\end{equation}
	Similarly, for every ${j} \in \mathrm{ch}(v)$,
	\begin{equation}
		(N_{j} | N_{v} = \theta) \stackrel{d}{=} (N_{j}^{\prime} | N_{v}^{\prime} = \theta), \quad \theta\in\mathbb{N}.
		\label{eq:SupermodularN-distr2}
	\end{equation} 
	Besides, from Theorem~ \hyperref[th:PropertyBinThinOp]{\ref{th:PropertyBinThinOp}(g)}, the order relation
	\begin{equation}
		(N_{w} | N_{\mathrm{pa}(w)} = \theta_1) \preceq_{st} (N_{w} | N_{\mathrm{pa}(w)} = \theta_2), \quad \theta_1, \theta_2 \in \mathbb{N}, \; \theta_1 \leq \theta_2,
		\label{eq:SupermodularN-Cond1-StoDomN}
	\end{equation}
	holds for all $w \in \mathcal{V}\backslash\{r\}$,
	where $\preceq_{st}$ denotes the first order stochastic dominance. 
	By Theorem \ref{th:Reversibility}, relation 
	\begin{equation}
		(N_{\mathrm{pa}(w)} | N_{w} = \theta_1) \preceq_{st} (N_{\mathrm{pa}(w)} | N_{w} = \theta_2), \quad \theta_1, \theta_2 \in \mathbb{N}, \; \theta_1 \leq \theta_2,
		\label{eq:SupermodularN-Cond2-StoDomN}
	\end{equation}
	is also true for all $w \in \mathcal{V}\backslash\{r\}$. 
	Similarly, relations in (\ref{eq:SupermodularN-Cond1-StoDomN}) and (\ref{eq:SupermodularN-Cond2-StoDomN}) also hold for $N_w^{\prime}$ for all $w \in \mathcal{V}\backslash\{r\}$. Given the global Markov properties of $\boldsymbol{N}$ and $\boldsymbol{N}^{\prime}$, and relations (\ref{eq:SupermodularN-Cond0-SMN1N2}),
	(\ref{eq:SupermodularN-distr1}),
	(\ref{eq:SupermodularN-distr2}),
	(\ref{eq:SupermodularN-Cond1-StoDomN}) and (\ref{eq:SupermodularN-Cond2-StoDomN}), we can invoke Theorem 9.A.15 of \cite{shaked2007} by using $N_{u}$ and $N_{v}$ themselves as mixing random variables to deduce
	\begin{equation*}
		(N_{u}, N_{v}, N_{\mathrm{pa}(u)}, (N_{j}: j \in \mathrm{ch}(v))) \preceq_{sm} (N_{u}^{\prime}, N_{v}^{\prime}, N_{\mathrm{pa}(u)}^{\prime}, (N_{j}^{\prime}: j \in \mathrm{ch}(v))).
	\end{equation*}
	We again refer to Theorem 9.A.15 of \cite{shaked2007} and use $(N_{u}, N_{v}, N_{\mathrm{pa}(u)}, (N_{j}: j \in \mathrm{ch}(v)))$ as mixing random variables, and repeat until all vertices have been captured to conclude that 
	$\boldsymbol{N} \preceq_{sm} \boldsymbol{N}^{\prime}$.
	Finally, from the transitivity of the supermodular order, the result follows by repeating this rationale, successively substituting each $\alpha_e$ for $\alpha_e^{\prime}$ for all $e\in \mathcal{E}$. 
\end{proof}

In the following remark, we invert the paradigm of Theorem \ref{th:SupermodularOrderN}, meaning we modify the shape of the underlying trees but keep the dependence parameters unchanged. 

\begin{rem}
	Consider trees $\mathcal{T}=(\mathcal{V},\mathcal{E})$ and $\mathcal{T}^{\prime}=(\mathcal{V}, \mathcal{E}^{\prime})$. Let $\boldsymbol{N} = (N_v, \, v \in \mathcal{V})\sim\text{MPMRF}(\lambda,\boldsymbol{\alpha}, \mathcal{T})$ and $\boldsymbol{N}^{\prime}=(N^{\prime}_v,\, v\in\mathcal{V})\sim\text{MPMRF}(\lambda,\boldsymbol{\alpha}^{\prime}, \mathcal{T}^{\prime})$, where for every $e\in\mathcal{E}$, there is a corresponding $e^{\prime}\in\mathcal{E}^{\prime}$ such that $\alpha_e=\alpha^{\prime}_{e^{\prime}}$. 
	If $\mathcal{E} \neq \mathcal{E}^{\prime}$, meaning if trees have different shapes, then $\boldsymbol{N}$ and $\boldsymbol{N}^{\prime}$ are necessarily not comparable in the sense of the supermodular order. Both trees $\mathcal{T}$ and $\mathcal{T}^{\prime}$ have the same number of vertices and edges, but different shapes. This implies that there exist $u,v,w \in \mathcal{V}$ such that $(u,v) \in \mathcal{E}$, $(u, v) \not\in \mathcal{E}^{\prime}$, $(u,w) \not\in \mathcal{E}$ and $(u,w) \in \mathcal{E}^{\prime}$. Since the dependence parameters are the same across both trees, we necessarily have 
	$
	\mathrm{Cov}(N_{u}, N_{v}) \geq \mathrm{Cov}(N_{u}^{\prime}, N_{v}^{\prime})
	$
	and 
	$
	\mathrm{Cov}(N_{u}, N_{w}) \leq \mathrm{Cov}(N_{u}^{\prime}, N_{w}^{\prime})
	$
	by Theorem \ref{th:Covariance}, preventing to establish a supermodular ordering between $\boldsymbol{N}$ and $\boldsymbol{N}^{\prime}$. 
	Indeed, this would contravene Theorem 3.9.5(c) of \cite{muller2002comparison}, stating that for two random vectors $\boldsymbol{X}$ and $\boldsymbol{X}^{\prime}$, such that $\boldsymbol{X} \preceq_{sm} \boldsymbol{X}^{\prime}$, the covariance between each pair of components of $\boldsymbol{X}$ must be inferior to that of the corresponding components of $\boldsymbol{X}^{\prime}$.
	\label{th:NoSupermodularNShape}
\end{rem}

To illustrate Remark~\ref{th:NoSupermodularNShape}, we provide the following example. 
\begin{ex} 
	\label{ex:NoSupermodularNShape}	
	Consider $\boldsymbol{N}$ and $\boldsymbol{N}^{\prime}$ respectively defined on trees $\mathcal{T}$ and $\mathcal{T}^{\prime}$ of Fig.~\ref{fig:TwoTrees}. Suppose both have the same parameter $\lambda$ for their marginal distributions, and the same dependence parameters $\alpha_e$ on their common edges. Suppose $\alpha_{(2,5)} = \alpha_{(3,5)}$. Trees $\mathcal{T}$ and $\mathcal{T}^{\prime}$ only differ by the position of vertex 5. Applying Theorem~\ref{th:Covariance}, we note
	\begin{align*}
		&\mathrm{Cov}(N_2,N_5) = \lambda\alpha_{(2,5)} \geq \lambda\alpha_{(3,5)} \alpha_{(1,3)} \alpha_{(1,2)} = \mathrm{Cov}(N_2^{\prime},N_5^{\prime});\\
		&\mathrm{Cov}(N_3,N_5) = \lambda\alpha_{(2,5)} \alpha_{(1,2)} \alpha_{(1,3)} \leq \lambda\alpha_{(3,5)} = \mathrm{Cov}(N_3^{\prime},N_5^{\prime}).
	\end{align*} 
	Therefore, $\boldsymbol{N}$ and $\boldsymbol{N}^{\prime}$ are not comparable under the supermodular order. 
	%\hfill $\square$
\end{ex}

\begin{figure}[H]
	\centering
	\begin{tikzpicture}[every node/.style={text=Black, circle, draw = Maroon, inner sep=0mm, outer sep = 0mm, minimum size=3.5mm, fill = White}, node distance = 1.5mm, scale=0.1, thick]
		\node (1a) {\tiny $1$};
		\node [below left=of 1a] (2a) {\tiny $2$};
		\node [below right=of 1a] (3a) {\tiny $3$};
		\node [below left =of 2a] (4a) {\tiny $4$};
		\node [draw =Sepia, below =of 2a] (5a) {\tiny $5$};
		\node [below =of 3a] (6a) {\tiny $6$};
		\node [below right =of 3a] (7a) {\tiny $7$};
		
		\draw (1a) -- (2a);
		\draw (1a) -- (3a);
		\draw (2a) -- (4a);
		\draw (2a) -- (5a);            
		\draw (3a) -- (6a);       
		\draw (3a) -- (7a);

		\node [right = 27mm of 1a](1b) {\tiny $1$};
		\node [below left=of 1b] (2b) {\tiny $2$};
		\node [below right=of 1b] (3b) {\tiny $3$};
		\node [below left =of 2b] (4b) {\tiny $4$};
		
		\node [below =of 3b] (6b) {\tiny $6$};
		\node [below right =of 3b] (7b) {\tiny $7$};
		\node [draw =Sepia,  left = 0.5mm of 6b] (5b) {\tiny $5$};
		
		\draw (1b) -- (2b);
		\draw (1b) -- (3b);
		\draw (2b) -- (4b);
		\draw (3b) -- (5b);            
		\draw (3b) -- (6b);       
		\draw (3b) -- (7b);

		\node[draw=none, below = 10mm of 1a](t1){$\mathcal{T}^{\mathcolor{White}{\prime}}$};
		\node[draw=none, below= 10mm of 1b](t2){$\mathcal{T}^{\prime}$};
	\end{tikzpicture}
	\caption{Trees $\mathcal{T}$ and $\mathcal{T}^{\prime}$ of Example  \ref{ex:NoSupermodularNShape}.}
	\label{fig:TwoTrees}
\end{figure}

\section{Sampling}
\label{sect:Simulation}

The stochastic representation of $\boldsymbol{N}$ given by (\ref{eq:StoDynamics}) in Theorem~\ref{th:StoDynamics} comes with a convenient sampling procedure that scales well to high dimensions $d$, as it allows to produce the realizations for every component of $\boldsymbol{N}$ successively. The order of conditioning picked to derive the expression of the joint probability mass function in Theorem~\ref{th:JointPMF} -- having the parent vertex precede its children, given a chosen root $r\in\mathcal{V}$ -- is the same order in which one will produce realizations of $N_v$, $v\in\mathcal{V}$. Obviously, given Theorem~\ref{th:Reversibility}, any rooting may be chosen to define the filial relations, and thus the order of sampling. The gist of the sampling method is then straightforward: provided a realization of $N_{\mathrm{pa}(v)}$, we independently produce a realization of a binomial random variable for the propagation part of (\ref{eq:StoDynamics}) and of a Poisson random variable for the innovation part and sum both realizations to produce that of $N_v$.

We translate the discussed method to Algorithm~\ref{algo:simulNi}. The weighted adjacency matrix, as defined in (\ref{eq:AdjacencyMatrix}), provided in input needs to be constructed according to a topological order given a root $r\in\mathcal{V}$, meaning the row and column index associated to a vertex must be greater than that of its parent. Consequently, first row and column are associated to the root $r$.  The algorithm then proceeds sequentially along the row indices; this naturally supplies it with the order of sampling discussed above. Moreover, because the matrix is constructed in topological order, for a certain row $j$, $j\in\{2,\ldots,d\}$, the first non-zero element has column index $\mathrm{pa}(j)$. This feature is employed to easily retrieve the filial relations in Algorithm~\ref{algo:simulNi}. Example~\ref{ex:topologicalorder} shows two weighted adjacency matrices in topological order.

\begin{algorithm}[H]
	\label{algo:simulNi}
	\caption{Stochastic representation sampling method.}
	\KwIn{Weighted adjacency matrix $\boldsymbol{A} = (A_{ij})_{i\times j\in \mathcal{V}\times\mathcal{V}}$; $\lambda$.}
	\KwOut{Realization of $\boldsymbol{N}$.} 
	 Sample $N_1\sim\mathrm{Poisson}(\lambda)$. \\
	 \For{$k=2,\ldots,d$}{
		 Compute $\pi_k = \inf\{j:A_{kj}>0\}$.\\
		 Sample $B_k\sim\mathrm{Binomial}(N_{\pi_k}, A_{\pi_k k})$ random variable.\\
		 Sample $L_k\sim\mathrm{Poisson}(\lambda(1-a_k))$ random variable. \\
		 Compute $N_k = B_k + L_k$.\\}
	 Return $\{N_i,\;i\in\{1,\ldots,d\}\}$.
\end{algorithm}

\begin{ex}
\label{ex:topologicalorder}
	Consider the 7-vertex tree of Illustration~\ref{fig:ExTreeNotations}. Weighted adjacency matrices constructed in a topological order according to $\mathcal{T}_1$ and $\mathcal{T}_3$, respectively, are 
	\begin{equation*}
		\boldsymbol{A}_{\mathcal{T}_1} = 	\begin{bmatrix}
			1&{\alpha_{(1,2)}}&{\alpha_{(1,3)}}&0&0&0&0\\
			{\alpha_{(1,2)}}&1&0&0&0&0&0\\
			{\alpha_{(1,3)}}&0&1&{\alpha_{(3,4)}}&{\alpha_{(3,5)}}&0&0\\
			0&0&{\alpha_{(3,4)}}&1&0&{\alpha_{(4,6)}}&{\alpha_{(4,7)}}\\
			0&0&{\alpha_{(3,5)}}&0&1&0&0\\
			0&0&0&{\alpha_{(4,6)}}&0&1&0\\
			0&0&0&{\alpha_{(4,7)}}&0&0&1
		\end{bmatrix}; \;\boldsymbol{A}_{\mathcal{T}_3} = 	\begin{bmatrix}
		1&{\alpha_{(1,3)}}&0&{\alpha_{(3,4)}}&0&0&{\alpha_{(3,5)}}\\
		{\alpha_{(1,3)}}&1&{\alpha_{(1,2)}}&0&0&0&0\\
		0&{\alpha_{(1,2)}}&0&1&0&0&0\\
		{\alpha_{(3,4)}}&0&0&1&{\alpha_{(4,6)}}&{\alpha_{(4,7)}}&0\\
		0&0&0&{\alpha_{(4,6)}}&1&0&0\\
		0&0&0&{\alpha_{(4,7)}}&0&1&0\\
		{\alpha_{(3,5)}}&0&0&0&0&0&1
	\end{bmatrix},
	\end{equation*}
where in the first matrix the columns and rows are labelled in order (1,2,3,4,5,6,7) and in the second matrix, (3,1,2,4,6,7,5).  
\end{ex}

Algorithm~\ref{algo:simulNi} requires $2d-1$ random numbers to produce a realization of the $d$-variate random vector $\boldsymbol{N}$. For a high dimension $d$ $(d=100,1000,\ldots)$, 
one may generate thousands of sample points in matters of seconds with a regular personal computer and thus appraise the efficiency of the presented method.   
The arising of a straightforward sampling method for MRFs having a clique-based stochastic representation, hence explicit expressions for the probability mass functions involved in an iterative conditioning as in (\ref{eq:MarkovProba}), is notably remarked in \cite{pickard1977curious} and \cite{pickard1980unilateral}. As discussed in Section~\ref{subsect:PMF-PGF}, most families of MRFs do not exhibit this feature 
and cannot produce the iterative sequence of simulation on which Algorithm~\ref{algo:simulNi} relies.
Chapter~12 of \cite{koller2009probabilistic} presents some of the methods usually employed to sample from MRFs, namely the Gibbs sampler and importance sampling. The Gibbs sampler is an implementation of Markov chain Monte Carlo to generate a sample from a distribution when only conditional probabilities are known. Following this method, one realization of the MRF requires simulating across the whole tree multiple times, iteratively refining the simulation; and hence, requires more computing ressources than a straightforward simulation. For further information on the Gibbs sampler and its derivatives in the context of discrete MRFs, see \cite{izenman2021sampling}. Importance sampling, on the other hand, may only require $d$ random numbers. The idea is to generate from another, similar, multivariate distribution and weight the importance of the samples according to a ratio of its joint probability mass function and that of the desired MRF. While importance sampling is very efficient, having a direct sampling procedure clears the need of finding a distribution similar enough.
Monte Carlo estimations and computations are also subject to more variance when employing importance sampling; this variance is governed by the degree of similarity of the proxy distribution (refer to equation (12.14) of \cite{koller2009probabilistic}). One may consult \cite{tokdar2010importance} for a review on importance sampling.

\section{Sum of the components of \textbf{\textit{N}}}
\label{sect:Sum}
Let us denote by $M$ the sum of the components of $\boldsymbol{N}$,  that is, $M = \sum_{v \in \mathcal{V}} N_v$. Analyzing the behavior of $M$ is of high interest to understand the aggregate dynamics of systems described by $\boldsymbol{N}$. We derive in this section the distribution of $M$ and study the impact of the dependence structure embedded in $\boldsymbol{N}$ on the sum $M$ using stochastic ordering. We then quantify the contribution of each component of $\boldsymbol{N}$ to $M$ through expected allocations.

\subsection{Distribution of the sum}

The analytic expression for the joint probability generating function of $\boldsymbol{N}$, provided in Theorem~\ref{th:JointPGF}, also grants information on the distribution of $M$, as it leads to an expression for the probability generating function of $M$ through the relation 
\begin{equation}
	\mathcal{P}_M(t) =  \mathcal{P}_{\boldsymbol{N}}(t,\ldots,t), \quad t \in [-1,1].
	\label{eq:PGFofMfromJointPGFofN}
\end{equation}
We rely on (\ref{eq:PGFofMfromJointPGFofN}) for two purposes: first, to derive general results on the distribution of $M$, provided in Theorem~\ref{th:DistrM} below; second, to compute the probability mass function of $M$ by directly programming the joint probability generating function of $\boldsymbol{N}$ and using the Fast Fourier Transform (FFT) algorithm, as indicated in Algorithm~\ref{algo:fftM}. For notation purposes, let $\boldsymbol{1}_k$ denote a $k$-long vector of ones, $k\in\mathbb{N}^*$, such that, for instance, (\ref{eq:PGFofMfromJointPGFofN}) is rewritten as $\mathcal{P}_M(t) =  \mathcal{P}_{\boldsymbol{N}}(t\,\boldsymbol{1}_{d})$, $t \in [-1,1]$. The following theorem shows that $M$ follows a compound Poisson distribution.

\begin{theorem}[Distribution of the sum]
	\label{th:DistrM}
	For a tree $\mathcal{T}=(\mathcal{V},\mathcal{E})$, consider $\boldsymbol{N} = (N_v, \, v \in \mathcal{V})\sim\text{MPMRF}(\lambda,\boldsymbol{\alpha}, \mathcal{T})$. The random variable $M$, corresponding to the sum of the components of $\boldsymbol{N}$, has a compound Poisson distribution with primary distribution parameter $\lambda_{M} = \lambda\left(d-\sum_{e\in\mathcal{E}}\alpha_{e}\right)$ and probability generating function of the secondary distribution $\mathcal{P}_{C_M}(t) = \sum_{v\in\mathcal{V}}\frac{1-\alpha_{(\mathrm{pa}(v),v)}}{d-\sum_{e\in\mathcal{E}}\alpha_{e}}\eta_{v}^{\mathcal{T}_r}(t\vecun{v})$, yielding $\mathrm{E}[C_M] = \frac{d}{d - \sum_{e\in\mathcal{E}}\alpha_{e}}$.
\end{theorem}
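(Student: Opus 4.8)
The plan is to start from the pgf of $M$ given by the substitution $\mathcal{P}_M(t) = \mathcal{P}_{\boldsymbol{N}}(t\,\boldsymbol{1}_d)$ in \eqref{eq:PGFofMfromJointPGFofN}, and then massage the product expression of the joint pgf from Theorem~\ref{th:JointPGF} into the canonical form of a compound Poisson pgf, namely $\exp\{\lambda_M(\mathcal{P}_{C_M}(t) - 1)\}$. First I would set $\boldsymbol{t} = t\,\boldsymbol{1}_d$ in \eqref{eq:JointPGF}, obtaining
\[
\mathcal{P}_M(t) = \prod_{v\in\mathcal{V}} \mathrm{e}^{\lambda(1-\alpha_{(\mathrm{pa}(v),v)})\left(\eta^{\mathcal{T}_r}_v(t\vecun{v})-1\right)}
= \exp\left\{\lambda\sum_{v\in\mathcal{V}}(1-\alpha_{(\mathrm{pa}(v),v)})\left(\eta^{\mathcal{T}_r}_v(t\vecun{v})-1\right)\right\},
\]
using the convention $\alpha_{(\mathrm{pa}(r),r)}=0$. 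The exponent is already a sum over vertices, so the natural move is to factor out the total weight $\sum_{v\in\mathcal{V}}(1-\alpha_{(\mathrm{pa}(v),v)})$.

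The key arithmetic step is to evaluate that total weight: $\sum_{v\in\mathcal{V}}(1-\alpha_{(\mathrm{pa}(v),v)}) = d - \sum_{v\in\mathcal{V}\backslash\{r\}}\alpha_{(\mathrm{pa}(v),v)} = d - \sum_{e\in\mathcal{E}}\alpha_e$, since the map $v\mapsto(\mathrm{pa}(v),v)$ is a bijection between $\mathcal{V}\backslash\{r\}$ and $\mathcal{E}$ (a tree on $d$ vertices has $d-1$ edges, each identified uniquely with its child endpoint). Call this quantity $\lambda_M/\lambda = d - \sum_{e\in\mathcal{E}}\alpha_e$. Then I would rewrite the exponent as
\[
\lambda_M\left(\sum_{v\in\mathcal{V}}\frac{1-\alpha_{(\mathrm{pa}(v),v)}}{d-\sum_{e\in\mathcal{E}}\alpha_e}\,\eta^{\mathcal{T}_r}_v(t\vecun{v}) \;-\; 1\right),
\]
because the coefficients $\frac{1-\alpha_{(\mathrm{pa}(v),v)}}{d-\sum_{e\in\mathcal{E}}\alpha_e}$ sum to $1$ over $v\in\mathcal{V}$, so the $-1$ can be pulled out of the convex combination. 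This identifies $\mathcal{P}_M(t) = \exp\{\lambda_M(\mathcal{P}_{C_M}(t)-1)\}$ with $\mathcal{P}_{C_M}(t) = \sum_{v\in\mathcal{V}}\frac{1-\alpha_{(\mathrm{pa}(v),v)}}{d-\sum_{e\in\mathcal{E}}\alpha_e}\eta^{\mathcal{T}_r}_v(t\vecun{v})$. To confirm this is a genuine compound Poisson pgf one notes that $\mathcal{P}_{C_M}$ is a convex combination of the $\eta^{\mathcal{T}_r}_v$, each of which is itself a (univariate, after the substitution $\boldsymbol{t}_{v\mathrm{dsc}(v)}=t\vecun{v}$) pgf as remarked after Theorem~\ref{th:JointPGF}; a convex combination of pgfs is a pgf, and $\lambda_M>0$ since each $\alpha_e\in[0,1]$ and there are $d-1<d$ edges.

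For the last claim, $\mathrm{E}[C_M] = \mathcal{P}_{C_M}'(1)$. Differentiating the convex combination, $\mathcal{P}_{C_M}'(1) = \sum_{v\in\mathcal{V}}\frac{1-\alpha_{(\mathrm{pa}(v),v)}}{d-\sum_{e\in\mathcal{E}}\alpha_e}\,\frac{d}{dt}\eta^{\mathcal{T}_r}_v(t\vecun{v})\big|_{t=1}$. The remaining ingredient is that $\frac{d}{dt}\eta^{\mathcal{T}_r}_v(t\vecun{v})\big|_{t=1} = |v\mathrm{dsc}(v)|$, i.e.\ the number of vertices in the subtree rooted at $v$; this follows from the recursion \eqref{eq:jointpgf-h} by induction on the height of $v$, since $\eta^{\mathcal{T}_r}_v(\boldsymbol{1}) = 1$ and differentiating the product $t_v\prod_{j\in\mathrm{ch}(v)}(1-\alpha_{(v,j)}+\alpha_{(v,j)}\eta^{\mathcal{T}_r}_j)$ at $\boldsymbol{t}=\boldsymbol{1}$ gives $1 + \sum_{j\in\mathrm{ch}(v)}\alpha_{(v,j)}\cdot\frac{d}{dt}\eta^{\mathcal{T}_r}_j$. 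Actually the cleanest route avoids the subtree-count identity: since $M=\sum_v N_v$ with each $N_v\sim$~Poisson$(\lambda)$, we have $\mathrm{E}[M]=\lambda d$, and for a compound Poisson, $\mathrm{E}[M]=\lambda_M\,\mathrm{E}[C_M]$, whence $\mathrm{E}[C_M] = \lambda d/\lambda_M = \frac{d}{d-\sum_{e\in\mathcal{E}}\alpha_e}$. I would present this shortcut as the main argument. The only genuine obstacle is the bookkeeping in the first two paragraphs — getting the convention $\alpha_{(\mathrm{pa}(r),r)}=0$ to interact correctly with the edge-indexed sum and confirming the coefficients sum to $1$ — but this is routine once the bijection $v\leftrightarrow(\mathrm{pa}(v),v)$ is made explicit.
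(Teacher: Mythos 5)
Your proposal is correct and follows essentially the same route as the paper: substitute $t\,\boldsymbol{1}_d$ into the joint pgf of Theorem~\ref{th:JointPGF}, rewrite the exponent as $\lambda\left(d-\sum_{e\in\mathcal{E}}\alpha_e\right)$ times a convex combination of the $\eta_v^{\mathcal{T}_r}(t\vecun{v})$ minus one, and obtain $\mathrm{E}[C_M]$ from $\mathrm{E}[M]=\lambda d=\lambda_M\mathrm{E}[C_M]$ (Wald's lemma), exactly as in the paper's proof. One caveat on the aside you discard: the identity $\frac{d}{dt}\eta_v^{\mathcal{T}_r}(t\vecun{v})\big|_{t=1}=|v\mathrm{dsc}(v)|$ is false in general — your own recursion $1+\sum_{j\in\mathrm{ch}(v)}\alpha_{(v,j)}\frac{d}{dt}\eta_j^{\mathcal{T}_r}$ shows the count is attenuated by the dependence parameters, yielding $\sum_{u\in v\mathrm{dsc}(v)}\prod_{e\in\mathrm{path}(v,u)}\alpha_e$ rather than the raw subtree size — but since you rightly present the Wald shortcut as the main argument, this does not affect the validity of the proof.
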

\begin{proof}
	The probability generating function of a random variable $X$ following a compound Poisson distribution is given by
	\begin{equation}
		\mathcal{P}_{X}(t) = \mathcal{P}_{\Lambda_X}(\mathcal{P}_{C_X}(t)) = \mathrm{e}^{\lambda_X(\mathcal{P}_{C_X}(t)-1)}, \quad \lambda>0,\,t\in[-1,1],
		\label{eq:PGFCompoundPoisson}
	\end{equation}
	where $\mathcal{P}_{\Lambda_X}$ is the probability generating function of the primary (or counting) distribution of $X$ and $\mathcal{P}_{C_X}$ the probability generating function of its secondary distribution. Given (\ref{eq:PGFofMfromJointPGFofN}), we derive the probability generating function of $M$ from (\ref{eq:JointPGF}) as follows
	\begin{align}
		\mathcal{P}_{M}(t) &= \prod_{v\in\mathcal{V}}\mathrm{e}^{\lambda(1-\alpha_{(\mathrm{pa}(v),v)})(\eta_{v}^{\mathcal{T}_r}(t\vecun{v}) - 1)}
  %\notag\\
		%&
  = \mathrm{e}^{\sum_{v\in\mathcal{V}}\lambda(1-\alpha_{(\mathrm{pa}(v),v)})(\eta_{v}^{\mathcal{T}_r}(t\vecun{v}) - 1)}
  %\notag\\
		%&
  = \mathrm{e}^{\lambda\left(d-\sum_{e\in\mathcal{E}}\alpha_{e}\right) \left( \left(\sum_{v\in\mathcal{V}}\frac{1-\alpha_{(\mathrm{pa}(v),v)}}{d-\sum_{e\in\mathcal{E}}\alpha_{e}}\eta_{v}^{\mathcal{T}_r}(t\vecun{v})\right)-1\right)}.\label{eq:proofPGFofM-1}
	\end{align}
	Comparing (\ref{eq:PGFCompoundPoisson}) and (\ref{eq:proofPGFofM-1}), we deduce that $M$ follows a compound Poisson distribution with $\lambda_M =  \lambda(d-\sum_{e\in\mathcal{E}}\alpha_{e})$ and $\mathcal{P}_{C_M}(t) = \sum_{v\in\mathcal{V}}\frac{1-\alpha_{(\mathrm{pa}(v),v)}}{d-\sum_{e\in\mathcal{E}}\alpha_{e}}\eta_{v}^{\mathcal{T}_r}(t\vecun{v})$. Since $\eta_{v}^{\mathcal{T}_r}$, $v\in\mathcal{V}$, is a joint probability generating function, as discussed in Section \ref{subsect:PMF-PGF}, $\mathcal{P}_{C_M}$ is a probability generating function as well, given (\ref{eq:PGFofMfromJointPGFofN}). Finally, since $\mathrm{E}[M] = \lambda d$, and since $\mathrm{E}[M] = \lambda_M\mathrm{E}[C_M]$ by Wald's Lemma, we find $\mathrm{E}[C_M] = \frac{d}{d - \sum_{e\in\mathcal{E}}\alpha_{e}}$. 
\end{proof}

From Theorem~\ref{th:DistrM}, one notices that $\lambda_M$ and $\mathrm{E}[C_M]$ do not depend on the shape of $\mathcal{T}_r$. This arises from the fact that $\mathrm{Pr}(M=0)$ does not depend on the shape of the tree either, since
\begin{equation*}
	\mathrm{Pr}(M=0) = \mathrm{Pr}\left(\bigcap_{v\in\mathcal{V}}\{N_v = 0\}\right)\notag\\
	= \mathrm{Pr}\left(\bigcap_{v\in\mathcal{V}}\{L_v = 0\}\right)\notag\\
	= \prod_{v\in\mathcal{V}}\mathrm{Pr}(L_v=0)
\end{equation*}
due to the independence assumption between components within $\boldsymbol{L}$. With $L_v$ following a Poisson distribution of parameter $\lambda(1-\alpha_{(\mathrm{pa}(v),v)})$, $v\in\mathcal{V}$, we obtain
\begin{equation*}
	\mathrm{Pr}(M=0) 
	= \prod_{v\in\mathcal{V}} \mathrm{e}^{-\lambda(1-\alpha_{(\mathrm{pa}(v),v)})}\notag\\ 
	= \mathrm{e}^{-\lambda(d-\sum_{v\in\mathcal{V}}\alpha_{(\mathrm{pa}(v),v)})},
\end{equation*}
which also highlights the value of $\lambda_M$. Thus, the shape of $\mathcal{T}_r$ is entirely expressed through the distribution of $C_M$, without affecting $\mathrm{E}[C_M]$. This is explored in the following corollary.

\begin{cor}[Secondary distributions]
	\label{th:SeverityDistrofM}
	For a tree $\mathcal{T}=(\mathcal{V},\mathcal{E})$, consider $\boldsymbol{N} = (N_v, \, v \in \mathcal{V})\sim\text{MPMRF}(\lambda,\boldsymbol{\alpha}, \mathcal{T})$. Let $\alpha_e = \alpha$ for every $e\in\mathcal{E}$, where $\alpha\in[0,1]$. If the underlying tree $\mathcal{T}$ to $\boldsymbol{N}$ is 
	\begin{itemize} [nosep]
		\item a $d$-vertex star, then the distribution of ${C_M}$ is a mixture of a degenerate distribution at 1 and a binomial distribution with size parameter $d-1$ and probability parameter $\alpha$, and shifted by one; 
		\item a $d$-vertex series tree, then the distribution of ${C_M}$ is a mixture of  geometric distributions of parameter $\alpha$, shifted by one and censored at $i$, with $i$ going from 1 to $d$; 
		\item a $\chi$-nary tree of radius $\xi$, then the distribution of ${C_M}$ is a mixture of distributions representing the sum up to the $i$th generation of a Galton-Watson process whose offspring distribution is a binomial distribution with size parameter $\chi$ and probability parameter $\alpha$, with $i$ going from 0 to $\xi$. 
	\end{itemize} 
	
\end{cor}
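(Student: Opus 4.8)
The plan is to specialize the conclusion of Theorem~\ref{th:DistrM}. With $\alpha_e=\alpha$ for every $e\in\mathcal{E}$ one has $\sum_{e\in\mathcal{E}}\alpha_e=(d-1)\alpha$ since $|\mathcal{E}|=d-1$, so $\lambda_M=\lambda\bigl(d-(d-1)\alpha\bigr)$, and writing $g_v(t):=\eta_v^{\mathcal{T}_r}(t\vecun{v})$ the secondary pgf becomes
\begin{equation*}
\mathcal{P}_{C_M}(t)=\frac{1}{d-(d-1)\alpha}\left(g_r(t)+(1-\alpha)\sum_{v\in\mathcal{V}\backslash\{r\}}g_v(t)\right).
\end{equation*}
This already exhibits $C_M$ as a finite mixture of the laws with pgf's $g_v$, $v\in\mathcal{V}$, with weights $w_r=\tfrac{1}{d-(d-1)\alpha}$ and $w_v=\tfrac{1-\alpha}{d-(d-1)\alpha}$ for $v\neq r$, which are nonnegative and sum to $1$. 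Since $\mathcal{P}_{C_M}(0)=0$, this compound Poisson representation is unique, so the value of the sum above does not depend on the rooting (consistently with Theorem~\ref{th:Reversibility}), and I may pick whichever root is most convenient for each shape. By \eqref{eq:jointpgf-h} the $g_v$'s satisfy $g_v(t)=t\prod_{j\in\mathrm{ch}(v)}\bigl(1-\alpha+\alpha g_j(t)\bigr)$ with $g_v(t)=t$ at every leaf; equivalently $g_v$ is the pgf of the total progeny $\sum_{j\in\{v\}\cup\mathrm{dsc}(v)}G_{v,j}^{\mathcal{T}_r}$ of the propagation started at $v$ described after Theorem~\ref{th:JointPGF}. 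Each bullet is then obtained by unrolling this recursion on the relevant shape and naming the resulting distribution.

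For the star, rooting at the hub $r$ makes $\mathrm{ch}(r)$ the set of all $d-1$ leaves, so $g_r(t)=t(1-\alpha+\alpha t)^{d-1}$, the pgf of a Binomial$(d-1,\alpha)$ shifted by one, while $g_v(t)=t$ for each leaf $v$, i.e.\ a point mass at $1$; hence $C_M$ is the mixture of these two laws with weights $w_r$ and $1-w_r=\tfrac{(d-1)(1-\alpha)}{d-(d-1)\alpha}$. For the series tree, rooting at an endpoint and relabelling so that $\mathrm{pa}(i)=i-1$ gives $g_d(t)=t$ and $g_i(t)=t\bigl(1-\alpha+\alpha g_{i+1}(t)\bigr)$ for $i<d$; a short induction on $d-i$ identifies $g_i$ as the pgf of $1+\min(J,d-i)$, where $J$ is geometric$(\alpha)$ on $\{0,1,\dots\}$, that is, a geometric$(\alpha)$ shifted by one and censored at level $d-i+1$. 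As $v$ runs over $\mathcal{V}$ these censoring levels run over $\{1,\dots,d\}$, each exactly once, the level-$d$ term carrying weight $w_r$ and the others weight $\tfrac{1-\alpha}{d-(d-1)\alpha}$, which is the announced mixture. For the $\chi$-nary tree of radius $\xi$, rooting at its canonical root makes $g_v$ depend only on the radius $m\in\{0,\dots,\xi\}$ of the pendant subtree at $v$, say $g_v=h_m$, with $h_0(t)=t$ and $h_m(t)=t\bigl(1-\alpha+\alpha h_{m-1}(t)\bigr)^{\chi}$; conditioning on the first generation shows by induction on $m$ that $h_m$ is the pgf of the total number of individuals in generations $0,\dots,m$ of a Galton--Watson process with Binomial$(\chi,\alpha)$ offspring law. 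Exactly $\chi^{\xi-m}$ vertices have pendant radius $m$ (the root being the only one with $m=\xi$), so collecting the weights yields the mixture over $m=0,\dots,\xi$, with total mass $\tfrac{1+(1-\alpha)\sum_{k=1}^{\xi}\chi^k}{d-(d-1)\alpha}=1$ since $\sum_{k=1}^{\xi}\chi^k=d-1$.

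The bookkeeping with the mixing weights is routine; the step requiring care is identifying the combinatorial meaning of each $g_v$ and stating the named distribution precisely -- in particular getting the censoring level versus the shift by one right in the series case, and the generation count right in the $\chi$-nary case -- which is why I would carry out those two identifications as explicit inductions on $d-i$ and on $m$ respectively. Invoking the uniqueness of the compound Poisson decomposition, so that the root may be chosen freely in each case, is what keeps the three computations short.
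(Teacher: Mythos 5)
Your proposal is correct and follows essentially the same route as the paper: specialize the compound Poisson pgf of $M$ from Theorem~\ref{th:DistrM}, read off the mixture weights $\tfrac{1}{d-(d-1)\alpha}$ (root) and $\tfrac{1-\alpha}{d-(d-1)\alpha}$ (non-root), unroll the recursion \eqref{eq:jointpgf-h} on each of the three shapes, and identify the resulting component pgfs (shifted binomial, shifted-and-censored geometric, cumulative Galton--Watson generations). The paper does the identifications by writing out the closed-form pgfs and citing Harris for the Galton--Watson case where you propose short inductions, but the decomposition and the key identifications are the same.
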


\begin{proof}
	We derive the probability generating function of $M$ under each structure by relying on (\ref{eq:proofPGFofM-1}). Under the assumption that $\alpha_e = \alpha$ for every $e\in\mathcal{E}$, $\alpha\in[0,1]$, the probability generating function of $M$ encrypted on a $d$-vertex star-shaped tree is given by
	\begin{equation*}
		\mathcal{P}_M(t) = \mathrm{e}^{\lambda(\alpha+(1-\alpha)d)\left(\frac{1}{\alpha+(1-\alpha)d}t(1-\alpha+\alpha t)^{d-1} + \frac{(1-\alpha)(d-1)}{\alpha+ (1-\alpha)d}t - 1\right)}, \quad d\geq 2,\; \lambda>0;
		%\label{eq:PGFofM-Star}
	\end{equation*}
	the probability generating function of $M$ encrypted on a $d$-vertex series-shaped tree is given by
	\begin{equation*}
		\mathcal{P}_M(t) = \mathrm{e}^{\lambda (\alpha+(1-\alpha)d) \left( \sum_{i=1}^{d}\frac{(1-\alpha)^{\min(1,d-i)}}{\alpha+(1-\alpha)d} \sum_{j=1}^{i} (1-\alpha)^{\min(1,i-j)}\alpha^{j-1}t^j  - 1\right)}, \quad d\geq 2,\; \lambda>0; 
	\end{equation*}
	and the probability generating function of $M$ encrypted on a $\chi$-nary tree of radius $\xi$ is given by
	\begin{equation*}
		\mathcal{P}_{M}(t) = \mathrm{e}^{\lambda \left(\alpha+ (1-\alpha)\frac{\chi^{\xi+1}-1}{\chi-1}\right) \left( \sum_{i=0}^{\xi} \frac{\chi^{\xi-i}(1-\alpha)^{\min(1,\xi-i)}}{\alpha+ (1-\alpha)\frac{\chi^{\xi+1}-1}{\chi-1}} \psi^{\{i\}}(t, t, \alpha, \chi)  - 1\right)} \quad \chi\geq1,\; \xi \geq 1,\; \lambda>0, 
	\end{equation*}
	where 
	\begin{equation}
		\psi(y, t,\alpha,\chi) = t(1 - \alpha + \alpha y)^{\chi},
		\label{eq:psi}
	\end{equation}
	with $\psi^{\{k\}}$ denoting the $k$th composition of $\psi$ on $y$, $k\in \mathbb{N}^*$, and $\psi^{\{0\}}(y, t,\alpha,\chi)=t$. The results then follow by identification of the probability generating functions, uniquely describing the distributions. To effectively see that $\psi^{\{i\}}(t,t,\alpha,\chi)$ is the probability generating function of the sum of generations 0 to $i$ of a Galton-Watson process, one may refer to Section 1.13.2 of \cite{harris1963theory}.  
\end{proof}

Relying on (\ref{eq:PGFofMfromJointPGFofN}), Algorithm~\ref{algo:fftM} exploits the efficiency of the FFT algorithm and the recursiveness of the joint probability generating function of $\boldsymbol{N}$ to compute the probability mass function of $M$ in a reasonable computer time, even for large dimensions $d$. As for Algorithm~\ref{algo:simulNi}, the weighted adjacency matrix provided in input must be constructed in a topological order given a chosen root $r\in\mathcal{V}$; Algorithm~\ref{algo:fftM} utilizes this construction to recursively compute the values of the probability generating functions $\{\eta_v^{\mathcal{T}_r}(t\vecun{v}), \, v\in\mathcal{V}\}$. In Section~\ref{sect:NumericalExamples}, we provide an example of application of this algorithm.

\begin{algorithm}[H]
	\label{algo:fftM}
	\caption{Computing the probability mass function of $M$.} 
	\KwIn{Weighted adjacency matrix $\boldsymbol{A} = (A_{ij})_{i\times j\in \mathcal{V}\times \mathcal{V}}$; $\lambda$.}
	\KwOut{Vector $\boldsymbol{p}^{(M)}=(p^{(M)}_k)_{k\in\{1,\ldots,k_{\max}\}}$ such that $p^{(M)}_k$ = $p_{M}(k-1)$.}
	 Set $n_{\mathrm{fft}}$ to be a large power of 2 (e.g. $2^{15}$). This determines $k_{\max}$. \\
	 Set $\boldsymbol{b} = (b_{i})_{i\in\{1,\ldots,n_{\mathrm{fft}}\}} = (0,1,0,0,\ldots,0)$.\\
	 Use fft to compute the discrete Fourier transform $\boldsymbol{\phi}^{(b)}$ of $\boldsymbol{b}$. \\
	 \For{$\ell = 1,\ldots, n_{\mathrm{fft}}$}{
		 Set $\boldsymbol{H} = (H_{ij})_{i\times j \in \mathcal{V}\times\mathcal{V}}$ to be an all-1 matrix. \\
		 \For{$k = d,d-1,\ldots,2$} {
			 Compute $\pi_k = \inf\{j:A_{kj} >0\}$.\\
			 Compute $h_k = {\phi}_\ell^{(b)} \prod_j H_{kj}$.\\
			 Overwrite $H_{\pi_kk}$ to be $(1-A_{\pi_kk})+A_{\pi_kk}h_k$.\\}
		 Compute $h_1 = {\phi}_\ell^{(b)} \prod_j H_{1j}$.\\
		 Compute \begin{equation}\phi_\ell^{(M)} = \prod_k \exp(\lambda (1-{A}_{\pi_kk})(h_k-1)).\label{eq:algo}\end{equation}\\}
	 Use fft to compute the inverse discrete Fourier transform $\boldsymbol{p}^{(M)}$ of $\boldsymbol{\phi}^{(M)}$.\\
	 Return $\boldsymbol{p}^{(M)}$.\\
\end{algorithm}

\subsection{Stochastic ordering of the sum and impact of dependence}
\label{subsect:StochacticOrderingM}

We use the convex order to explore the impact of an increase of $\boldsymbol{\alpha}$ on $M$ given a fixed underlying tree. We first recall the definition of the convex order. 
\begin{deff}[Convex order]
	Two random variables $X$ and $X^{\prime}$ are said to be ordered according to the convex order, denoted $X\preceq_{cx}X^{\prime}$, if 
	$ \mathrm{E}[\varphi(X)] \leq \mathrm{E}[\varphi(X^{\prime})]$ 
	for any convex function $\varphi$, given the expectations exist. 
\end{deff}

For an in-depth look at the convex order, one may consult Chapters 1 and 2 of \cite{muller2002comparison} and Chapter 3 of \cite{shaked2007}.

\begin{cor}
	For a tree $\mathcal{T}=(\mathcal{V},\mathcal{E})$, consider $\boldsymbol{N} = (N_v, \, v \in \mathcal{V})\sim\text{MPMRF}(\lambda,\boldsymbol{\alpha}, \mathcal{T})$ and $\boldsymbol{N}^{\prime}=(N^{\prime}_v,\, v\in\mathcal{V})\sim\text{MPMRF}(\lambda,\boldsymbol{\alpha}^{\prime}, \mathcal{T})$. Let $M$ and $M^{\prime}$ be the sums of the components of $\boldsymbol{N}$ and $\boldsymbol{N}^{\prime}$, respectively. If, for every $e \in \mathcal{E}$, 
	$\alpha_e \leq \alpha_e^{\prime}$, 
	then $M \preceq_{cx} M^{\prime}$.
	\label{th:ConvexOrderM}
\end{cor}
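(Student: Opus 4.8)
The plan is to deduce the convex ordering $M \preceq_{cx} M^{\prime}$ from the supermodular ordering $\boldsymbol{N} \preceq_{sm} \boldsymbol{N}^{\prime}$ established in Theorem~\ref{th:SupermodularOrderN}, using the well-known fact that the supermodular order is stronger than the convex order for sums of components. First I would observe that, since $\alpha_e \leq \alpha_e^{\prime}$ for every $e \in \mathcal{E}$, Theorem~\ref{th:SupermodularOrderN} immediately gives $\boldsymbol{N} \preceq_{sm} \boldsymbol{N}^{\prime}$. It then remains to show that the supermodular ordering of the vectors transfers to the convex ordering of their sums.

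The key step is the following standard implication: if $\boldsymbol{X} \preceq_{sm} \boldsymbol{X}^{\prime}$, then $\sum_{v} X_v \preceq_{cx} \sum_{v} X^{\prime}_v$. To see this, fix any convex function $\varphi : \mathbb{R} \mapsto \mathbb{R}$ and define $\tilde{\varphi} : \mathbb{R}^d \mapsto \mathbb{R}$ by $\tilde{\varphi}(\boldsymbol{x}) = \varphi\!\left(\sum_{v\in\mathcal{V}} x_v\right)$. One checks that $\tilde{\varphi}$ is supermodular: indeed, for a convex $\varphi$ and any $\boldsymbol{x},\boldsymbol{x}^{\prime}\in\mathbb{R}^d$, writing $s = \sum_v x_v$ and $s^{\prime} = \sum_v x^{\prime}_v$, one has $\sum_v (x_v\wedge x^{\prime}_v) + \sum_v (x_v\vee x^{\prime}_v) = s + s^{\prime}$ while $\min(s,s^{\prime}) \leq \sum_v(x_v\wedge x^{\prime}_v)$ and $\max(s,s^{\prime}) \geq \sum_v(x_v\vee x^{\prime}_v)$; convexity of $\varphi$ (which implies $\varphi(a)+\varphi(b)\le\varphi(c)+\varphi(d)$ whenever $a+b=c+d$ and $[a,b]\subseteq[c,d]$) then yields $\tilde{\varphi}(\boldsymbol{x}) + \tilde{\varphi}(\boldsymbol{x}^{\prime}) \leq \tilde{\varphi}(\boldsymbol{x}\wedge\boldsymbol{x}^{\prime}) + \tilde{\varphi}(\boldsymbol{x}\vee\boldsymbol{x}^{\prime})$. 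Applying the definition of the supermodular order to $\tilde{\varphi}$ gives $\mathrm{E}[\varphi(M)] = \mathrm{E}[\tilde{\varphi}(\boldsymbol{N})] \leq \mathrm{E}[\tilde{\varphi}(\boldsymbol{N}^{\prime})] = \mathrm{E}[\varphi(M^{\prime})]$, which is exactly $M \preceq_{cx} M^{\prime}$. Alternatively, one may simply cite that this implication is recorded in the standard references, e.g. Theorem~3.9.5 of \cite{muller2002comparison} or Theorem~9.A.9 of \cite{shaked2007}, since the supermodular order is preserved under the (componentwise-increasing, here additive) map $\boldsymbol{x}\mapsto\sum_v x_v$ and reduces to the convex order for univariate random variables.

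I do not expect any real obstacle here: the marginals of $\boldsymbol{N}$ and $\boldsymbol{N}^{\prime}$ are both $\mathrm{Poisson}(\lambda)$ (so $\mathrm{E}[M] = \mathrm{E}[M^{\prime}] = \lambda d$, consistent with the convex order requiring equal means), and the only mild point to be careful about is the verification that $\boldsymbol{x}\mapsto\varphi(\sum_v x_v)$ is supermodular when $\varphi$ is convex — which is the one-line computation sketched above. The result is thus essentially a corollary of Theorem~\ref{th:SupermodularOrderN} together with a classical property of the supermodular order, and I would present it in that compact form.
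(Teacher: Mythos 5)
Your proposal is correct and takes essentially the same route as the paper, whose entire proof is to combine Theorem~\ref{th:SupermodularOrderN} with the classical implication that supermodular ordering of a random vector yields convex ordering of the sum of its components (the paper cites Theorem~3.1 of \cite{muller1997stop} for this step, which you also identify via standard references). One small slip in your self-contained verification of that implication: the intermediate inequalities should read $\sum_v(x_v\wedge x^{\prime}_v)\leq\min(s,s^{\prime})$ and $\sum_v(x_v\vee x^{\prime}_v)\geq\max(s,s^{\prime})$ — the reverse of what you wrote — since these are precisely what give the containment $[\min(s,s^{\prime}),\max(s,s^{\prime})]\subseteq\left[\sum_v(x_v\wedge x^{\prime}_v),\sum_v(x_v\vee x^{\prime}_v)\right]$ needed to invoke convexity of $\varphi$.
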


\begin{proof}
	The result follows from Theorem \ref{th:SupermodularOrderN} and Theorem 3.1 of \cite{muller1997stop}. 
\end{proof}

The convex order should be interpreted as such: although they are on the same scale, $M^{\prime}$ encapsulates more variability than $M$. In this sense, an implication of the convex order is that, if $M\preceq_{cx}M^{\prime}$, then $\mathrm{E}[M]=\mathrm{E}[M^{\prime}]$ and $\mathrm{Var}(M)\leq \mathrm{Var}(M^{\prime})$. The results of Corollary~\ref{th:ConvexOrderM} thus shows that stronger dependence induces more variability for the sum $M$, which is intuitive.

\subsection{Expected allocations}
\label{sect:ExpAlloc}

A quantity that allows to understand the subsumed behavior of each constituent of $M$ in an aggregate dynamic is the expected allocation of an element of $\boldsymbol{N}$ to $M$, defined in our context as 
$\mathrm{E}[N_v\mathbb{1}_{\{M=k\}}]$, $k\in\mathbb{N}$, $v\in\mathcal{V}$.
Expected allocations provide a way to dive into the underlying dynamics of $\boldsymbol{N}$. In insurance risk modeling for example, expected allocations may be used to compute conditional mean risk-sharing, introduced in \cite{denuit2012convex}, given by
\begin{equation*}
	\mathrm{E}[N_v|M=k] = \frac{\mathrm{E}[N_v\mathbb{1}_{\{M=k\}}]}{\mathrm{Pr}(M=k)},\quad k\in\mathbb{N},\; v\in\mathcal{V}.
    \label{eq:cmrs}
\end{equation*}
For analytical results on the conditional mean risk-sharing rule within the context of probabilistic graphical models, see \cite{denuit2022conditional}. Let us define the Tail-Value-at-Risk (TVaR) at confidence level $\kappa$, for a random variable $X$, as $\mathrm{TVaR}_{\kappa}(X)=\tfrac{1}{1-\kappa}\int_\kappa^{1}\mathrm{VaR}_u(X)\mathrm{d}u$, where $\mathrm{VaR}_{\kappa}(X) = \inf\{x\in\mathbb{R}:\,F_X(x) \geq \kappa\}$, $\kappa\in[0,1)$. Another noteworthy application of expected allocations is for the computation of contributions to the TVaR under Euler's principle. In capital allocation and risk theory, contributions to the TVaR serve to quantify the portion of the aggregate risk enclosed within each component of a random vector; see \cite{tasche2007capital}. 
Since $M$ takes on values in $\mathbb{N}$, the contribution of $N_v$ to the TVaR, for each $v\in\mathcal{V}$, is given by 
 \begin{align}
 	\mathcal{C}^{\mathrm{TVaR}}_{\kappa}(N_v;\, M) &= \frac{1}{1-\kappa}\left( \mathrm{E}[N_v\mathbb{1}_{\{M>\mathrm{VaR}_{\kappa}(M)\}}] + \frac{F_M(\mathrm{VaR}_{\kappa}(M))-\kappa}{p_M(\mathrm{VaR}_{\kappa}(M))} \mathrm{E}[N_v\mathbb{1}_{\{M=\mathrm{VaR}_{\kappa}(M)\}}]\right) \notag\\
  &= \frac{1}{1-\kappa}\left( \mathrm{E}[N_v] -  \sum_{i=0}^{\mathrm{VaR}_{\kappa}(M)}\mathrm{E}[N_v\mathbb{1}_{\{M=i\}}] + \frac{F_M(\mathrm{VaR}_{\kappa}(M))-\kappa}{p_M(\mathrm{VaR}_{\kappa}(M))} \mathrm{E}[N_v\mathbb{1}_{\{M=\mathrm{VaR}_{\kappa}(M)\}}]\right),\label{eq:contribTVaR}
\end{align}
 for $\kappa \in [0,1)$; see, for instance, Section~2 in \cite{mausser2018long}. Note that $\sum_{v\in\mathcal{V}} \mathcal{C}^{\mathrm{TVaR}}_{\kappa}(N_v;\, M) = \mathrm{TVaR}_{\kappa}(M)$ holds for $\kappa \in [0,1)$.
In \cite{blier2022generating}, the authors discuss the multiple applications of expected allocations and introduce the ordinary generating function associated to the sequence $\{\mathrm{E}[N_v\mathbb{1}_{\{M=k\}}],\, k\in\mathbb{N}\}$, for the purpose of efficient computations.

\begin{deff}[OGFEA]
	\label{def:OGFEA}
	Let $\boldsymbol{X} = (X_v,\, v\in\mathcal{V})$ be a vector of random variables taking values in $\mathbb{N}^d$ and $Y$ be the sum of its components.
	The ordinary generating function of expected allocations (OGFEA) of $X_v$ to $Y$, $v\in\mathcal{V}$, is $\mathcal{P}_Y^{[v]}$ such that
	\begin{equation*}
		\mathcal{P}_Y^{[v]}(t) = \sum_{k=0}^{\infty}t^k\mathrm{E}[X_v\mathbb{1}_{\{Y=k\}}], \quad t\in[-1, 1],\,v\in\mathcal{V}.
	\end{equation*}
\end{deff}

In the following theorem, we provide the OGFEA within our context.

\begin{theorem}[OGFEA]
	\label{th:OGFEA-M}
	For a tree $\mathcal{T}=(\mathcal{V},\mathcal{E})$, consider $\boldsymbol{N} = (N_v, \, v \in \mathcal{V})\sim\text{MPMRF}(\lambda,\boldsymbol{\alpha}, \mathcal{T})$, and suppose $M=\sum_{v\in\mathcal{V}}N_v$. The OGFEA of $N_v$ to $M$ is given by
	\begin{equation}
		\mathcal{P}^{[v]}_M (t) = \lambda \eta_{v}^{\mathcal{T}_v}(t\vecun{v})  \mathcal{P}_M(t), \quad t\in[-1,1]. 
		\label{eq:OGFEA}
	\end{equation} 
\end{theorem}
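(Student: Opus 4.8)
The plan is to express the expected allocation $\mathrm{E}[N_v\mathbb{1}_{\{M=k\}}]$ through a derivative of a bivariate generating function and then exploit the joint pgf formula of Theorem~\ref{th:JointPGF}. Concretely, introduce the auxiliary function $Q(s,t) = \mathrm{E}[s^{N_v}t^{M}] = \mathrm{E}[s^{N_v}t^{\sum_{u\in\mathcal{V}}N_u}]$, obtained from the joint pgf $\mathcal{P}_{\boldsymbol{N}}$ by setting $t_v = st$ and $t_u = t$ for $u\neq v$. Then $\mathcal{P}^{[v]}_M(t) = \sum_k t^k\,\mathrm{E}[N_v\mathbb{1}_{\{M=k\}}] = \left.\frac{\partial}{\partial s}Q(s,t)\right|_{s=1}$, since differentiating $s^{N_v}$ in $s$ and evaluating at $s=1$ brings down a factor $N_v$. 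So the whole computation reduces to differentiating the explicit product formula \eqref{eq:JointPGF} in the single variable $s$ that multiplies the $v$-th slot.

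Next I would carry out that differentiation using the logarithmic-derivative trick: $\partial_s Q = Q\cdot\partial_s\log Q$, and $\log\mathcal{P}_{\boldsymbol{N}}(\boldsymbol{t}) = \sum_{u\in\mathcal{V}}\lambda(1-\alpha_{(\mathrm{pa}(u),u)})\big(\eta^{\mathcal{T}_r}_u(\boldsymbol{t}_{u\mathrm{dsc}(u)})-1\big)$ is a sum, so only those terms where $t_v$ actually appears contribute. The key structural observation — and here is where the superscript bookkeeping introduced after Theorem~\ref{th:JointPGF} pays off — is to choose the root to be $v$ itself, i.e.\ work with $\mathcal{T}_v$. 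With this rooting, $v$ is the root, $\mathrm{dsc}(v)=\mathcal{V}\setminus\{v\}$, and the variable $t_v$ appears in $\eta^{\mathcal{T}_v}_u$ precisely for $u=v$ (where it enters as the explicit leading factor $t_v$ in \eqref{eq:jointpgf-h}) and \emph{nowhere else}, because in the rooting $\mathcal{T}_v$ no other vertex has $v$ among its descendants. Hence $\partial_s\log\mathcal{P}_{\boldsymbol{N}}$, evaluated along $t_v=st$, $t_u=t$, picks up only the $u=v$ summand, whose $\eta$-polynomial is $\eta^{\mathcal{T}_v}_v(\boldsymbol{t}_{v\mathrm{dsc}(v)})$; since $\eta^{\mathcal{T}_v}_v$ is homogeneous of degree one in $t_v$ (it equals $t_v$ times a product not involving $t_v$), we get $s\,\partial_s\eta^{\mathcal{T}_v}_v = \eta^{\mathcal{T}_v}_v$, and with $\alpha_{(\mathrm{pa}(v),v)}=0$ for the root the prefactor is just $\lambda$. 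Evaluating at $s=1$ turns $\boldsymbol{t}$ into $t\boldsymbol{1}_d$, so $\eta^{\mathcal{T}_v}_v$ becomes $\eta^{\mathcal{T}_v}_v(t\vecun{v})$ and $Q(1,t)=\mathcal{P}_{\boldsymbol{N}}(t\boldsymbol{1}_d)=\mathcal{P}_M(t)$ by \eqref{eq:PGFofMfromJointPGFofN}. Assembling the pieces gives $\mathcal{P}^{[v]}_M(t) = \lambda\,\eta^{\mathcal{T}_v}_v(t\vecun{v})\,\mathcal{P}_M(t)$, which is \eqref{eq:OGFEA}.

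The main obstacle is not any hard estimate but the care needed in the variable-tracking step: one must justify rigorously that, under the rooting at $v$, the formal variable $t_v$ occurs in exactly one factor of the product \eqref{eq:JointPGF} and occurs there to the first power only. This rests on the recursion \eqref{eq:jointpgf-h} — each $\eta^{\mathcal{T}_v}_u$ is a polynomial in $\{t_j : j\in u\mathrm{dsc}(u)\}$, and $v\notin u\mathrm{dsc}(u)$ for every $u\neq v$ once $v$ is the root — together with Theorem~\ref{th:Reversibility}, which guarantees that re-rooting at $v$ does not change the distribution of $\boldsymbol{N}$ and hence does not change $\mathcal{P}^{[v]}_M$. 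A secondary point to handle cleanly is the interchange of the $s$-derivative with the (infinite) sum defining the generating function; this is routine since all series have radius of convergence at least $1$ on the relevant domain and $\mathrm{E}[N_v]<\infty$, so differentiation under the summation/expectation is legitimate on $t\in[-1,1]$.
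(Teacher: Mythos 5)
Your proposal is correct and takes essentially the same route as the paper: the paper invokes the identity $\mathcal{P}_M^{[v]}(t)=\bigl[t_v\,\partial_{t_v}\mathcal{P}_{\boldsymbol{N}}(\boldsymbol{t})\bigr]\big|_{\boldsymbol{t}=t\boldsymbol{1}_d}$ (citing Theorem~2.4 of \cite{blier2022generating}) where you rederive the equivalent fact via $Q(s,t)=\mathrm{E}[s^{N_v}t^M]$, and then both arguments proceed identically by rooting the tree at $v$ (justified by Theorem~\ref{th:Reversibility}) so that $t_v$ appears only in the factor $\eta_v^{\mathcal{T}_v}$, using first-degree homogeneity in $t_v$ and $\alpha_{(\mathrm{pa}(v),v)}=0$ to conclude.
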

\begin{proof}
	From Theorem 2.4 of \cite{blier2022generating}, we have
	\begin{equation}
		\mathcal{P}_M^{[v]}(t) = \left. \left[ t_v  \frac{\partial}{\partial t_v} \mathcal{P}_{\boldsymbol{N}}(\boldsymbol{t}) \right] \right|_{\boldsymbol{t}=t\vecun{v}}.
		\label{eq:OGFEA-M-1}
	\end{equation}
	As we need to differentiate according to $t_v$, we prefer to root the underlying tree in $v$ since $\boldsymbol{t}_{u\mathrm{dsc}(u)}$, $u\in\mathcal{V}$, then contains $t_v$ only if $u=v$. The derivation would otherwise be complex due to (\ref{eq:jointpgf-h}) being recursively defined, and Theorem \ref{th:Reversibility} ensures the result is the same. Inserting (\ref{eq:JointPGF}) in (\ref{eq:OGFEA-M-1}), we obtain
	\begin{align}
		\mathcal{P}_M^{[v]}(t) &= \left. \left[ t_v  \frac{\partial}{\partial t_v} \prod_{i\in\mathcal{V}}\mathrm{e}^{\lambda(1-\alpha_{(\mathrm{pa}(i),i)})(\eta_{i}^{\mathcal{T}_v}(\boldsymbol{t}_{i\mathrm{dsc}(i)})-1)} \right] \right|_{\boldsymbol{t}=t\vecun{v}}\notag\\
		&=\left. \left[ t_v  \frac{\partial}{\partial t_v} \mathrm{e}^{\lambda(1-\alpha_{(\mathrm{pa}(v),v)})
			(\eta_{v}^{\mathcal{T}_v}(\boldsymbol{t}_{v\mathrm{dsc}(v)})-1)} \right] \right|_{\boldsymbol{t}=t\vecun{v}} \prod_{i\in\mathcal{V}\backslash\{v\}}\mathrm{e}^{\lambda(1-\alpha_{(\mathrm{pa}(i),i)})(\eta_{i}^{\mathcal{T}_v}(t\vecun{i})-1)} \notag\\
		&= \left. \lambda(1-\alpha_{(\mathrm{pa}(v),v)})\left[ t_v  \frac{\partial}{\partial t_v} 
		\eta_{v}^{\mathcal{T}_v}(\boldsymbol{t}_{v\mathrm{dsc}(v)}) \right] \right|_{\boldsymbol{t}=t\vecun{v}} \prod_{i\in\mathcal{V}}\mathrm{e}^{\lambda(1-\alpha_{(\mathrm{pa}(i),i)})(\eta_{i}^{\mathcal{T}_v}(t\vecun{i})-1)}. \notag
	\end{align}
	From (\ref{eq:jointpgf-h}), we notice $t_v  \frac{\partial}{\partial t_v} \eta_{v}^{\mathcal{T}_v}(\boldsymbol{t}_{v\mathrm{dsc}(v)}) = \eta_{v}^{\mathcal{T}_v}(\boldsymbol{t}_{v\mathrm{dsc}(v)})$. Moreover, $\alpha_{(\mathrm{pa}(v),v)} = 0$ by convention as we chose $v$ as the root. The result follows.
\end{proof}

From the OGFEA given in Theorem~\ref{th:OGFEA-M}, one directly identifies an analytic expression for the expected allocation by recovering the coefficients of (\ref{eq:OGFEA}) expressed in a polynomial form. This is presented in the following corollary. 

\begin{cor}[Expected Allocations]
	\label{th:ExpectedAllocs}
	For a tree $\mathcal{T}=(\mathcal{V},\mathcal{E})$, consider $\boldsymbol{N} = (N_v, \, v \in \mathcal{V})\sim\text{MPMRF}(\lambda,\boldsymbol{\alpha}, \mathcal{T})$, and suppose $M=\sum_{v\in\mathcal{V}}N_v$. Define $H_v^{\mathcal{T}_v}$ as a random variable with probability generating function given by $\eta_{v}^{\mathcal{T}_v}(t\vecun{v})$. The expected allocation of $N_v$ to $M$ is
	\begin{equation}
		\mathrm{E}[N_v\mathbb{1}_{\{M=k\}}] = \lambda \sum_{j=0}^{k} p_{H_v^{\mathcal{T}_v}}(k-j)p_M(j), \quad k\in\mathbb{N}.
		\label{eq:ExpeAllocs}
	\end{equation} 
\end{cor}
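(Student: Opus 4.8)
The plan is to read off the coefficients of the ordinary generating function identity established in Theorem~\ref{th:OGFEA-M}. By definition of the OGFEA (Definition~\ref{def:OGFEA}), the expected allocation $\mathrm{E}[N_v\mathbb{1}_{\{M=k\}}]$ is precisely the coefficient of $t^k$ in $\mathcal{P}_M^{[v]}(t)$, so it suffices to extract $[t^k]\,\mathcal{P}_M^{[v]}(t)$ from the right-hand side of \eqref{eq:OGFEA}.

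First I would recall that $\eta_v^{\mathcal{T}_v}(t\vecun{v})$ is a genuine probability generating function, as noted in Section~\ref{subsect:PMF-PGF} (it is the joint pgf of the propagation vector $\boldsymbol{G}_v^{\mathcal{T}_v}$ evaluated on the diagonal, hence the pgf of the sum of its components); this legitimizes introducing the random variable $H_v^{\mathcal{T}_v}$ with $\mathcal{P}_{H_v^{\mathcal{T}_v}}(t) = \eta_v^{\mathcal{T}_v}(t\vecun{v})$. Then \eqref{eq:OGFEA} reads $\mathcal{P}_M^{[v]}(t) = \lambda\,\mathcal{P}_{H_v^{\mathcal{T}_v}}(t)\,\mathcal{P}_M(t)$. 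Writing $\mathcal{P}_{H_v^{\mathcal{T}_v}}(t) = \sum_{i\ge 0} p_{H_v^{\mathcal{T}_v}}(i)\,t^i$ and $\mathcal{P}_M(t) = \sum_{j\ge 0} p_M(j)\,t^j$, the product of these two power series has $t^k$-coefficient equal to the convolution $\sum_{j=0}^k p_{H_v^{\mathcal{T}_v}}(k-j)\,p_M(j)$. Multiplying by $\lambda$ and equating coefficients of $t^k$ on both sides gives exactly \eqref{eq:ExpeAllocs}.

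The argument is essentially a one-line Cauchy-product computation, so there is no real obstacle; the only point worth stating carefully is the justification that $\eta_v^{\mathcal{T}_v}(t\vecun{v})$ is a pgf (so that the "probabilities" $p_{H_v^{\mathcal{T}_v}}(i)$ in \eqref{eq:ExpeAllocs} are well defined nonnegative numbers summing to one), which follows from the interpretation already given after Theorem~\ref{th:JointPGF}. One could add the sanity check that setting $t=1$ recovers $\mathrm{E}[N_v] = \lambda$, consistent with the Poisson marginals from Theorem~\ref{th:StoDynamics}, since $\mathcal{P}_{H_v^{\mathcal{T}_v}}(1)=\mathcal{P}_M(1)=1$.
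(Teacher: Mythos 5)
Your proposal is correct and follows essentially the same route as the paper: expand $\eta_v^{\mathcal{T}_v}(t\vecun{v})$ and $\mathcal{P}_M(t)$ as power series whose coefficients are $p_{H_v^{\mathcal{T}_v}}$ and $p_M$, form the Cauchy product, and identify the coefficient of $t^k$ via Definition~\ref{def:OGFEA}. Your added remark justifying that $\eta_v^{\mathcal{T}_v}(t\vecun{v})$ is a genuine pgf (via the interpretation after Theorem~\ref{th:JointPGF}) is a welcome, if minor, extra precision.
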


\begin{proof}
	We reexpress $\eta_{v}^{\mathcal{T}_v}(t\vecun{v})$ and  $\mathcal{P}_M(t)$ from (\ref{eq:OGFEA}) as polynomials in $t$. Since they are probability generating functions, their respective coefficients are given by $p_{H_v^{\mathcal{T}_v}}$ and $p_M$. Hence, we find
	\begin{equation}
		\mathcal{P}_{M}^{[v]}(t) = \lambda\left(\sum_{k=0}^{\infty}p_{H_v^{\mathcal{T}_v}}(k)t^k\right) \left(\sum_{k=0}^{\infty}p_{M}(k)t^k\right)\notag\\
		= \lambda \sum_{k=0}^{\infty}\left(\sum_{j=0}^k  p_{H_v^{\mathcal{T}_v}}(k-j)p_M(j) \right)t^k,
		\label{eq:ExpeAlloc-1}
	\end{equation}
	performing the Cauchy product. We identify the coefficient of $t^k$, $k\in\mathbb{N}$, in (\ref{eq:ExpeAlloc-1}) to be $\lambda \sum_{j=0}^{k} p_{H_v^{\mathcal{T}_v}}(k-j)p_M(j)$. This is the expected allocation to a total outcome $M=k$, according to Definition~\ref{def:OGFEA}.
\end{proof}

Under the assumption $\alpha_e=0$ for all $e\in\mathcal{E}$, which, as established above, corresponds to the case where the components of $\boldsymbol{N}$ are independent, $H_v^{\mathcal{T}_v}$ is a degenerate random variable at 1, $v\in\mathcal{V}$. This is deduced from (\ref{eq:jointpgf-h}), as then $\eta_{v}^{T_v}(t\vecun{v}) = t$ for all $v \in\mathcal{V}$. Hence, since $M$ itself then follows a Poisson distribution of parameter $\lambda d$, equation (\ref{eq:ExpeAllocs}) becomes
\begin{equation*}
	\mathrm{E}[N_v\mathbb{1}_{\{M=k\}}] = \lambda \sum_{j=0}^{k} \mathbb{1}_{\{k-j = 1\}}p_M(j) = \lambda p_M(k-1) = \lambda \frac{k}{\lambda d}  p_M(k),\quad k\in\mathbb{N},
\end{equation*}
which indeed corresponds to the expected allocation to the sum of independent Poisson  random variables, as provided notably in Section 10.3 of \cite{marceau2013modelisation}. From Theorem~\ref{th:OGFEA-M} and Corollary~\ref{th:ExpectedAllocs}, we note that the difference in the expected allocations of two components of the same $\boldsymbol{N}$, say $N_v$ and $N_w$, $v,w\in\mathcal{V}$, is only due to $\eta_{v}^{\mathcal{T}_v}$ and $\eta_{w}^{\mathcal{T}_w}$. Indeed, $\lambda$, $\mathcal{P}_M$ and $p_M$, as in (\ref{eq:OGFEA}) and (\ref{eq:ExpeAllocs}) are the same when considering $v$ or $w$.

Forthcoming Algorithm \ref{algo:OGFEA} leans on Section 2.3 of \cite{blier2022generating} to provide a method for the computation of expected allocations of $N_v$ to $M$ using the FFT algorithm. It relies on Theorem \ref{th:OGFEA-M} and, as in Algorithm \ref{algo:fftM}, exploits the recursiveness of (\ref{eq:jointpgf-h}) in Theorem~\ref{th:JointPGF}. A combined use of Algorithms \ref{algo:fftM} and \ref{algo:OGFEA} allows to compute conditional mean risk-sharing. Algorithm~\ref{algo:OGFEA} is relevent although Corollary~\ref{th:ExpectedAllocs} provides an analytic expression for expected allocations, as it is more efficient computationwise than the programming of the convolution product in (\ref{eq:ExpeAllocs}) if one were to compute expected allocations for multiple values of $k$, $k\in\mathbb{N}$.

\begin{algorithm}[H]
	\label{algo:OGFEA}
	\caption{Computing expected allocations of $N_v$ to $M$.} 
	\KwIn{Weighted adjacency matrix $\boldsymbol{A} = (A_{ij})_{i\times j\in \mathcal{V}\times \mathcal{V}}$; $\lambda$.}
	\KwOut{Vector $\boldsymbol{p}^{(M)}=(p^{(M)}_k)_{k\in\{1,\ldots,k_{\max}\}} $ such that $p^{(M)}_k$ = $\mathrm{E}[N_v\mathbb{1}_{\{M=k-1\}}]$.}
	 Modify $\boldsymbol{A}$ to be topologically ordered accordingly to root $v$ using Algorithm \ref{algo:changematrix}, \ref{sect:AdditionalAlgo}.\\
	 Follow all the steps of Algorithm \ref{algo:fftM}, but replace (\ref{eq:algo}) by\\ %the last step of the outer \textbf{for} loop by\\
	 $$ \phi_l^{(M)} = \lambda h_1 \prod_{k} \exp(\lambda(1-A_{\pi_k,k})(h_k-1)).$$
\end{algorithm}

\section{Numerical example}
\label{sect:NumericalExamples}

We present an example to synthesize some of the findings from the previous sections. In particular, the aim is to illustrate the impact of altering the dependence parameters on the distribution of $M$ and the associated contributions to the TVaR under Euler's principle $\mathcal{C}_{\kappa}^{\mathrm{TVaR}}(N_v;M)$, $v\in\mathcal{V}$. Consider $\boldsymbol{N}\sim$ MPMRF$(\lambda,\boldsymbol{\alpha}, \mathcal{T})$, with $\lambda = 1$ and the 50-vertex tree $\mathcal{T}$ depicted in Fig.~\ref{fig:Pando}. Consider four vectors of dependence parameters: $\boldsymbol{\alpha}^{(0)}$, $\boldsymbol{\alpha}^{(0.3)}$, $\boldsymbol{\alpha}^{(0.7)}$ and $\boldsymbol{\alpha}^{(0.9)}$, such that $\boldsymbol{\alpha}^{(\beta)} = (\beta,\ldots, \beta) $, $\beta\in[0,1]$, that is, for each vector, we suppose identical dependence parameter for all edges. Assume $\boldsymbol{N}^{(\beta)}\sim$ MPMRF$(\lambda,\boldsymbol{\alpha}^{(\beta)},\mathcal{T})$, with $\lambda = 1$ and the tree $\mathcal{T}$ depicted in Fig.~\ref{fig:Pando}, and correspondingly $M^{(\beta)} = \sum_{v\in\mathcal{V}}N_v^{(\beta)}$, $\beta\in[0,1]$. We are interested in the distribution of $M^{(\beta)}$. 

\begin{minipage}[b]{0.45\textwidth}
               \begin{figure}[H] 
		\centering
		\begin{tikzpicture}[every node/.style={text=Black, circle, draw = Maroon, inner sep=0mm, outer sep = 0mm, minimum size=3mm, fill = White}, node distance = 1.5mm, scale=0.1, thick]
		\node (1a) {\tiny $1$};
		\node [right=of 1a] (2a) {\tiny $2$};
		\node [above left = of 2a] (3a) {\tiny $3$};
		\node [above = of 2a] (4a) {\tiny $4$};
		\node [above right =of 2a] (5a) {\tiny $5$};
		\node [below left=of 2a] (6a) {\tiny $6$};
		\node [below  =of 2a] (7a) {\tiny $7$};
            \node [below right =of 2a] (8a) {\tiny $8$};
            \node [right  = 0.875cm of 2a] (9a) {\tiny $9$};
            \node [above left  = of 9a] (10a) {\tiny $10$};
            \node [above = of 9a] (11a) {\tiny $11$};
            \node [above right = of 9a] (12a) {\tiny $12$};
            \node [below left  = of 9a] (13a) {\tiny $13$};
            \node [below  = of 9a] (14a) {\tiny $14$};
            \node [below right = of 9a] (15a) {\tiny $15$};
            \node [right  = 0.875cm of 9a] (16a) {\tiny $16$};
            \node [above left  = of 16a] (17a) {\tiny $17$};
            \node [above  = of 16a] (18a) {\tiny $18$};
            \node [above right  = of 16a] (19a) {\tiny $19$};
            \node [below left  = of 16a] (20a) {\tiny $20$};
            \node [below   = of 16a] (21a) {\tiny $21$};
            \node [below right  = of 16a] (22a) {\tiny $22$};
            \node [right  = 0.875cm of 16a] (23a) {\tiny $23$};
            \node [above left  = of 23a] (24a) {\tiny $24$};
            \node [above  = of 23a] (25a) {\tiny $25$};
            \node [above right  = of 23a] (26a) {\tiny $26$};
            \node [below left  = of 23a] (27a) {\tiny $27$};
            \node [below   = of 23a] (28a) {\tiny $28$};
            \node [below right  = of 23a] (29a) {\tiny $29$};
            \node [right  = 1.675cm of 23a] (30a) {\tiny $30$};
            \tikzset{shift=(30a)};
            \foreach \name in {31,...,50}
				{
             \node (\name) at ({(-\name*17.1-386.5)}: 11.5cm) {\tiny $\name$};
             }
             \node [draw = none, below  = 0.25cm of 46] (phantom) {};
		
		\draw (1a) -- (2a);
		\draw (2a) -- (3a);
		\draw (2a) -- (4a);
		\draw (2a) -- (5a);            
		\draw (2a) -- (6a);       
		\draw (2a) -- (7a);
            \draw (2a) -- (8a);
            \draw (2a) -- (9a);
            \draw (9a) -- (10a);
            \draw (9a) -- (11a);
            \draw (9a) -- (12a);
            \draw (9a) -- (13a);
            \draw (9a) -- (14a);
            \draw (9a) -- (15a);
            \draw (9a) -- (16a);
            \draw (16a) -- (17a);
            \draw (16a) -- (18a);
            \draw (16a) -- (19a);
            \draw (16a) -- (20a);
            \draw (16a) -- (21a);
            \draw (16a) -- (22a);
            \draw (16a) -- (23a);
            \draw (23a) -- (24a);
            \draw (23a) -- (25a);
            \draw (23a) -- (26a);
            \draw (23a) -- (27a);
            \draw (23a) -- (28a);
            \draw (23a) -- (29a);
            \draw (23a) -- (30a);
            \draw (30a) -- (31);
            \draw (30a) -- (32);
            \draw (30a) -- (33);
            \draw (30a) -- (34);
            \draw (30a) -- (35);
            \draw (30a) -- (36);
            \draw (30a) -- (37);
            \draw (30a) -- (38);
            \draw (30a) -- (39);
            \draw (30a) -- (40);
            \draw (30a) -- (41);
            \draw (30a) -- (42);
            \draw (30a) -- (43);
            \draw (30a) -- (44);
            \draw (30a) -- (45);
            \draw (30a) -- (46);
            \draw (30a) -- (47);
            \draw (30a) -- (48);
            \draw (30a) -- (49);
            \draw (30a) -- (50);
		\end{tikzpicture}
		\caption{Tree $\mathcal{T}$ for the numerical example.}
		\label{fig:Pando}
	\end{figure}
\end{minipage}
\hfill
\begin{minipage}[b]{0.5\textwidth}
 \begin{figure}[H]
\centering
\includegraphics[width=\textwidth]{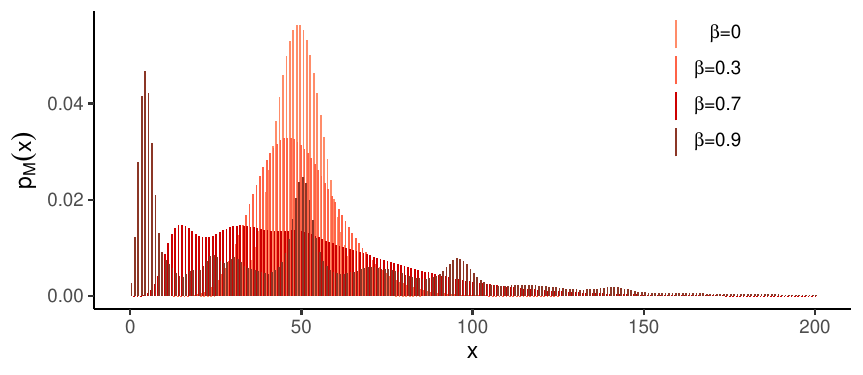}
\caption{Probability mass functions of $M^{(\beta)}$, $\beta\in\{0,0.3,0.7,0.9\}$.}
\label{fig:pmfM}
\end{figure}
\end{minipage}

 Referring to Algorithm~\ref{algo:fftM}, we compute the probability mass functions of $M^{(\beta)}$, $\beta\in\{0,0.3,0.7,0.9\}$; their depictions are provided in Fig.~\ref{fig:pmfM}.
    The case $\beta=0$ corresponds to the limit distribution where all components of $\boldsymbol{N}$ are independent and thus serves as a comparison standard in assessing the impact of the dependence in the cases $\beta = 0.3,0.7,0.9$. For $\beta = 0$, $M$ is expected to behave as a Gaussian distribution, given the central limit theorem; indeed, the shape of the probability mass function of $M^{(0)}$ seems to confirm this. One may suppose this behavior would also surface if the dependence between components is weak overall. While correlation coefficients of 0.3 and higher may not necessarily be considered weak dependence, one must not forget that the components of the vector $\boldsymbol{\alpha}$ multiply themselves following the path between vertices -- recall (\ref{eq:rhoNj1Nj2}) -- so that, in our context, $\beta$ acts as an exponential base to compute correlations. Hence, given the tree $\mathcal{T}$, for every pair of distinct vertices $u,v\in\mathcal{V}$, one has $\beta^{6}\leq\rho_{P}(N_u^{(\beta)},N_v^{(\beta)})\leq\beta$, with $\beta\in[0,1]$. A 70\% reduction in dependence per travelled edge, given by $\beta=0.3$, reduces dependence enough to produce a
    Gaussian silhouette for the probability mass function of $M^{(0.3)}$ in Fig.~\ref{fig:pmfM}. The silhouette nonetheless skews a little, assigning more mass to larger values than expected from a normal distribution. As $\beta$ becomes larger, take $\beta=0.7$ for example, this skewness is amplified. Also,
    the propagation part in $(\ref{eq:StoDynamics})$ becomes more potent than the innovation part,
    to the extent that, for $\beta=0.7$, events can often propagate to multiple vertices, resulting in a multimodal probability mass function.  Multimodality is then intensified in the case $\beta=0.9$, where clumps of masses at multiples of 50 clearly show that most components of $\boldsymbol{N}$ take the same value -- there is little place for the innovation part. The anfractuosities around those clumps
    are exhibits of the peculiar shape of $\mathcal{T}$. To support this reasoning, we provide in Fig.~\ref{fig:pmfCM} the depiction of the probability mass function of $C_{M^{(\beta)}}$, the secondary random variable under the principle that $M^{(\beta)}$ follows a compound Poisson distribution (Theorem~\ref{th:DistrM}), for $\beta\in\{0,0.3,0.7,0.9\}$. 
    \begin{figure}[H]
\centering
\begin{tikzpicture}
    \node[anchor = south west] at (0,0) {\includegraphics[width = 0.3\textwidth]{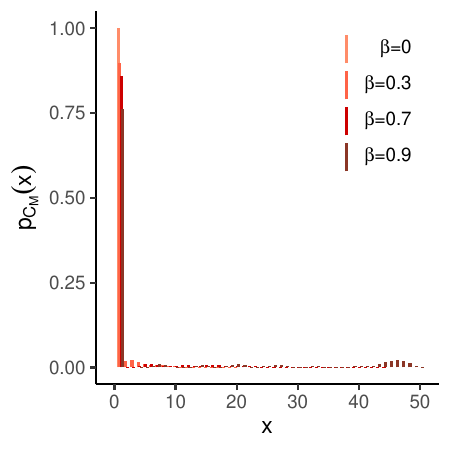}};
    \node[anchor = south west] at (5.75,0.5) {\includegraphics[width = 0.4\textwidth]{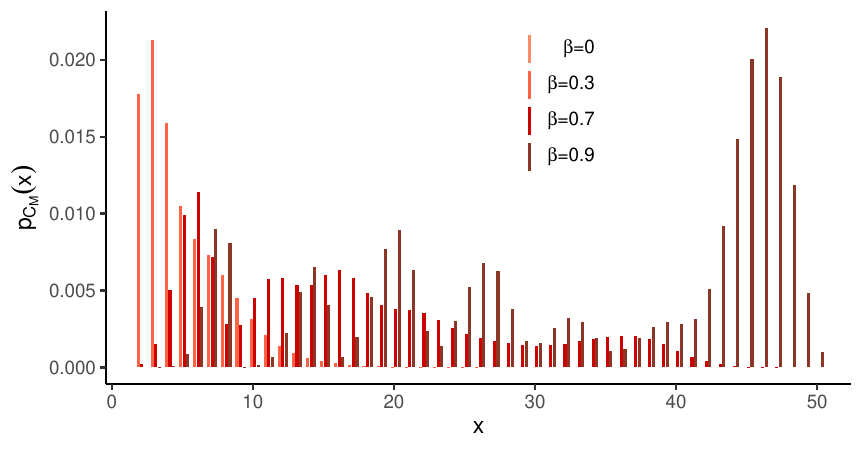}};

    \draw[Black!50, thick] (1.51,0.96) rectangle (5,1.2); 
    	\draw[Black!50, very thick, ->] (4.4,0.96) to [out=315, in=225] (7,1);
     \filldraw[draw = white, fill = white] (9.7,2.7) rectangle (11,4);
\end{tikzpicture}
\caption{Probability mass functions of $C_{M^{(\beta)}}$, $\beta\in\{0,0.3,0.7,0.9\}$.}
\label{fig:pmfCM}
\end{figure}
   Note how mass transports itself from one extreme, $x=1$, to another, $x=50$, as the dependence parameter grows. Obviously, $C_{M^{(0)}}$ is degenerate at 1 since $M^{(0)}$ follows a simple Poisson distribution by the independence of the components of $\boldsymbol{N}^{(0)}$. The random variable $C_{M^{(\beta)}}$ represents the number of vertices affected per batch of events, while $\lambda_M$ governs the number of such batches. The prominence of the propagation part thus indeed manifests itself through $C_{M^{(\beta)}}$. Table~\ref{tab:compoundM} illustrates how $\lambda_{M^{(\beta)}}$ decreases as $\mathrm{E}[C_{M^{(\beta)}}]$ increases -- batches occur less, but comprise more events --, one balancing out the other so that $\mathrm{E}[M^{(\beta)}]=50$ for every $\beta\in\{0,0.3,0.7,0.9\}$. 
\begin{table}[H]
\centering
\begin{tabular}{lrr}
\hline
&$\lambda_{M^{(\beta)}}$ &$\mathrm{E}[C_{M^{(\beta)}}]$\\
\hline
 $\beta=0$ & 50 & 1\\
 $\beta=0.3$ & 35.3& 1.416\\
 $\beta=0.7$ & 15.7& 3.185\\
 $\beta=0.9$ & 5.9& 8.475\\
\hline	
\end{tabular}
\caption{Primary and secondary distributions' expectations $\lambda_{M^{(\beta)}}$ and $\mathrm{E}[C_{M^{(\beta)}}]$, $\beta\in\{0,0.3,0.7,0.9\}$.} 
\label{tab:compoundM}
\end{table}

    According to Theorem~\ref{th:ConvexOrderM}, $\boldsymbol{N}^{(\beta)}$'s are supermodularly ordered following the values of $\beta$. The consequential convex ordering 
    \begin{equation}
M^{(0)}\preceq_{cx}M^{(0.3)}\preceq_{cx}M^{(0.7)}\preceq_{cx}M^{(0.9)},
        \label{eq:Mordering}
    \end{equation}
provided by Corollary~\ref{th:ConvexOrderM}, may be observed from the previously computed probability mass functions of $M$. In this vein, we depict in Fig.~\ref{fig:cdf-slM} their corresponding cumulative distribution functions and stop-loss functions. The latter, for a random variable $X$, is given by ${\pi}_X(x)=\mathrm{E}[\max(X-x,0)]$, $x\in\mathbb{R}_+$. 

	\begin{figure}[H]
	\centering
	\begin{subfigure}{0.45\textwidth}
	\centering
	\label{subfig:cdfMalphR}
	\includegraphics[width=\textwidth]{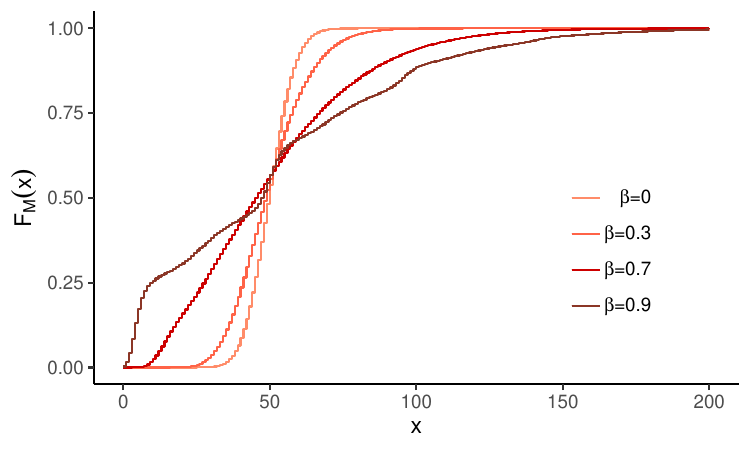} 
\end{subfigure}
\begin{subfigure}{0.45\textwidth}
\centering
\label{subfig:slMalphR}
\includegraphics[width=\textwidth]{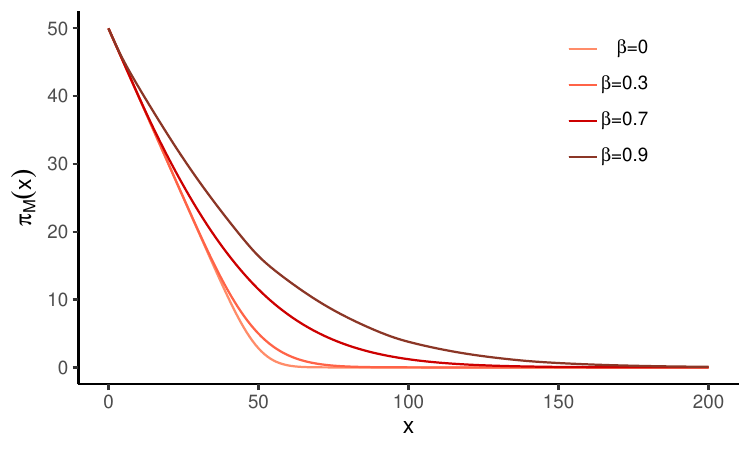} 
\end{subfigure}
\caption{Cumulative distribution functions and stop-loss functions of $M^{(\beta)}$, $\beta\in\{0,0.3,0.7,0.9\}$.}
\label{fig:cdf-slM}
\end{figure}

For $\beta=0,0.3,0.7$, the cumulative distribution functions' curves cross each other at a single point and, hence, directly confirm the ordering 
in (\ref{eq:Mordering}) by the Karlin-Novikoff sufficient criterion for convex ordering. However, the squiggly shape of the curve when $\beta=0.9$ prevents from applying it for the comparison with the cumulative distribution function when $\beta=0.7$. Nonetheless, the stacking of the stop-loss curves in the right-hand side plot shows that the ordering in (\ref{eq:Mordering}) indeed holds. As a result, variances and convex risk measures, such as the TVaR and the entropic risk measure, given by $\varphi^{\mathrm{ent}}_{\rho}(X) = \tfrac{1}{\rho}\mathrm{ln}\mathrm{E}[\mathrm{e}^{\rho X}]$, $\rho\in(0,\sup\{\rho_0>0: \mathrm{E}[\mathrm{e}^{\rho_0 X}]<\infty\})$, are ordered accordingly. This is exhibited in Table~\ref{tab:M-Riskmeasures}.
While VaR is not a convex risk measure, high-quantile VaRs also agree with the order of riskiness given in (\ref{eq:Mordering}). 

\begin{table}[H]
\centering
\begin{tabular}{lrrrrrr}
\hline
&$\mathrm{Var}(M^{(\beta)})$&$\mathrm{VaR}_{0.9}(M^{(\beta)})$&$\mathrm{TVaR}_{0.9}(M^{(\beta)})$&$\mathrm{VaR}_{0.99}(M^{(\beta)})$&$\mathrm{TVaR}_{0.99}(M^{(\beta)})$&$\varphi^{\mathrm{ent}}_{0.1}(M^{(\beta)})$\\
\hline
$\beta=0$ & 50.00 & 59 &62.76&67&69.82&52.59\\
$\beta=0.3$ & 157.84 & 67 & 74.60& 84& 90.85& 60.47\\
$\beta=0.7$ & 841.80 & 90& 109.95&  135& 152.30& 200.01\\
$\beta=0.9$ & 1762.60 & 105& 137.35& 175& 199.38&719.77\\
\hline	
\end{tabular}
\caption{Variances and risk measures of $M^{(\beta)}$, $\beta\in\{0,0.3,0.7,0.9\}$.} 
\label{tab:M-Riskmeasures}
\end{table}
 
We examine contributions to the TVaR under Euler's principle, defined in equation (\ref{eq:contribTVaR}). For brevity, we focus on the contributions of random variables associated to three vertices, namely $v\in\{1,16,30\}$. The values are provided in Table~\ref{tab:contribTVaR}, with the fraction of the corresponding $\mathrm{TVaR}(M^{(\beta)})$ placed in parentheses.   
Values in Table~\ref{tab:contribTVaR}, as well as all other values computed in the context of the present numerical example, were obtained through exact-computation methods, implemented in Algorithms~\ref{algo:fftM} and~\ref{algo:OGFEA}. We did not resort to any simulation.

\begin{table}[H]
\centering
\begin{tabular}{lrrr}
\hline
&$\mathcal{C}^{\mathrm{TVaR}}_{0.9}(N_{1}^{(\beta)}, M^{(\beta)})$ & $\mathcal{C}^{\mathrm{TVaR}}_{0.9}(N_{16}^{(\beta)}, M^{(\beta)})$ &$\mathcal{C}^{\mathrm{TVaR}}_{0.9}(N_{30}^{(\beta)}, M^{(\beta)})$\\
\hline
 $\beta=0$ &  1.26 (2.00\%)&  1.26 (2.00\%)&  1.26 (2.00\%)\\
 $\beta=0.3$ &  1.31 (1.75\%)& 1.86 (2.50\%)&  2.38 (3.19\%)\\
 $\beta=0.7$ &  1.88 (1.71\%)&  2.63 (2.40\%)&  2.75 (2.50\%)\\
 $\beta=0.9$ &  2.58 (1.88\%)&  2.95 (2.14\%)&  2.96 (2.15\%)\\
\hline	
\end{tabular}
\caption{Contributions to the TVaR under Euler' principle of $N_v^{(\beta)}$ to $M^{(\beta)}$, $v\in\{1,16,30\}$, $\beta\in\{0,0.3,0.7,0.9\}$.} 
\label{tab:contribTVaR}
\end{table}
For each $v\in\{1,16,30\}$, the contribution $\mathcal{C}^{\mathrm{TVaR}}_{0.9}(N_v^{(\beta)},M^{(\beta)})$ monotonically increases along the value of $\beta$; this is intuitive since there is indeed more to allocate, cf. values of $\mathrm{TVaR}_{0.9}(M^{(\beta)})$ in Table~\ref{tab:M-Riskmeasures}. 
As expected, every component of $\boldsymbol{N}^{(0)}$ contributes equally 2\%, a fiftieth, of $\mathrm{TVaR}(M^{(0)})$. Otherwise, for $\beta\in\{0.3,0.7,0.9\}$, contributions differ according to the position of the vertices on the tree, as supported by Corollary~\ref{th:ExpectedAllocs}.  We observe that the gap between the contributions grows, then shrinks, when $\beta$ increases, seemingly converging to a point where every vertex contributes, again, a fiftieth of the TVaR. Indeed, when dependence is very strong, all components of $\boldsymbol{N}$ tend to take the same value, and hence, allocation methods should not distinguish them. Furthermore, vertex~30, in the middle of a group of numerous vertices, seems to yield larger contributions than vertices~1 and 16. Although stochastic orderings comparing the strength of the dependence of a component of $\boldsymbol{N}$ to the aggregate random variable $M$ -- for instance, $(N_{1}^{(\beta)},M^{(\beta)}) \preceq_{sm} (N_{30}^{(\beta)},M^{(\beta)})$ for $\beta\in [0,1]$ -- would have to be established to prove that this ordering of contributions effectively holds.

Let us revisit our example with values of every component of the vector of $\boldsymbol{\alpha}$ as indicated in Table~\ref{tab:alphadistinct}; they are no longer identical.
Within every star-like substructure constituting $\mathcal{T}$ -- four small and one bigger -- random variables are strongly correlated, and for the edges connecting these substructures the dependence parameter is more moderate. Also, assume $\lambda = 0.1$, making events' occurrences rarer. In Fig.~\ref{fig:pmfMalphadiff}, we depict the probability mass function of $M$, whose values were again computed using Algorithm~\ref{algo:fftM}. 

\begin{minipage}[b]{0.45\textwidth}
\begin{table}[H]
    \centering
    \begin{tabular}{lr}
    \hline
        $v$ & $\alpha_{(\mathrm{pa}(v),v)}$ \\
    \hline
         $\{2,3,4,5,6,7,8\}$& 0.8\\
         $\{10,11,12,13,14,15\}$& 0.6\\
         $\{17,18,19,20,21,22\}$& 0.5\\
         $\{24,25,26,27,28,29\}$& 0.4\\
         $\{31,32,\ldots,49,50\}$& 0.9\\
         $\{9,16,23,30\}$& 0.1\\
    \hline
    \end{tabular}
    \caption{Dependence parameter between every vertex and its parent, assuming the root is 1 for notational convenience.}
    \label{tab:alphadistinct}
\end{table}
\end{minipage}
\hfill
\begin{minipage}[b]{0.5\textwidth}
\begin{figure}[H]
    \centering
    \includegraphics[width = 0.8\textwidth]{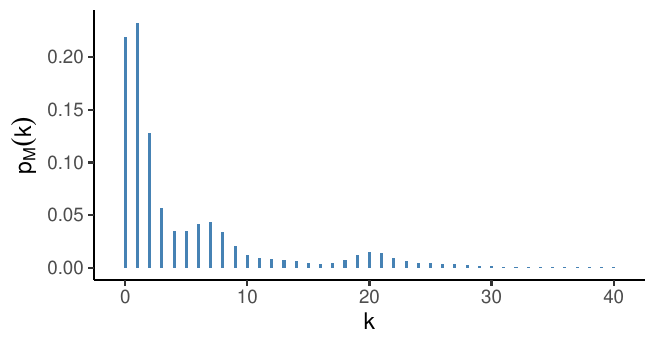}
    \caption{Probability mass function of $M$ given the vector of dependence parameters $\boldsymbol{\alpha}$ given in Table~\ref{tab:alphadistinct}.}
    \label{fig:pmfMalphadiff}
\end{figure}
\end{minipage}

Previously, the multimodality of the probability mass function followed from the similarity in all components of $\boldsymbol{N}$, clumps corresponding roughly to $N_v=0,1,2,\ldots$, for most $v\in\mathcal{V}$. In this case, the lower value of $\lambda$ makes the probability of $N_v\geq 2$ negligible; bumps of probability mass are somewhat due to the choice of dependence parameter combined with the tree's shape. Events occur rarely, but as they do, they happen in conjunction with others at vertices within the same star-like substructure. The 21 vertices of the bigger substructure and its edges' dependence parameters' value of 0.9, create the bump at $k=20$, such that $\mathrm{Pr}(M> 20) = 0.0626$ is not negligible.

\begin{figure}[t]
	\centering
	\begin{subfigure}{0.45\textwidth}
	\centering
	\label{subfig:alphdiffR}
    \begin{tikzpicture}
    \node[anchor = south west, outer sep = 0, inner sep = 0] at (-1,-1) {\includegraphics[width = \textwidth]{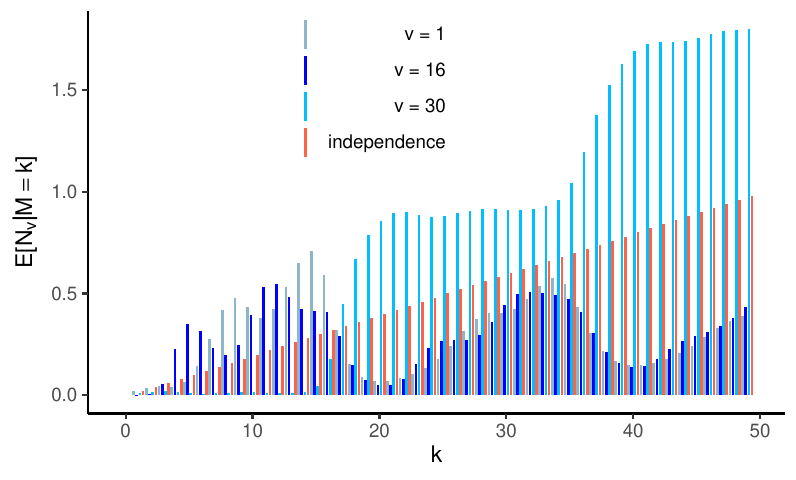}};
\end{tikzpicture}
\end{subfigure}
\begin{subfigure}{0.45\textwidth}
\centering
\begin{tikzpicture}
    \node[anchor = south west, outer sep = 0, inner sep = -1, rectangle] at (0,0) {\includegraphics[width = \textwidth]{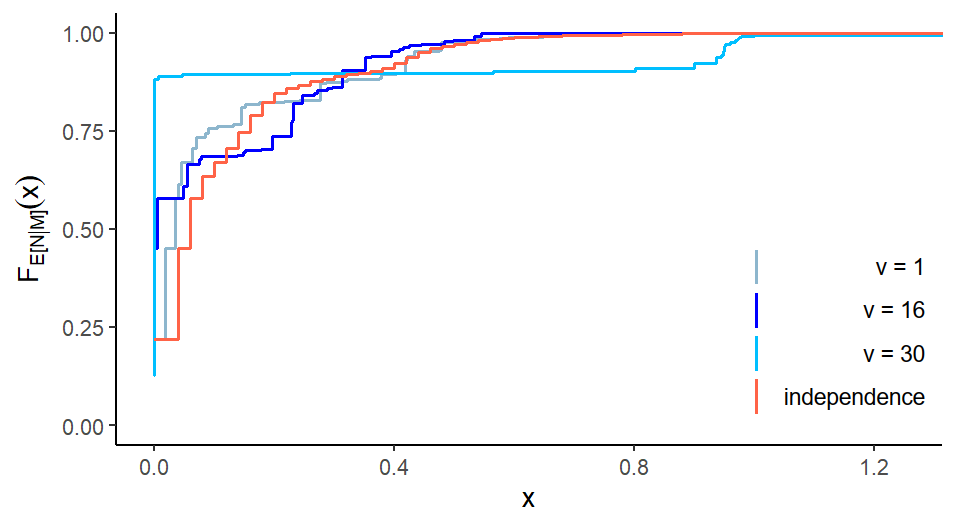}};
    \draw[Black!50, thick]   (1.6,3.7) -- (1.6,0.2) node[text = Black!50, outer sep = 0, inner sep = 0, below = 2pt]{\scriptsize $\mathrm{E}[N_v]$}; 
\end{tikzpicture}
\label{subfig:emptyR}
\end{subfigure}
\caption{(Left) Conditional mean risk sharing of $N_v$ to $M$, $v\in\{1,16,30\}$. (Right) Cumulative distribution function of the random variable $\mathrm{E}[N_v|M]$, $v\in \{1,16,30\}$.}
\label{fig:cmrs}
\end{figure}

We compute values of conditional mean risk sharing of $N_v$ to $M$, as described in (\ref{eq:cmrs}), for $v\in\{1,16,30\}$ by employing Algorithm~\ref{algo:OGFEA} and the previously computed probability mass function. 
Fig.~\ref{fig:cmrs} illustrates these values for $k\in\{0,1,\ldots,50\}$. 
In the case of a vertex within the most extensive star-like substructure, $v=30$ for our consideration, the part allocated by conditional mean risk sharing
undergoes significant increases at $k$ around multiples of 21. 
This reflects the effect of the more substantial dependence parameter for the edges within this substructure than in the rest of the tree. Moreover, because of the prominence of the substructure within the tree, this behavior reverberates into the conditional mean risk sharing of the other vertices as it decreases upon approaching these multiples. Such behavior is induced by the specific dependence scheme and choice of a small $\lambda$; this comes in contrast with often observed monotone increasingness in $k$ (see Proposition 4.3 of \cite{denuit2024conditional}) as would be the case for independent $N_v$, $v\in\mathcal{V}$. We furthermore depict in Figure~\ref{fig:cmrs} the cumulative distribution function of the random variable $\mathrm{E}[N_v|M]$, $v\in\{1,16,30\}$, providing an understanding of the allocation's dynamics from every component of $\boldsymbol{N}$'s individual point of view. 
While every of these random variables shares the same expectation $\mathrm{E}[\mathrm{E}[N_v|M]] = 0.1$, unaffected by the dependence scheme, their distribution 
depends both on the strength of the dependence between $N_v$ and the other components of $\boldsymbol{N}$ and the strength of the dependence between these other components. For instance, the cumulative distribution function of $\mathrm{E}[N_{16}|M]$ dominates that of $\mathrm{E}[N_v|M]$ under independence for values of $x>0.36$: cases in which a larger $M$ has to be allocated see vertices in the bigger star-like substructure absorb more of the allocation, such that $\mathrm{E}[N_{16}|M]$ rarely takes larger values. 
Such behavior is influenced by the vertex's relative position within the tree and the chosen dependence parameters $\boldsymbol{\alpha}$ and mean parameter $\lambda$.

\section{Conclusion}
\label{sect:Conclusion}
We have introduced a new family of tree-structured MRFs with fixed Poisson marginal distributions. For previous families of MRFs, a burden comes from their constructions' resting on conditional distributions, rendering recondite marginal and joint distributions. The construction proposed in Theorem~\ref{th:StoDynamics}, on the contrary, allows to derive analytic expressions for its joint probability mass function and joint probability generating function. Studying covariances, we have observed that positive dependence
ties components of MRFs from the proposed family. We have been able to establish stochastic orderings between the MRFs. We studied the distribution of the sum of the MRF's components, derived its OGFEA, and established stochastic orderings for it as well.

 Finally, while addressed in Sections~\ref{subsect:StochasticOrderingN} and~\ref{sect:Sum}, the effect of the tree's topology on the dependence relations deserves to be investigated further: it is, after all, the main feature of a tree-structured MRF. Remark~\ref{th:NoSupermodularNShape} indicates that no supermodular ordering may be established between members of $\mathbb{MPMRF}$ defined on trees of different topologies, the sum of the components may nonetheless be comparable according to the convex order. This invites questions such as : "Which of two different underlying trees yields a riskier distribution of $M$?" or "Which of two vertices hosts the random variable having the most influence on the MRF's aggregate dynamics?". 
We believe such questions, inviting the notions of ordering of trees and vertices' centrality, have not yet been addressed for any family of MRF. Thus opening what we believe are new grounds of analysis, we preferred to devote a full paper on the matter. In \cite{cote2024mrfshape}, we show that, in fact, these two questions are both sides of the same coin and provide an answer through the designing of a new partial order of tree topologies.

\section*{Acknowledgment}
This work was partially supported by the Natural Sciences and Engineering Research Council of Canada (Cossette: 04273; Marceau: 05605; Côté), by the Fonds de recherche du Québec -- Nature et technologie (Côté: 335878), and by the Chaire en actuariat de l'Université Laval (Cossette, Côté, Marceau). We would like to thank the editor and anonymous referees for the comments that helped to improve the quality of this paper.
				
				%\bibliographystyle{apalike}

				%\bibliography{bibPoissonAR1}
				
				%\newpage
				
				\appendix

				\section{The binomial thinning operator}
				\label{sect:BinomThinnOp}
				
				\begin{theorem}[Properties of binomial thinning] The binomial thinning operator, denoted by $\circ$, and whose definition is given in (\ref{eq:BinomThinning}), has the following properties:
					\begin{enumerate} [nosep]
						\item[\rm(a)] Expectation: $\mathrm{E}[\alpha\circ X] = \alpha \mathrm{E}[X]$;
						\item[\rm(b)] Variance: $\mathrm{Var}(\alpha \circ X) = \alpha^2\mathrm{Var}(X) + \alpha(1-\alpha)\mathrm{E}[X]$;
						\item[\rm(c)] Covariance: $\mathrm{Cov}( X,\alpha\circ X) = \alpha\mathrm{Var}(X)$;
						\item[\rm(d)] 	\label{th:PropertyBinThinOp-PGF}
						Probability generating function: $\mathcal{P}_{\alpha\circ X}(t) = \mathcal{P}_X(1-\alpha + \alpha t)$;
						\item[\rm(e)] Distributivity in distribution: $\alpha\circ(X_1+X_2) \stackrel{d}{=} \alpha \circ X_1 + \alpha \circ X_2$;
						\item[\rm(f)] Associativity in distribution: $\alpha_1\circ(\alpha_2\circ X)\stackrel{d}{=}(\alpha_1\alpha_2)\circ X \stackrel{d}{=} \alpha_2\circ(\alpha_1\circ X)$;
						\item[\rm(g)] Stochastic dominance: $(\alpha \circ X | X = \theta_1) \preceq_{st} (\alpha \circ X | X = \theta_2)$, for $\theta_1, \theta_2 \in \mathbb{N}$, $\theta_1 \leq \theta_2$;
					\end{enumerate}
					with $X,X_1,X_2$ as random variables and $\alpha,\alpha_1,\alpha_2 \in [0,1]$.
					\label{th:PropertyBinThinOp}
				\end{theorem}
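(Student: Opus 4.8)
The plan is to reduce every property to a single observation: conditionally on $X=n$, the variable $\alpha\circ X$ is $\mathrm{Binomial}(n,\alpha)$ distributed, since $\{I_i^{(\alpha)}\}$ is an i.i.d.\ Bernoulli$(\alpha)$ sequence independent of $X$. From this, (a), (b) and (c) follow from the tower rule together with the total variance and total covariance decompositions. For (a), $\mathrm{E}[\alpha\circ X]=\mathrm{E}\big[\mathrm{E}[\alpha\circ X\mid X]\big]=\mathrm{E}[\alpha X]$. For (b), using $\mathrm{Var}(\mathrm{Binomial}(n,\alpha))=n\alpha(1-\alpha)$ one gets $\mathrm{Var}(\alpha\circ X)=\mathrm{E}[\alpha(1-\alpha)X]+\mathrm{Var}(\alpha X)=\alpha(1-\alpha)\mathrm{E}[X]+\alpha^2\mathrm{Var}(X)$. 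For (c), $\mathrm{Cov}(X,\alpha\circ X)=\mathrm{E}[X\,\mathrm{E}[\alpha\circ X\mid X]]-\mathrm{E}[X]\mathrm{E}[\alpha\circ X]=\alpha\mathrm{E}[X^2]-\alpha\mathrm{E}[X]^2=\alpha\mathrm{Var}(X)$. Each is a one-line computation once the conditional law is available.

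Property (d) is the linchpin for the rest. Conditioning on $X$ and using independence of the $I_i^{(\alpha)}$'s together with $\mathrm{E}[t^{I_1^{(\alpha)}}]=1-\alpha+\alpha t$, we get $\mathcal{P}_{\alpha\circ X}(t)=\mathrm{E}\big[\mathrm{E}[t^{\sum_{i=1}^X I_i^{(\alpha)}}\mid X]\big]=\mathrm{E}\big[(1-\alpha+\alpha t)^X\big]=\mathcal{P}_X(1-\alpha+\alpha t)$. Then (f) is immediate by iterating (d): $\mathcal{P}_{\alpha_1\circ(\alpha_2\circ X)}(t)=\mathcal{P}_{\alpha_2\circ X}(1-\alpha_1+\alpha_1 t)=\mathcal{P}_X\big(1-\alpha_2+\alpha_2(1-\alpha_1+\alpha_1 t)\big)=\mathcal{P}_X(1-\alpha_1\alpha_2+\alpha_1\alpha_2 t)=\mathcal{P}_{(\alpha_1\alpha_2)\circ X}(t)$, and the remaining equality in (f) follows since $\alpha_1\alpha_2=\alpha_2\alpha_1$; equality of probability generating functions on a neighbourhood of the origin yields equality in distribution for $\mathbb{N}$-valued variables.

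For (e) I would again condition, this time on the pair $(X_1,X_2)$: given $(X_1,X_2)=(n_1,n_2)$, the left-hand side is $\mathrm{Binomial}(n_1+n_2,\alpha)$, whereas the right-hand side, with the two thinnings carried out using two independent Bernoulli arrays, is the independent convolution of $\mathrm{Binomial}(n_1,\alpha)$ and $\mathrm{Binomial}(n_2,\alpha)$, and these coincide. Since the conditional laws agree for every $(n_1,n_2)$, the unconditional laws agree, which is the claimed distributional identity $\alpha\circ(X_1+X_2)\stackrel{d}{=}\alpha\circ X_1+\alpha\circ X_2$. Finally, for (g), conditioning on $X$ reduces the statement to $\mathrm{Binomial}(\theta_1,\alpha)\preceq_{st}\mathrm{Binomial}(\theta_2,\alpha)$ for $\theta_1\le\theta_2$, which I would prove by the standard coupling: write a $\mathrm{Binomial}(\theta_2,\alpha)$ variable as a sum of $\theta_2$ i.i.d.\ Bernoulli$(\alpha)$'s and keep only $\theta_1$ of them, so the $\theta_1$-sum is pointwise dominated, hence stochastically dominated.

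No step here is a genuine obstacle; the only points demanding care are the bookkeeping in (e) — making explicit that the identity is purely distributional and relies on independent Bernoulli arrays on the right — and invoking the correct uniqueness statement (equality of probability generating functions near the origin implies equality in distribution) when converting the pgf computations of (d)--(f) into distributional conclusions.
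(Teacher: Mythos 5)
Your proof is correct and follows essentially the same route as the paper: everything rests on the observation that $(\alpha\circ X\mid X=n)$ is $\mathrm{Binomial}(n,\alpha)$, with (d) obtained by conditioning and (f) by iterating (d), exactly as in the paper (which merely states that (a)--(d) follow from the construction). The only divergence is in (e), where you condition on $(X_1,X_2)$ and match conditional binomial laws while the paper evaluates the bivariate pgf on the diagonal; both arguments rely on the same implicit convention of independent Bernoulli arrays and are equally valid, yours being slightly more explicit about it.
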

				\begin{proof}
					Properties (a) through (d) are easily derived from the construction in (\ref{eq:BinomThinning}). Using Property (d), one can verify Property (e) as follows: 
					\begin{align*}
				\mathcal{P}_{\alpha\circ(X_1+X_2)}(t) = \mathcal{P}_{X_1+X_2}(1-\alpha + \alpha t) 
                        = \mathcal{P}_{X_1,X_2}(1-\alpha + \alpha t, 1-\alpha + \alpha t)
                        = \mathcal{P}_{\alpha\circ X_1, \alpha\circ X_2}(t,t)=  \mathcal{P}_{\alpha \circ X_1 + \alpha \circ X_2}(t).
					\end{align*}
					For Property (f), we have 
					\begin{align*}
				\mathcal{P}_{\alpha_1\circ(\alpha_2\circ X)}(t) = \mathcal{P}_{\alpha_2\circ X}(1-\alpha_1 + \alpha_1 t)
						= \mathcal{P}_X(1-\alpha_1 + \alpha_1(1-\alpha_2 + \alpha_2 t))
						 = \mathcal{P}_X(1-\alpha_1\alpha_2 + \alpha_1\alpha_2 t)
						= \mathcal{P}_{(\alpha_1\alpha_2)\circ X}(t). 
					\end{align*}
					Property (g) follows from the fact that $(\alpha \circ X | X = \theta)$ is Binomial distributed with number of trials $\theta$ and success probability $\alpha$, $\theta \in \mathbb{N}$, $\alpha \in [0,1]$. 
				\end{proof}
				
				\section{Additional algorithm}
				\label{sect:AdditionalAlgo}
				
				We provide an algorithm to change the root in accordance to which a weighted adjacency matrix is constructed. 
				
				\begin{algorithm}[H]
					\label{algo:changematrix}
					\caption{Changing root in accordance to with an adjacency matrix is in topological order.} 
					\KwIn{Weighted adjacency matrix $\boldsymbol{A} = (A_{ij})_{i\times j\in \mathcal{V}\times \mathcal{V}}$ constructed in topological order according to $r$.}
					\KwOut{Weighted adjacency matrix $\boldsymbol{A}^{\prime} = (A^{\prime}_{ij})_{i\times j\in \mathcal{V}\times \mathcal{V}}$ constructed in topological order according to $r^{\prime}$.}
					 \For{$k = d,d-1,\ldots,2$} {
						 Compute $\pi_k = \inf\{j:A_{kj} >0\}$.\\
					}
					 Set $w = r^{\prime}$.\\
					 \While{$w\neq r$}{
						 Add $w$ to a vector $\mathrm{\mathbf{w}}$.\\
						 Overwrite $w$ by $\pi_{w}$.\\
					}
					 \For{$i \in \mathrm{\mathbf{w}}$}{
						 \For{$j\in i-1,i-2,\ldots,1$}{
							 Switch rows $j$ and $j+1$ of $\boldsymbol{A}$.\\
							 Switch columns $j$ and $j+1$ of $\boldsymbol{A}$.\\}}
					 Return $\boldsymbol{A}$.\\
				\end{algorithm}

\bibliographystyle{apalike} 
\bibliography{bibPoissonAR1}

					\end{document}